\newtheorem{theorem}{Theorem}[section]
\newtheorem{lemma}[theorem]{Lemma}
\newtheorem{definition}[theorem]{Definition}
\newtheorem{property}{Property}
{\theoremstyle{remark} \newtheorem{remark}{Remark}}
\newtheorem{corollary}[theorem]{Corollary}
\newtheorem{example}{Example}
\newtheorem{conjecture}[theorem]{Conjecture}
\newcommand\bbR{\mathbb{R}}
\newcommand\bbN{\mathbb{N}}
\newcommand\bxi{\boldsymbol{\xi}}
\newcommand\bx{\boldsymbol{x}}
\newcommand\bv{\boldsymbol{v}}
\newcommand\bq{\boldsymbol{q}}
\newcommand\bu{\boldsymbol{u}}
\newcommand\br{\boldsymbol{r}}
\newcommand\bR{\boldsymbol{R}}
\newcommand\bA{\boldsymbol{A}}
\newcommand\bB{\boldsymbol{B}}
\newcommand\dd{\,\mathrm{d}}
\newcommand\He{\mathit{He}}
\newcommand\HeT{\mathit{He}^{[\Theta]}}
\newcommand\Het{\mathit{He}^{[\theta]}}
\newcommand\bw{\boldsymbol{w}}
\newcommand\mH{\mathcal{H}}
\newcommand\mHT{\mathcal{H}^{[\Theta]}}
\newcommand\mG{\mathcal{G}}
\newcommand\weight{w^{[\Theta]}}
\newcommand\cS{{\cal{S}}}
\newcommand\mN{{\mathcal N}_{D}}
\newcommand\rC[2]{{\rm{C}}_{{#1},{#2}}}
\newcommand\bX{\boldsymbol{X}}
\newcommand\rank{\mathrm{rank}}
\newcommand\rotation{^*}
\newcommand\thA{\tilde{\hat{A}}}
\newcommand\tbAM{\tilde{\bA}_M}
\newcommand\tbAdotM{\tilde{\bA}'_M}
\newcommand\NRxx{NR$xx$}
\newcommand\pd[2]{\dfrac{\partial #1}{\partial #2}}
\newcommand\od[2]{\dfrac{\dd #1}{\dd #2}}
\newcommand\odd[2]{\dfrac{\mathrm{D} #1}{\mathrm{D} #2}}
\numberwithin{equation}{section}
\title{Globally Hyperbolic Moment System by Generalized Hermite Expansion}
\author{Yuwei Fan\thanks{School of Mathematical Sciences, Peking
    University, Beijing, China, email: {\tt ywfan@pku.edu.cn}.},~~ Ruo
  Li\thanks{CAPT, LMAM \& School of Mathematical Sciences, Peking
    University, Beijing, China, email: {\tt rli@math.pku.edu.cn}.}}
\begin{document}
\maketitle
\begin{abstract}
  In a recent paper \cite{Grad13toR13}, it was revealed that a
  modified 13-moment system taking intrinsic heat fluxes as variables,
  instead of the heat fluxes along the coordinate vectors which is
  adopted in the classical Grad 13-moment system, attains some
  additional advantages than the classical Grad 13-moment system,
  particularly including that the equilibrium is turned to be the
  interior point of its hyperbolicity region. The modified 13-moment
  system was actually derived from the generalized Hermite expansion
  of the distribution function, where the anisotropy of Hermite
  expansion is specified by the full temperature tensor. We extend the
  method therein in this paper to high order of generalized Hermite
  expansion to derive arbitrary order moment systems, and proposed a
  globally hyperbolic regularization to achieve locally well-posedness
  similar to the method in \cite{Fan_new}. Furthermore, the structure
  of the eigen-system of the coefficient matrix and all characteristic
  waves are fully clarified. The obtained systems provide a systematic
  class of hydrodynamic models as the refined version of Euler
  equations, which is gradually approaching the Boltzmann equation
  with increasing order of the expansion.

  \vspace*{4mm}
  \noindent {\bf Keywords:} Hydrodynamic Model; Moment System;
  Global Hyperbolicity; Regularization; \NRxx;  
\end{abstract}

\section{Introduction}
In 1949 \cite{Grad}, Grad proposed the moment expansion method for the
Boltzmann equation to derive the macroscopic hydrodynamic systems, as
the refined models beyond the Euler equations and the
Navier-Stokes-Fourier (NSF) equations. Among the models derived
therein, Grad's 13-moment system is one of the most well-known models.
This system was derived by expanding the distribution function into
isotropic Hermite series \cite{Grad1949note}. Soon after the model
proposed, it was found that this model is problematic in a number of
aspects, one fatal point of which was that Grad's 13-moment system is
not globally hyperbolic. Actually, the hyperbolicity can
only be obtained near the equilibrium \cite{Muller} even for 1D
flows. The loss of hyperbolicity directly breaks the local
well-posedness of the system, and thus the capability of this model is
strictly limited. Historically, Grad's moment system has been included
in the textbooks for decades while there are very seldom reports on
its success, in spite of the elegant mathematical formation of the
system. Aiming on improved well-posedness of Grad's moment system,
different efforts has been made both for the 13 moment system and
higher order moment system. The approaches may be divided into two
folds, including proposing certain dissipation terms derived from the
collision term and considering different closure to extent the
hyperbolicity region. We refer the regularized Burnett equations
\cite{Jin}, regularized 13-moment equations \cite{Struchtrup2003,
  Struchtrup2005}, the Pearson-13-moment equations
\cite{Torrilhon2010}, et. al. These methods may alleviate the problem
of hyperbolicity to some extent \cite{Slemrod, TorrilhonRGD,
  Torrilhon2010}. 

In a recent study \cite{Grad13toR13}, the authors pointed out that the
thermodynamic equilibrium is always on the boundary of the
hyperbolicity region of Grad's 13-moment system. More precisely, it
was proved therein that if an arbitrary small perturbation is applied
to the phase density from the equilibrium, the hyperbolicity may break
down. This reveals that there does not exist a neighbourhood of the
equilibrium such that all the states in this neighbourhood lead to the
hyperbolicity of Grad's 13-moment system. Without the hyperbolicity in
a neighbourhood of the equilibrium, the well-posedness of the Grad's
13-moment system is not guaranteed even the phase density is extremely
close to the equilibrium. This severe drawback may be the possible
reason why there are hardly any positive evidences for the Grad's
13-moment system in the last decades. Noticing that the anisotropy
plays an essential role in breaking down the hyperbolicity, it was
then proposed in \cite{Grad13toR13} a new modified 13-moment model
such that the equilibrium state lies in the interior of the
hyperbolicity region, even without any hyperbolicity regularization
techniques used such as in \cite{Fan_new}. This modified system is
derived by a generalized Hermite expansion instead of the isotropic
Hermite expansion in Grad's method, where the anisotropy is specified
by the full temperature tensor. It was found that once the generalized
Hermite expansion is adopted, the equilibrium is turned into an
interior point of the hyperbolicity region of the full 3D system with
13 moments. It is indicated that the generalized Hermite expansion may
be an essential point in further development of high order moment
method.

As a macroscopic hydrodynamic model derived from the Boltzmann
equation, a necessary requirement is that the model derived has to be
invariant under Galilean transformation. To achieve this point, Grad
in \cite{Grad} adopted the Ikenberry type polynomials as the weight
functions to retrieve the macroscopic quantities from the distribution
function. Actually, the Grad's 13-moment system is the first model
obtained beyond classical hydrodynamic system following this
way. Since all the components of the temperature tensor are included
in the variables of the macroscopic model, it looks inappropriate
insisting to expand the distribution function using the isotropic
Hermite polynomials. To study higher order moment method than the
13-moment system proposed in \cite{Grad13toR13}, we are motivated to
adopt expansion of the distribution function using generalized Hermite
polynomials. Particularly, we will propose a regularization to the
derived systems following the method in \cite{Fan_new} to achieve the
globally hyperbolicity, thus the local well-posedness of the
regularized system may be attained.

The rest part of this paper is arranged as follows. In section 2, we
first derived the moment system based on generalized Hermite expansion
to arbitrary order for any dimensional cases. In section 3, the
coefficient matrix of the obtained moment system is studied in detail,
and we point out that the obtained moment system is lack of global
hyperbolicity. In section 4, we propose a globally hyperbolic
regularization for the moment system obtained. The eigenvalues and the
eigenvectors of the regularized system are explicitly calculated, and
the global hyperbolicity of the regularized system is rigorously
proved. In section 5, the Riemann problem is investigated. All
characteristic waves are either genuinely nonlinear or linearly
degenerate, and some properties of the rarefaction waves, the contact
waves and the shock waves are investigated. In the appendix, we
present the properties and formulas of generalized Hermite polynomials
and the proof of some technical results in detail.

Before we start the main text, a conjecture on the distribution of the
zeros of Hermite polynomials is presented as below at first: we
conjecture that there are no same non-zero zeros of $\He_n(x)$ and
$\He_m(x)$ for all $m,n\in\bbN$ and $m\neq n$. Precisely,
\begin{conjecture}\label{con:He}
  for any $m,n\in\bbN$ and $m\neq n$, there are no common non-zero
  zeros of $\He_n(x)$ and $\He_m(x)$, -i.e.  $\nexists x\in\bbR
  \backslash \{ 0 \}$, such that $\He_n(x)=\He_m(x)=0$.
\end{conjecture}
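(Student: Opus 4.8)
The plan is to show that a hypothetical common nonzero zero $x_0$ of $\He_m$ and $\He_n$ (say $2\le m<n$) is forced into a contradiction by the three-term recurrence $\He_{k+1}(x)=x\He_k(x)-k\He_{k-1}(x)$ together with $\He_0\equiv 1$. First I would record two routine facts. (i) No two consecutive Hermite polynomials share a zero: if $\He_k(x_0)=\He_{k-1}(x_0)=0$, then downward use of the recurrence gives $\He_{k-2}(x_0)=\cdots=\He_0(x_0)=0$, contradicting $\He_0\equiv1$; equivalently, the Christoffel--Darboux identity $\sum_{k=0}^{n-1}\He_k(x)^2/k!=(\He_n'(x)\He_{n-1}(x)-\He_{n-1}'(x)\He_n(x))/(n-1)!$ is a strictly positive function. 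Hence every $\He_k$ has simple zeros, and $\He_k(x_0)=0$ implies $\He_{k\pm1}(x_0)\ne0$. (ii) Since $\He_n'=n\He_{n-1}$, one has $\He_m=\tfrac{m!}{n!}\He_n^{(n-m)}$, so the statement is equivalent to: $\He_n$ and its $j$-th derivative have no common nonzero zero for $1\le j\le n$.

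The small-gap cases $n-m=1,2$ are then immediate from (i): if $\He_m(x_0)=\He_{m+2}(x_0)=0$, the recurrence gives $\He_{m+2}(x_0)=x_0\He_{m+1}(x_0)-(m+1)\He_m(x_0)=x_0\He_{m+1}(x_0)$ with $\He_{m+1}(x_0)\ne0$, forcing $x_0=0$. For general $j=n-m$, set $y_k=\He_k(x_0)$; from $y_m=0$ and $y_{m+1}=-m\He_{m-1}(x_0)$ one obtains $y_{m+k}=-m\He_{m-1}(x_0)\,q_k(x_0)$, where $q_k$ is the monic polynomial of degree $k-1$ defined by $q_0=0$, $q_1=1$ and $q_{k+1}=xq_k-(m+k)q_{k-1}$ --- the associated Hermite polynomials at level $m$. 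Since $\He_{m-1}(x_0)\ne0$, the conjecture is equivalent to: no nonzero zero of $\He_m$ is a zero of any $q_k$ with $k\ge1$; in matrix language, the Hermite Jacobi matrix $\mathrm{tridiag}(\sqrt1,\sqrt2,\dots)$ and its leading $m\times m$ principal block share no nonzero eigenvalue (the eigenvalue $0$ being shared exactly when $m$ and $n$ are both odd, which is the excluded case $x_0=0$). When only one index is removed, strict interlacing of the eigenvalues of an unreduced Jacobi matrix and its principal submatrix settles the claim; for a larger gap no such soft argument is available.

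That last point is where I expect the real difficulty, and I see two concrete routes to close it. One is to establish a closed-form product formula for the resultant $\mathrm{Res}(\He_m,\He_n)$ --- patterned on the known consecutive case, where $|\mathrm{Res}(\He_n,\He_{n+1})|=\prod_{k=1}^{n}k^k$ --- and to verify that its only vanishing factors are those accounting for the shared root $x=0$; then $\gcd(\He_m,\He_n)$ divides $x$ and we are done. The other is an arithmetic/Newton-polygon argument on the integer coefficients of $\He_n$, in the spirit of Schur's irreducibility results for Hermite and Laguerre polynomials, aiming to bound $\deg\gcd(\He_m,\He_n)$ by $1$. Either way the obstacle is genuine: two families of orthogonal polynomials with different weights (here $\He_m$ and the $q_k$) are not prevented from sharing zeros by any general principle, so the argument must exploit the specific arithmetic of the recurrence coefficients $k$, and I do not presently see how to carry this through uniformly for all gaps $j$ --- which is precisely why the statement is posed only as a conjecture.
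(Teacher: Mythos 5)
The statement you were asked about is not proved in the paper at all: it is posed there explicitly as a conjecture, supported only by the classical fact that consecutive Hermite polynomials share no zeros and by a computer-algebra verification for all $m,n\le 1000$. So there is no ``paper proof'' to measure you against, and your own submission is, candidly, not a proof either --- you say so yourself in the last sentence. Judged on its own terms, what you do establish is correct: the non-existence of common zeros for consecutive indices (via downward use of the recurrence, or via the positivity of the Christoffel--Darboux kernel, which also gives simplicity of the zeros); the identity $\He_m=\tfrac{m!}{n!}\He_n^{(n-m)}$ reformulating the conjecture as a statement about $\He_n$ and its derivatives; the gap-two case $\He_m(x_0)=\He_{m+2}(x_0)=0\Rightarrow x_0=0$; and the reduction of the general gap $j=n-m$ to the question of whether a nonzero zero of $\He_m$ can annihilate one of the associated polynomials $q_k$ at level $m$, equivalently whether the Hermite Jacobi matrix and its leading $m\times m$ principal block can share a nonzero eigenvalue. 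Your diagnosis of why this is hard --- Cauchy interlacing only excludes shared eigenvalues when a single row and column are deleted, and $\He_m$ and the $q_k$ are orthogonal with respect to different measures, so no soft orthogonality principle applies --- is accurate, and the two escape routes you sketch (a closed-form resultant $\mathrm{Res}(\He_m,\He_n)$ generalizing $|\mathrm{Res}(\He_n,\He_{n+1})|=\prod_{k=1}^n k^k$, or a Schur-type arithmetic bound on $\deg\gcd(\He_m,\He_n)$) are the natural ones, but neither is carried out. In short: you have correctly identified that the statement is open, proved more of it than the paper records (the $|n-m|=2$ case and the Jacobi-matrix reformulation), and not closed the gap that makes it a conjecture. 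That is an acceptable outcome here, but be explicit at the outset that you are offering partial results and a reduction rather than a proof.
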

For the detailed description of this conjecture, please see the
appendix of this paper.


\section{Derivation of Moment System}
\label{sec:expansion}
We consider the Boltzmann equation for kinetic theory of gases as
\begin{equation}\label{eq:Boltzmann}
    \pd{f}{t}+\sum_{d=1}^D\xi_d\pd{f}{x_d}=Q(f,f),
\end{equation}
where $f(t,\bx,\bxi)$ is the distribution function and
$(t,\bx,\bxi)\in\bbR^+\times\bbR^D\times\bbR^D$, $\bx = (x_1, \cdots,
x_D)$, $\bxi = (\xi_1, \cdots, \xi_D)$, $D$ is the dimension and
$Q(f,f)$ is the collision term. The macroscopic density $\rho$, mean flow
velocity $\bu = (u_1, \cdots, u_D)$, pressure tensor $p_{ij}$ and heat
flux $\bq = (q_1, \cdots, q_D)$ of the gas along the axis of
coordinates are related with the distribution function by, for
$i,j=1,\cdots,D$,
\begin{align*}
\rho &= \int_{\bbR^D} f \dd\bxi,    &
\rho u_i &= \int_{\bbR^D} \xi_i f \dd\bxi, \\
p_{ij} &= \int_{\bbR^D} (\xi_i - u_i)(\xi_j - u_j) f \dd\bxi, &
q_i &= \frac{1}{2} \int_{\bbR^D} |\bxi - \bu|^2 (\xi_i - u_i) f
\dd\bxi.
\end{align*}
The original collision term of the Boltzmann equation is quite complex,
and in the present work we only consider the BGK-type collision term
as
\begin{equation}
    Q(f,f)=\nu(\mG-f),
\end{equation}
where $\nu$ is the collision frequency, and $\mG$ is a certain
distribution function depending on the collision model under
consideration. For BGK collision model \cite{BGK},
\begin{equation}
    \mG=f_M=\frac{\rho}{(2\pi\theta)^{D/2}}\exp\left(
    -\frac{|\bxi-\bu|^2}{2\theta} \right),
\end{equation}
where $\theta=\displaystyle\sum_{d=1}^D\dfrac{p_{dd}}{D\rho}$ is the macroscopic
temperature, and for ES-BGK collision model \cite{Holway},
\begin{equation}
    \mG=\frac{\rho}{\sqrt{\det{(2\pi\Lambda)}}}\exp\Big(
    -\frac{1}{2}(\bxi-\bu)^T\Lambda^{-1}(\bxi-\bu) \Big),
\end{equation}
where $\Lambda_{ij}=bp_{ij}/\rho+(1-b)\theta\delta_{ij}$,
$b=1-\dfrac{1}{\Pr}\in[-1/2,1]$, and $\Pr$ is the Prandtl number which
is approximately equal to $2/3$ for a monatomic gas. In particular, if
$\Pr=1$, the ES-BGK model degrades into the BGK model.

In 1949, Grad \cite{Grad} made an Hermite expansion for distribution
function $f$ and obtained the well-known Grad 20 and Grad 13 moment
equations. Cai and Li \cite{NRxx} extended it to more general case and
obtained a class moment equations of arbitrary order, which is called
{\NRxx} method. Below we inherit the basic approach of \NRxx, adopt a
class of generalized Hermite polynomials, and make a generalized
Hermite expansion to derive a class of anisotropic moment system.

Consider the weight function 
\begin{equation}\label{eq:weight}
  \weight(\bv)=\frac{1}{\sqrt{\det{(2\pi\Theta)}}} \exp \left(-\frac{1}{2} 
    \bv^T\Theta^{-1}\bv\right),
\end{equation}
where $\bv = (v_1, \cdots, v_D) \in\bbR^D$ and
$\Theta=(\theta_{ij})\in\bbR^{D\times D}$ is a symmetrical positive
definite matrix. The generalized Hermite polynomials are defined as
\begin{equation}
    \He^{[\Theta]}_{\alpha}(\bv)=
    \frac{(-1)^{|\alpha|}}{\weight(\bv)}\pd{^{\alpha}}{\bv^{\alpha}}\weight(\bv),
    \quad \alpha\in\bbN^D,
\end{equation}
where $\alpha=(\alpha_1,\cdots,\alpha_D)$ is a $D$-dimensional
multi-index,
$\pd{^{\alpha}}{\bv^{\alpha}}=\dfrac{\partial^{|\alpha|}}{\partial
  v_1^{\alpha_1}\cdots \partial v_D^{\alpha_D}}$, and $|\alpha| =
\alpha_1 + \cdots + \alpha_D$. The difference between the generalized
Hermite polynomials and the Hermite polynomials is the anisotropy in
the weight function, where the matrix $\Theta$ in the generalized
Hermite polynomials is a scalar in the Hermite polynomials. Then we
define the generalized Hermite functions as
\begin{equation}\label{eq:basis}
    \mHT_{\alpha}(\bv)=\weight(\bv)\He^{[\Theta]}_{\alpha}(\bv),
    \quad \alpha\in\bbN^D.
\end{equation}
Clearly, the weight function $\weight(\bv)$ is
normalized that
\[ \int_{\bbR^D}\weight(\bv)\dd\bv=1. \] If any component of $\alpha$
is negative, we take $(\cdot)_{\alpha}$ to be zero for convenience. If
$D=1$ and $\Theta=1$, $\mHT_{\alpha}(\bv)$ degenerate into the Hermite
polynomials with Gaussian distribution as weight function (see
\cite{Abramowitz} for details), thus the definitions are consistent
to 1D case. The generalized Hermite function is studied in details in
the Appendix \ref{sec:HermitePolynomial}, and below we summarize some
useful properties of $\mHT_{\alpha}(\bv)$:
\begin{enumerate}
    \item
        Recursion relation: 
        \[ v_d\mHT_{\alpha}(\bv)=\sum_{j=1}^D\theta_{jd}\mHT_{\alpha+e_j}(\bv)+\alpha_d\mHT_{\alpha-e_d}; \]
    \item
        Quasi-orthogonality relations:
        \[ \displaystyle \int_{\bbR^D}\mHT_{\alpha}(\bv)\mHT_{\beta}(\bv)\dfrac{1}{\weight}\dd
        \bv=C_{\alpha, \beta}\delta_{|\alpha|,|\beta|}, \] where
        $C_{\alpha,\beta}$ is constant dependent on $\alpha,\beta$,
        and $\Theta$.
    \item
        Differential relation:
        \begin{equation}
            \od{\mH_{\alpha}^{[\Theta(t)]}(\bv(t))}{t}
            =-\sum_{i=1}^D\mH_{\alpha+e_i}^{[\Theta(t)]}(\bv(t))\od{v_i(t)}{t}
            +\frac{1}{2}\sum_{i,j=1}^D\mH_{\alpha+e_i+e_j}^{[\Theta(t)]}(\bv(t))\od{\theta_{ij}(t)}{t},
        \end{equation}
        where $e_i$, $i=1,\cdots,D$ is the $D$-dimensional unit
        multi-index with its $i$-th entry equal to 1.
\end{enumerate}

We expand the distribution function $f(t,\bx,\bxi)$ into the series
of $\mHT_{\alpha}(\bv)$ as
\begin{equation}\label{eq:expansion}
    f(t,\bx,\bxi)=\sum_{\alpha\in\bbN^D}f_{\alpha}(t,\bx)\mHT_{\alpha}(\bxi-\bu).
\end{equation}
Substituting the expansion \eqref{eq:expansion} into the Boltzmann
equation \eqref{eq:Boltzmann}, and comparing the coefficient of
$\mHT_{\alpha}(\bxi-\bu)$, we obtain
\begin{equation}\label{eq:origalsystem}
    \begin{split}
        \odd{f_{\alpha}}{t} ~+ & \sum_{d, k = 1}^D \left( \theta_{d k} \pd{f_{\alpha-e_k}}{x_d}
        + \left( \alpha_k + 1 \right) \delta_{k d}
        \pd{f_{\alpha + e_k}}{x_d} \right) + \\ 
        \sum_{i = 1}^D f_{\alpha - e_i}
        \odd{u_i}{t} ~+ & \sum_{i, d, k = 1}^D \left( \theta_{d k} f_{\alpha - e_i - e_k} +
        \left( \alpha_k + 1 \right) \delta_{k d} f_{\alpha - e_i + e_k} \right)
        \pd{u_i}{x_d} + \\
        \sum_{i, j = 1}^D \frac{
        f_{\alpha - e_i - e_j}}{2} \odd{\theta_{i j}}{t} ~+ & \sum_{i, j, d, k = 1}^D \frac{1}{2} 
        \left(\theta_{kd} f_{\alpha - e_i - e_j - e_k} + 
        \left( \alpha_k + 1 \right)\delta_{k d}
        f_{\alpha - e_i - e_j + e_k} \right) \pd{\theta_{i j}}{x_d} \\
        =& ~\nu(\mG_{\alpha}-f_{\alpha}),
    \end{split}
\end{equation}
where $\odd{~}{t}$ is the material derivation standing for
\[
    \odd{~}{t}=\pd{~}{t}+\sum_{d=1}^Du_d\pd{~}{x_d},
\]
and $\mG$ is expanded as
\[
    \mG=\sum_{\alpha\in\bbN^D}\mG_{\alpha}\mHT_{\alpha}(\bxi-\bu).
\]

In particular, if we let $\Theta=\theta\boldsymbol{I}$, then the
moment system \eqref{eq:origalsystem} is exactly the same as the
system derived in \cite{NRxx_new} without hyperbolic
regularization. In this paper hereafter, we let
$\theta_{ij}=p_{ij}/\rho$. The expansion \eqref{eq:expansion} together
with the quasi-orthogonality relation of $\mHT_{\alpha}(\bv)$ yields
\begin{equation}\label{eq:freeparameter}
    f_0=\rho,\quad f_{e_i}=0,\quad f_{e_i+e_j}=0,\quad 
    q_i=2f_{3e_i}+\sum_{d=1}^Df_{e_i+2e_d},\quad i,j=1,\cdots,D.
\end{equation}
Direct calculations give us
\begin{equation}\label{eq:collision_expansion}
    \mG_{0}=\rho,
    \quad
    \mG_{e_i+e_j}=\frac{1-b}{1+\delta_{ij}}(p\delta_{ij}-p_{ij}),
    \quad i,j=1,\cdots,D,
    \quad
    \mG_{\alpha}=0, \quad\text{if }|\alpha|\text{ is odd}.
\end{equation}

In particular, in case of $\alpha = {\mathbf{0}}$ and noticing
$f_{e_i} = 0$ for $i = 1, \cdots, D$, we deduce the continuity
equation from \eqref{eq:origalsystem} as
\begin{equation}\label{eq:massconservation}
    \odd{\rho}{t}+\rho\sum_{d=1}^D\pd{u_d}{x_d}=0.
\end{equation}
By setting $\alpha=e_i$ with $i=1,\cdots,D$ in
\eqref{eq:origalsystem}, using \eqref{eq:freeparameter}, we obtain the
equation of momentum conservation as
\begin{equation}\label{eq:momentumconservation}
    \rho\odd{u_i}{t}+\sum_{d=1}^D\left(
    \theta_{id}\pd{\rho}{x_d}+\rho\pd{\theta_{id}}{x_d} \right)=0.
\end{equation}
By setting $\alpha=e_i+e_j$ with $i,j=1,\cdots,D$ and $i\geq j$ in
\eqref{eq:origalsystem}, using \eqref{eq:freeparameter}, we have the
conservation laws of pressure tensor as
\begin{equation}\label{eq:pressuretensor}
    \frac{2-\delta_{ij}}{2}\rho\odd{\theta_{ij}}{t}
    +\sum_{d=1}^D\left[(1+\delta_{id}+\delta_{jd})\pd{f_{e_i+e_j+e_d}}{x_d}
    +\rho\theta_{id}\pd{u_j}{x_d}+\rho\theta_{jd}\pd{u_i}{x_d}(1-\delta_{ij})
    \right]=\nu\mG_{e_i+e_j}.
\end{equation}
For the case $D=3$, if we let $f_{\alpha}=0$, $|\alpha|=0$, then
\eqref{eq:massconservation}, \eqref{eq:momentumconservation} and
\eqref{eq:pressuretensor} are the well-known 10-moment system
\cite{Holway10m}.

Substituting \eqref{eq:momentumconservation} and
\eqref{eq:pressuretensor} into \eqref{eq:origalsystem} to eliminate
the material derivation of $u_i$ and $\theta_{ij}$, $i,j=1,\dots,D$,
we get the governing equation of $f_{\alpha}$ as
\begin{equation}\label{eq:origalsystem_not}
  \begin{split}
  \odd{f_{\alpha}}{t} + & \sum_{d,k = 1}^D \theta_{dk}
  \pd{f_{\alpha - e_k}}{x_d} + \sum_{d = 1}^D (\alpha_d+1)\pd{f_{\alpha+e_d}}{x_d}\\
  - & \sum_{i,d = 1}^D f_{\alpha - e_i} \left( \frac{\theta_{id}}{\rho}
  \pd{\rho}{x_d} + \pd{\theta_{i d}}{x_d} \right) 
  +  \sum_{i,d = 1}^D (\alpha_d+1)f_{\alpha-e_i+e_d}\pd{u_i}{x_d} \\
  + & \frac{1}{2} \sum_{i, j, d, k = 1}^D ( \theta_{k d} f_{\alpha - e_i -
  e_j - e_k} + \delta_{k d} ( \alpha_k + 1) f_{\alpha - e_i - e_j + e_k})
  \pd{\theta_{i j}}{x_d} \\
   - & \sum_{i,j,d=1}^D
   \frac{1+\delta_{i d}+\delta_{j d}}{2 - \delta_{i j}}
   \frac{f_{\alpha-e_i-e_j}}{\rho} \pd{f_{e_i+e_j+e_d}}{x_d} 
   =
   \nu\left(\mG_{\alpha}-f_{\alpha}+\sum_{i,j=1}^D\frac{\mG_{e_i+e_j}}{\rho}f_{\alpha-e_i-e_j}\right). \\
  \end{split}
\end{equation}
Then \eqref{eq:massconservation}, \eqref{eq:momentumconservation},
\eqref{eq:pressuretensor} and \eqref{eq:origalsystem_not} constitute a
moment system with infinite equations, which is a quasi-linear system.

To attain a system with finite number of equations, a truncation has
to be applied. Due to the quasi-orthogonality of the basis functions
$\mHT_{\alpha}(\bv)$, we let $M\in\bbN$, $M\geq2$, and adopt the
finite set of closure coefficients $\{f_{\alpha}\}_{|\alpha| \leq M}$,
and discard all the equations with $\mathrm{D}f_{\alpha}/\mathrm{D}t$
with $|\alpha|>M$. Then we get a moment system with finite
equations. However, the system obtained is not closed yet since in the
equations with $\mathrm{D} f_{\alpha}/\mathrm{D} t$, $|\alpha|=M$, the
terms of $f_{\alpha+e_d}$, $d=1,\dots,D$ are involved. The simplest
way to close the system is to inherit Grad's idea \cite{Grad} to let
$f_{\alpha}=0$ with $|\alpha|=M+1$ in the moment system. Here we first
use Grad's way to close the system, which results in a generalized
Grad-type moment system.

We remark here that for $D=1$, the moment system above is the same as
the \NRxx~ method in 1D case \cite{NRxx}, thus the \NRxx~ method,
which is the isotropic Grad-type moment system, may be regarded as a
special case of the system obtained here.


\section{Analysis of Moment System}
\label{sec:hyperbolic}
As has been pointed out in \cite{Muller}, the moment system in
\cite{Grad} is not globally hyperbolic. In \cite{Fan}, the authors
showed that the \NRxx~ is also not globally hyperbolic even with
$D=1$. In this section, we will point out that the generalized
Grad-type moment system shares the same problem. For this purpose, we
focus on the properties the coefficient matrix of the generalized
Grad-type moment system, and prove that the moment system is not
globally hyperbolic.

At first, we reformulate the generalized Grad-type moment system
obtained in the previous section as below. Equations
\eqref{eq:momentumconservation} and \eqref{eq:pressuretensor} are as
\begin{align}
    \odd{u_i}{t}&+\sum_{d=1}^D
    \frac{1}{\rho}\pd{p_{id}}{x_d}=0,
    \label{eq:momentumconservation2}\\
    \begin{split}
    \odd{p_{ij}}{t}&+\sum_{d=1}^D\left(
    p_{ij}\pd{u_d}{x_d}
    +p_{id}\pd{u_j}{x_d}+p_{jd}\pd{u_i}{x_d}
    +(e_i+e_j+e_d)!\cdot\pd{f_{e_i+e_j+e_d}}{x_d}
    \right)\\
    &\qquad\qquad\qquad \qquad\qquad\qquad \qquad\qquad 
    =(1+\delta_{ij})\nu\mG_{e_i+e_j},
    \label{eq:pressuretensor2}
    \end{split}
\end{align}
where $\alpha!$ is defined as $\alpha!=\prod_{d=1}^D\alpha_d!$ and
$i,j=1,\cdots,D$.
Since $p=\frac{1}{D}\sum_{i=1}^Dp_{ii}$, we have
\begin{equation}
    \odd{p}{t}+\sum_{i,d=1}^D\frac{2}{D}p_{id}\pd{u_i}{x_d}
    +\sum_{d=1}^D\left( p\pd{u_d}{x_d}+\frac{2}{D}\pd{q_d}{x_d} \right)
    =0,
\end{equation}
where $\sum_{i=1}^D\mG_{2e_i}=0$ is used. Since
\[
    \pd{\theta_{ij}}{x_d}=\frac{1}{\rho}\pd{p_{ij}}{x_d}-\frac{\theta_{ij}}{\rho}\pd{\rho}{x_d}
\]
holds for any $i,j,d=1,\cdots,D$, the moment equations
\eqref{eq:origalsystem_not} is reformulated as
\begin{equation}\label{eq:momentsystem}
  \begin{split}
  \odd{f_{\alpha}}{t} + & \sum_{d,k = 1}^D \theta_{dk}
  \pd{f_{\alpha - e_k}}{x_d} + \sum_{d = 1}^D (\alpha_d+1)\pd{f_{\alpha+e_d}}{x_d}\\
  +&\sum_{i,j,d=1}^D\frac{C_{ijd}(\alpha)}{2\rho}
  \left(\pd{p_{ij}}{x_d}- \theta_{ij}\pd{\rho}{x_d}\right)
  +  \sum_{i,d = 1}^D (\alpha_d+1)f_{\alpha-e_i+e_d}\pd{u_i}{x_d}\\ 
  - & \sum_{i,d = 1}^D \frac{f_{\alpha-e_i}}{\rho} \pd{p_{id}}{x_d} 
   -  \sum_{i,j,d=1}^D \frac{(e_i+e_j+e_d)!}{2}
   \frac{f_{\alpha-e_i-e_j}}{\rho} \pd{f_{e_i+e_j+e_d}}{x_d}\\
   = &
   \nu\left(\mG_{\alpha}-f_{\alpha}+\sum_{i,j=1}^D\frac{\mG_{e_i+e_j}}{\rho}f_{\alpha-e_i-e_j}\right), 
  \end{split}
\end{equation}
where $C_{ijd}(\alpha)$ is
\begin{equation}\label{eq:def_Cijd}
    C_{ijd}(\alpha)= \sum_{k=1}^D\theta_{k d} f_{\alpha - e_i -
  e_j - e_k} + ( \alpha_d + 1) f_{\alpha-e_i-e_j+e_d}.
\end{equation}

For later usage, some conventional notations are introduced as follows.
\newcommand\seq[2]{#1\!:\!#2}
\begin{align*}
\text{For a vector }~ \pmb{a}=(a_1,\cdots,a_n)\in\bbR^{n},& \text{we denote}
    ~~\pmb{a}(\seq{i}{j})=(a_i,\cdots,a_j); \\
\text{For a matrix }~\bA=(a_{ij})_{n\times n}\in\bbR^{n\times n},& \text{we denote} \\
    ~~\bA(i,\,\seq{j}{k})=(a_{i,j},\cdots,a_{i,k}), &
    ~~\bA(i,\,:)=\bA(i,1:n), \\
    ~~\bA(\seq{i}{l},\,\seq{j}{k})=&\begin{pmatrix}
            a_{i,j}& a_{i,j+1} & \cdots& a_{i,k}\\
            a_{i+1,j}& a_{i+1,j+1} & \cdots& a_{i+1,k}\\
            \vdots &\vdots&\ddots&\vdots\\
            a_{l,j}& a_{l,j+1} & \cdots& a_{l,k}
            \end{pmatrix},
\end{align*}
Let
\[
 \cS_{D,M} = \{\alpha\in\bbN^D\mid |\alpha| \le M\},
\]
we permute the elements of $\cS_{D,M}$ by lexicographic order. Then
for any $\alpha\in \cS_{D,M}$,
\begin{equation}\label{eq:def_mN}
\mN(\alpha) = \sum_{i=1}^D\binom{\sum_{k=D-i+1}^D\alpha_k + i-1}{i}+1
\end{equation}
holds, where $\mN(\alpha)$ is the ordinal number of $\alpha$ in
$\cS_{D,M}$. Noticing $Me_D$ is the last element of $\cS_{D,M}$, the
cardinal number of set $\cS_{D,M}$ is
\[
    N = \mN(Me_D) = \binom{M+D}{D},
\]
which is total number of variables in the truncated moment system if a
truncation with $|\alpha|\le M$ is considered.

With the notations above, we collect the variables in the truncated
moment system to form a vector $\bw\in\bbR^N$ as
\begin{align*}
    w_1 & =\rho,    &   w_{i+1}&=u_i,\\
    w_{\mN(e_i+e_j)}&=p_{ij}/(1+\delta_{ij}),
    &   w_{\mN(\alpha)}&=f_{\alpha}, \mbox{~else~} \alpha.
\end{align*}
where $i,j=1,\cdots,D$, and $|\alpha|\leq M$. Fig. \ref{fig:order_w}
shows the permutation of entries $\bw$ as the variables of the
truncation moment system. Collecting together
\eqref{eq:massconservation}, \eqref{eq:momentumconservation2},
\eqref{eq:pressuretensor2} and \eqref{eq:momentsystem}, we arrive the
following quasi-linear system
\begin{equation}\label{eq:system}
    \odd{\bw}{t}+\sum_{d=1}^D\bA_M^{(d)}\pd{\bw}{x_d}=\nu\boldsymbol{Q}\bw,
\end{equation}
where the entries of $\bA_M^{(d)}$ with $d=1,\dots,D$ and
$\boldsymbol{Q}$ are given in \eqref{eq:massconservation},
\eqref{eq:momentumconservation2}, \eqref{eq:pressuretensor2} and
\eqref{eq:momentsystem}. The matrices $\bA_M^{(d)}$ have quite regular
structure, though complex. Next we will devote to study in detail these
coefficient matrix $\bA_M^{(d)}$.

\subsection{Properties of the coefficient
matrix}\label{sec:propertyOfA}
Without loss of generality, we only investigate $\bA_M^{(1)}$. For
simplicity, we momentarily strip away the supscripts and use $\bA_M$
to replace $\bA_M^{(1)}$ without ambiguity.

\subsubsection{Case $D=1$}
\label{sec:caseD1}
This case has been thoroughly studied in \cite{Fan}. Let us recall the
results therein below for comparison. In this case, the coefficient
matrix is precisely as { \renewcommand{\arraystretch}{1.5}
\begin{equation}\label{eq:1Dmatrix}
    \footnotesize
    \bB_M=
    \begin{pmatrix}
        U&\rho&0&\hdotsfor{6}&0\\
        0&U&2\,{\rho}^{-1}&0&\hdotsfor{5}&0\\
        0&3p_{11}/2&U&3&0&\hdotsfor{4}&0\\
        -1/2\theta_{11}^2&4\,f_{{3}}&\theta_{11}&U&4&0&\hdotsfor{3}&0\\
        -{\frac {5\theta_{11}f_{{3}}}{2{\rho}}}&5\,f_{{4}}&
        3{\frac {f_{3}}{\rho}}&{\theta_{11}}&U&5&0&\hdotsfor{2}&0\\
        -3{\frac {\theta_{11}f_{{4}}}{{\rho}}}&6f_{5}&4{\frac {f_{{4}}}{\rho}}
        &{\frac{-3f_{3}}{\rho}}&\theta_{11}&U&6&0&\cdots&0\\
        \hdotsfor{10}\\
        -\frac{M\theta_{11}f_{M-2}+\theta_{11}^2f_{M-4}}{2\rho} 
        &{\scriptstyle{M}}f_{M-1} & \frac{(M-2)f_{M-2}+\theta_{11}f_{M-4}}{\rho} &
        {\frac{-3f_{M-3}}{\rho}}&0&\cdots&0&\theta_{11}&U&M\\
        -\frac{(M+1)\theta_{11}f_{M-1}+\theta_{11}^2f_{M-3}}{2\rho} 
        &{\scriptstyle{(M+1)}}f_{M} & \frac{(M-1)f_{M-1}+\theta_{11}f_{M-3}}{\rho} &
        {\frac{-3f_{M-2}}{\rho}}&0&\hdotsfor{2}&0&\theta_{11}&U\\
    \end{pmatrix},
\end{equation}
}
where $U = 0$. It is clear that this matrix
\begin{itemize}
\item is independent of $u$ and the diagonal entries are all zeros;
\item is a lower Hessenberg matrix.
\end{itemize}
The characteristic polynomial of this matrix is
\begin{equation}\label{eq:eigenpolynomial1D}
    \theta_{11}^{M+1}\He^{[\theta_{11}]}_{M+1}(\lambda) -(M+1)!\left( \lambda f_M
    +\frac{\lambda^2-\theta_{11}}{2}f_{M-1}\right).
\end{equation}
If $f_M$ and $f_{M-1}$ are taken certain values, the characteristic
polynomial may not have $M+1$ real roots, thus the matrix can not be
diagonalizable with real eigenvalues.
The eigenvector of this matrix for the eigenvalue $\lambda$,
satisfying \eqref{eq:eigenpolynomial1D} is
\begin{equation}\label{eq:eigenvector1D}
    r_1=\rho, ~r_2=\lambda, ~r_3=\frac{\rho\lambda^2}{2},
    ~r_k=\frac{\rho\He_{k-1}^{[\theta_{11}]}(\lambda)}{(k-1)!}-f_{k-2}\lambda 
    -f_{k-3}\frac{\He_2^{[\theta_{11}]}(\lambda)}{2},
    k=4,\cdots,M+1.
\end{equation}

\subsubsection{Case $D \geq 2$}
We are interested in the case $D \geq 2$. Let us investigate some
examples at first for a full clarification of the structure of the
coefficient matrix.
\begin{example}\label{exam:A3}
  If $D=2$, the ordinal number of $\alpha$ in $\cS_{D,M}$ is
  $\mN(\alpha) = \dfrac{(\alpha_1 + \alpha_2 + 1)(\alpha_1 +
    \alpha_2)}{2} + \alpha_2 + 1$. The permutation of entries of $\bw$
  is showed in Fig. \ref{fig:order_w}. For the simple case, the
  matrix $\bA_3$ is
  \begin{equation*}
      \bA_3=
      \begin{tikzpicture}[baseline=-\the\dimexpr\fontdimen22\textfont2\relax]
          \matrix[matrix of math nodes,left delimiter = (,right delimiter = ),row sep=0pt,column sep = 0pt] (m) {
          0&\rho&0&0&0&0&0&0&0&0\\
          0&0&0&2\,{\rho}^{-1}&0&0&0&0&0&0 \\
          0&0&0&0&{\rho}^{-1}&0&0&0&0&0\\
          0&3p_{11}/2&0&0&0&0&3&0&0&0\\
          0&2\,p_{12}&p_{11}&0&0&0&0&2&0&0\\
          0&p_{22}/2&p_{12}&0&0&0&0&0&1&0\\
          -\theta_{11}^2/2&4\,f_{{30}}&0&{\theta_{11}}&0&0&0&0&0&0\\
          -3\theta_{11}\theta_{12}/2&3\,f_{{21}}&3\,f_{{30}}&\theta_{12}&\theta_{11}&0&0&0&0&0\\
          -\theta_{11}\theta_{22}/2-\theta_{12}^2&2\,f_{{12}}&2\,f_{{21}}&0
          &\theta_{12}&\theta_{11}&0&0&0&0\\
          -\theta_{22}\theta_{12}/2&f_{{03}}&f_{{12}}&0
          &0&\theta_{12}&0&0&0&0\\
      };
      \draw (m-1-1.north west) rectangle (m-6-6.south east);
      \end{tikzpicture},
  \end{equation*}
  where $f_{ij}=f_{ie_1+je_2}$, and the upper-left part in the box is
  denoted by $\bA_2$. If $M>3$, for any $\alpha\in\bbN^2$, and
  $3<|\alpha|\leq M$, we have
  \begin{subequations}\label{eq:AMD2}
\begin{align}
&\bA_M(\seq{1}{10}, \, \seq{1}{10}) = \bA_3,\\
&\bA_M(\mN(\alpha), \mN(\alpha)) =  0,\\
&\bA_M(\mN(\alpha), \mN(\alpha-e_k)) =  \theta_{1k}, 
    \quad\text{if }\alpha_k>0, \label{eq:AMD2:3} \\
&\bA_M(\mN(\alpha), \mN(\alpha+e_1)) =  \alpha_1+1, 
    \quad\text{if }|\alpha|< M,\\ 
\begin{split}
&\bA_M(\mN(\alpha),\,\seq{1}{9})
=(-\sum_{i,j=1}^D\frac{\theta_{ij}C_{ij1}}{2\rho}, ~~
(\alpha_1+1)f_{\alpha},~~(\alpha_1+1)f_{\alpha+e_1-e_2},~~\\ 
   & \qquad\qquad\qquad\qquad
   \frac{C_{111}(\alpha)}{\rho}-\frac{2f_{\alpha-e_1}}{\rho},~~
   \frac{C_{121}(\alpha)}{\rho}-\frac{f_{\alpha-e_2}}{\rho},~~
   \frac{C_{221}(\alpha)}{\rho},\\
    &\qquad\qquad\qquad\qquad
    -\frac{3f_{\alpha-2e_1}}{\rho},~ ~
    -\frac{2f_{\alpha-e_1-e_2}}{\rho},~ ~
    -\frac{f_{\alpha-2e_2}}{\rho}), 
\end{split} \label{eq:AMD2:5}
\end{align}
\end{subequations}
where $C_{ijd}(\alpha)$ are given in \eqref{eq:def_Cijd}. We remark
that 
\begin{itemize}
    \item any entry of $\bA_M(i,j)$, if not specified above, is taken as
        zero; 
    \item for $|\alpha|=4$, some entries $\bA_M(i,j)$ may be doublely
        defined in \eqref{eq:AMD2:3} and \eqref{eq:AMD2:5}, the
        value of which is the sum of the both expression;
    \item if any entries of $\alpha$ is negative, $(\cdot)_{\alpha}$
        is taken as zero.
\end{itemize}
\end{example}

\begin{figure}[h]
\centering
  \includegraphics[width=0.75\textwidth]{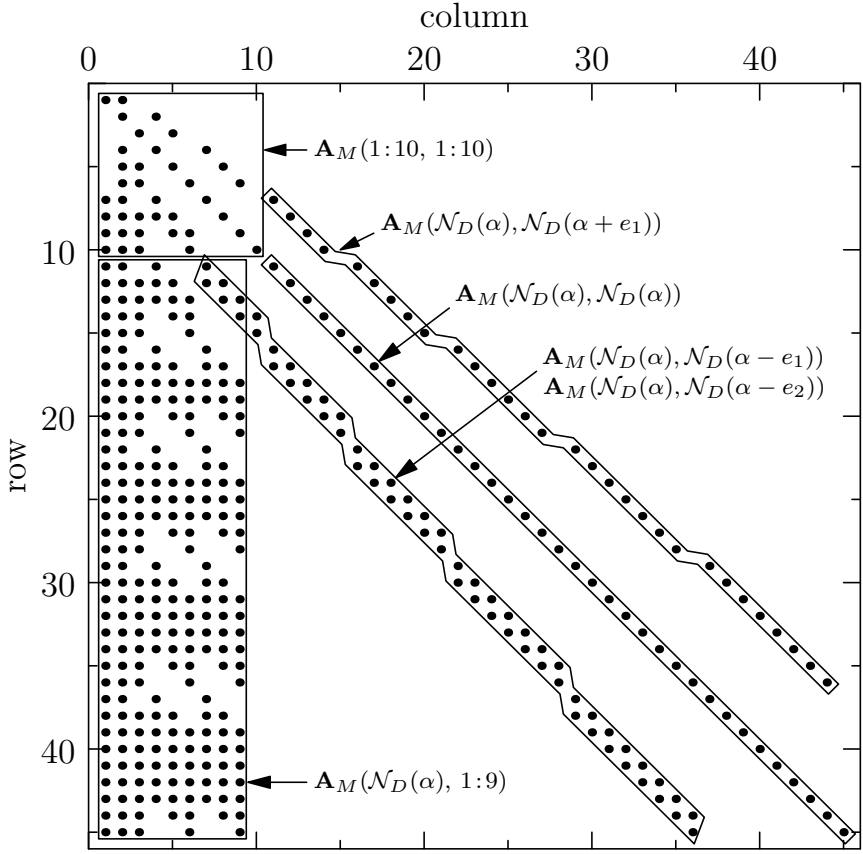}
  \caption{\label{fig:sp_ori}The sparsity pattern of $\bA_M$ with
  $M=8, D=2$. Its nonzero entries are given in \eqref{eq:AMD2}.}
\end{figure}

Clearly the matrix $\bA_M$ is independent of $\bu$, and the diagonal
entries vanish. Actually, in \eqref{eq:massconservation},
\eqref{eq:momentumconservation2}, \eqref{eq:pressuretensor2} and
\eqref{eq:momentsystem}, the coefficients of terms with derivative to
$x_d$, $d=1,\cdots,D$, are independent of $\bu$; and in the equation
containing $\odd{w_i}{t}$, $i=1,\cdots,N$, the coefficients of
$\pd{w_i}{x_d}$, $d=1,\cdots,D$ are zero. Hence, we have that
\begin{property}\label{pro:diagonal}
    The coefficient matrix $\bA$ is independent of $\bu$, thus
    \[
        \pd{\bA}{\bu}=0,
    \]
    and the diagonal entries of $\bA$ are all zeros.
\end{property}
By this property, the moment system is invariant under a Galilean
translation.

In example \ref{exam:A3}, the coefficient matrix $\bA_M$ for $D=2$ is
explicitly given, which makes one able to study the sparsity pattern
of $\bA_M$. Fig. \ref{fig:sp_ori} gives the sparsity pattern of
$\bA_M$ with $D=2$ and $M=8$. It is clear that there are at most one
nonzero entry in $\bA_M(i, i+1:N)$, $i=1,\dots,N$. Actually, in the
equation containing $\odd{f_{\alpha}}{t}$ in \eqref{eq:momentsystem},
the only nonzero entry is $\bA_M(\mN(\alpha), \mN(\alpha+e_1))$ in
$\bA_M(\mN(\alpha), \mN(\alpha)+1:N)$. Thus, we have the following
property.
\begin{property}\label{pro:hessenberg}
    For each $\alpha\in\bbN^D$, $|\alpha|\leq M$, let $i=\mN(\alpha)$,
    then there are no more than one entry of $\bA_M(i,i+1:N)$ to be
    nonzero. In particular, for $D=1$, $\bA_M$ is a lower Hessenberg
    matrix.
\end{property}

Property \ref{pro:hessenberg} provides us the approach to calculate
the eigenvalues and eigenvectors of $\bA_M$, the same as operating on
a lower Hessenberg matrix. Furthermore, its lower triangular part is
quite sparse. Let us try to illustrate its sparsity pattern below.
\begin{example}\label{exam:reducible}
  Let $D=2$. Considering only the coefficient matrix $\bA_M =
  \bA_M^{(1)}$, we assume that $\pd{\cdot}{x_2}=0$
  here. \eqref{eq:massconservation} shows the $\odd{\rho}{t}$ is
  dependent on $\pd{u_1}{x_1}$, and we denote the dependence by
  \[
  \rho\rightarrow u_1.
  \]
  Then dependency relationship of entries in $\bw$ by the equations
  \eqref{eq:massconservation}, \eqref{eq:momentumconservation2},
  \eqref{eq:pressuretensor2} and \eqref{eq:momentsystem} is
  demonstrated by the graph in Fig. \ref{fig:dependency}.
  \begin{figure}[h]
    \centering
    \includegraphics[width=0.70\textwidth]{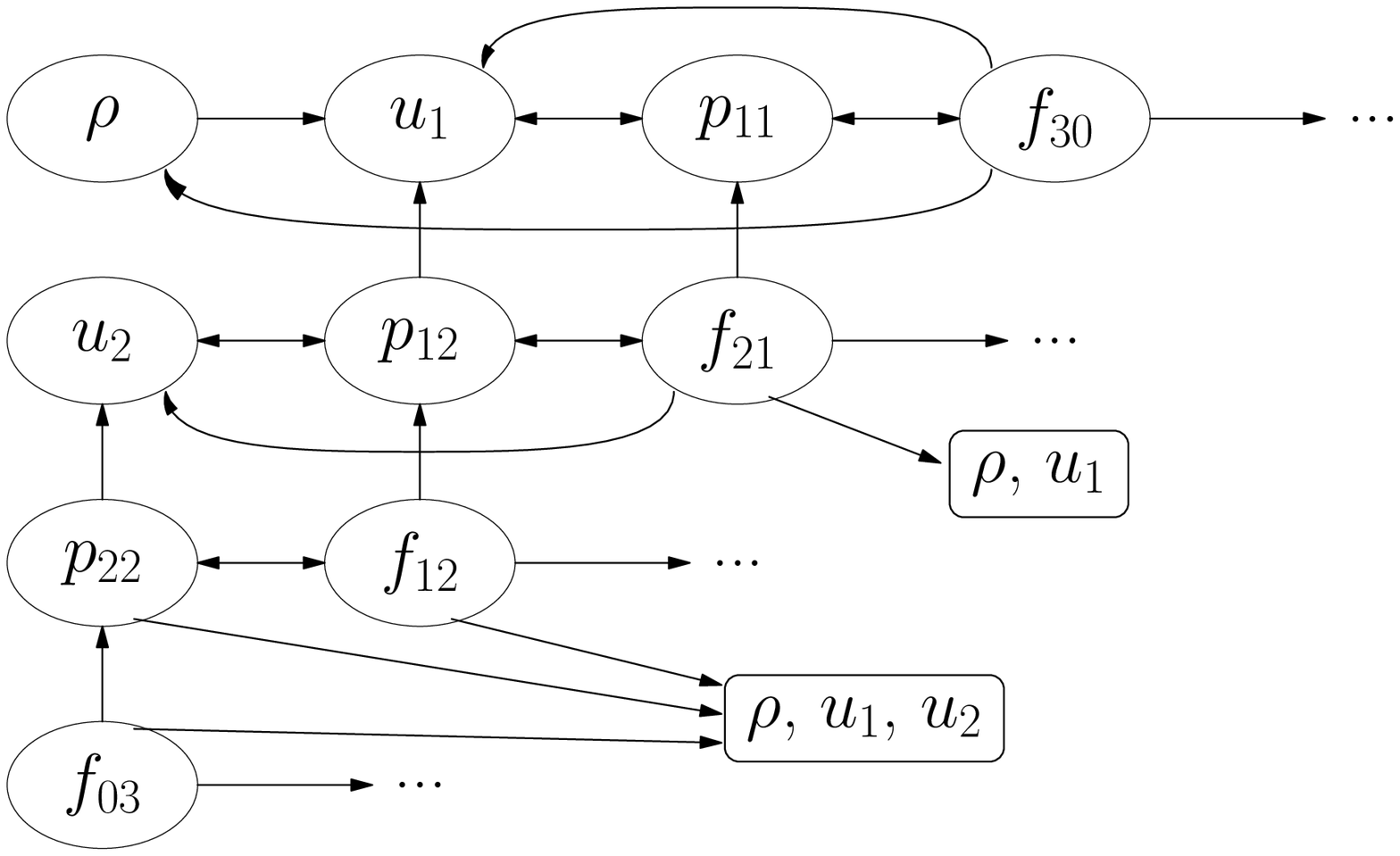}
    \caption{\label{fig:dependency}The dependency relationship of
      $\bw$ with $D=2$ and $\pd{\cdot}{x_2}=0$.  }
  \end{figure}

  It is interesting that in Fig. \ref{fig:dependency} 
  there exists a path from every node to every other node in the same
  row along the direction of the arrow (e.g. there is a path between
  any two entries of $\rho,u_1,p_{11},f_{30},\cdots, f_{Me_1}$), while
  there is no path from one node to any other node in the next row
  (e.g. there is no path from $\rho$ to $u_2$).
  This indicates that the matrix $\bA_M$ is reducible (see Page.
  288-289 of \cite{Demmel} for details). Thus by Fig.
  \ref{fig:dependency}, if we rearrange $\bw$ by the lexicographic
  order of $(\alpha_2,\alpha_1)$, i.e.
  \begin{align*}
      \bw'=(\underbrace{\rho,u_1,p_{11}/2,f_{3e_1},\cdots,f_{Me_1}}
      _{\text{\rm first row}},
      \underbrace{u_2,p_{12},\cdots,f_{(M-1)e_1+e_2}}
      _{\text{\rm second row}},\,
      \cdots,
      \underbrace{f_{Me_2}}_{\text{\rm last row}})^T,
  \end{align*}
  the coefficient matrix $\bA_M$ can be collected into a block lower
  triangular matrix. Fig. \ref{fig:order_w2} shows the permutation of
  $\bw'$ with $D=2$ and $M=8$.

  \begin{figure}[h]
    \centering
    \subfigure[The permutation of $\bw$]{
      \label{fig:order_w}
      \includegraphics[width=0.45\textwidth]{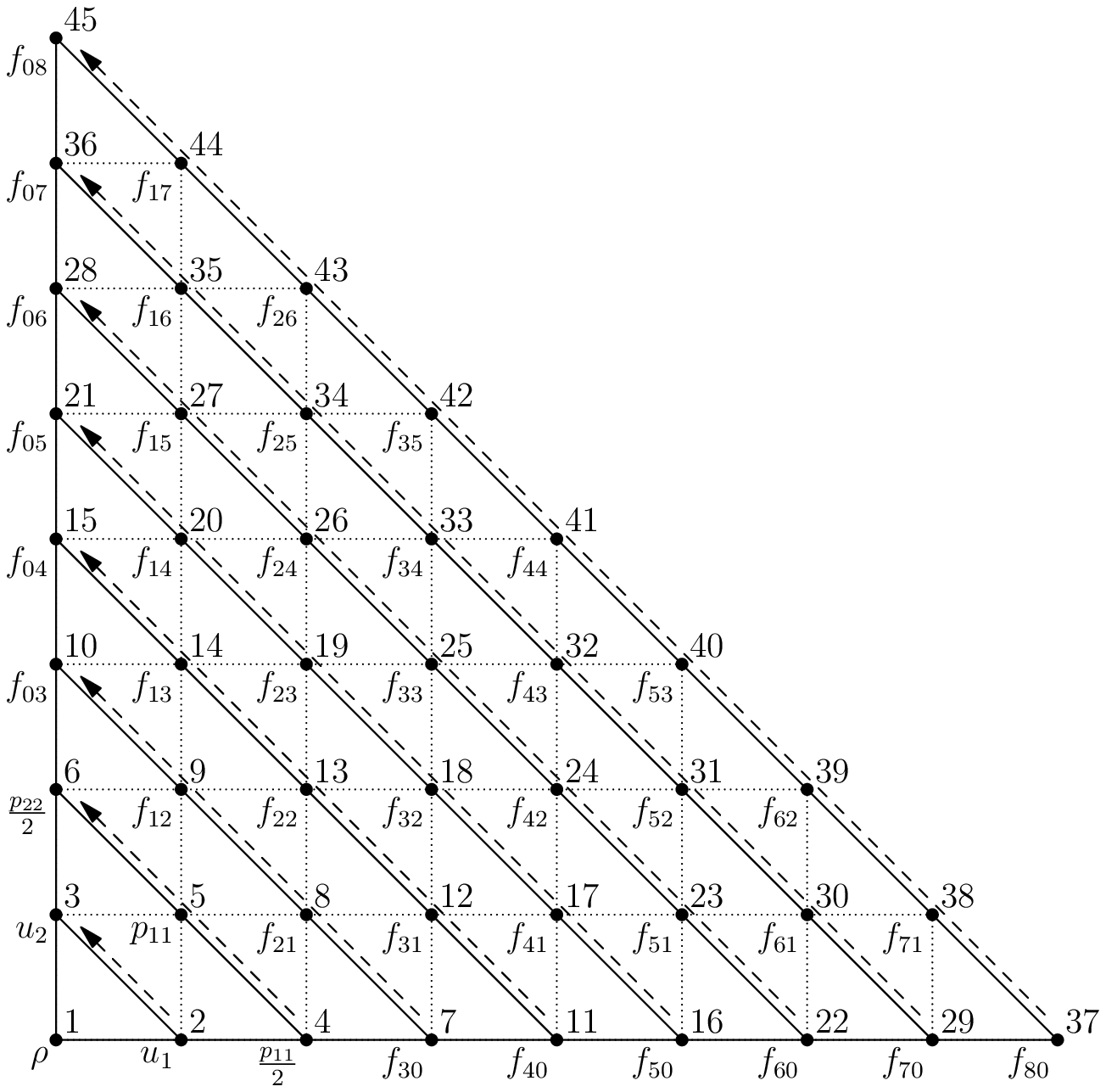}
    }
    \subfigure[A permutation of $\bw'$ defined in example
    \ref{exam:reducible}]{
      \label{fig:order_w2}
      \includegraphics[width=0.45\textwidth]{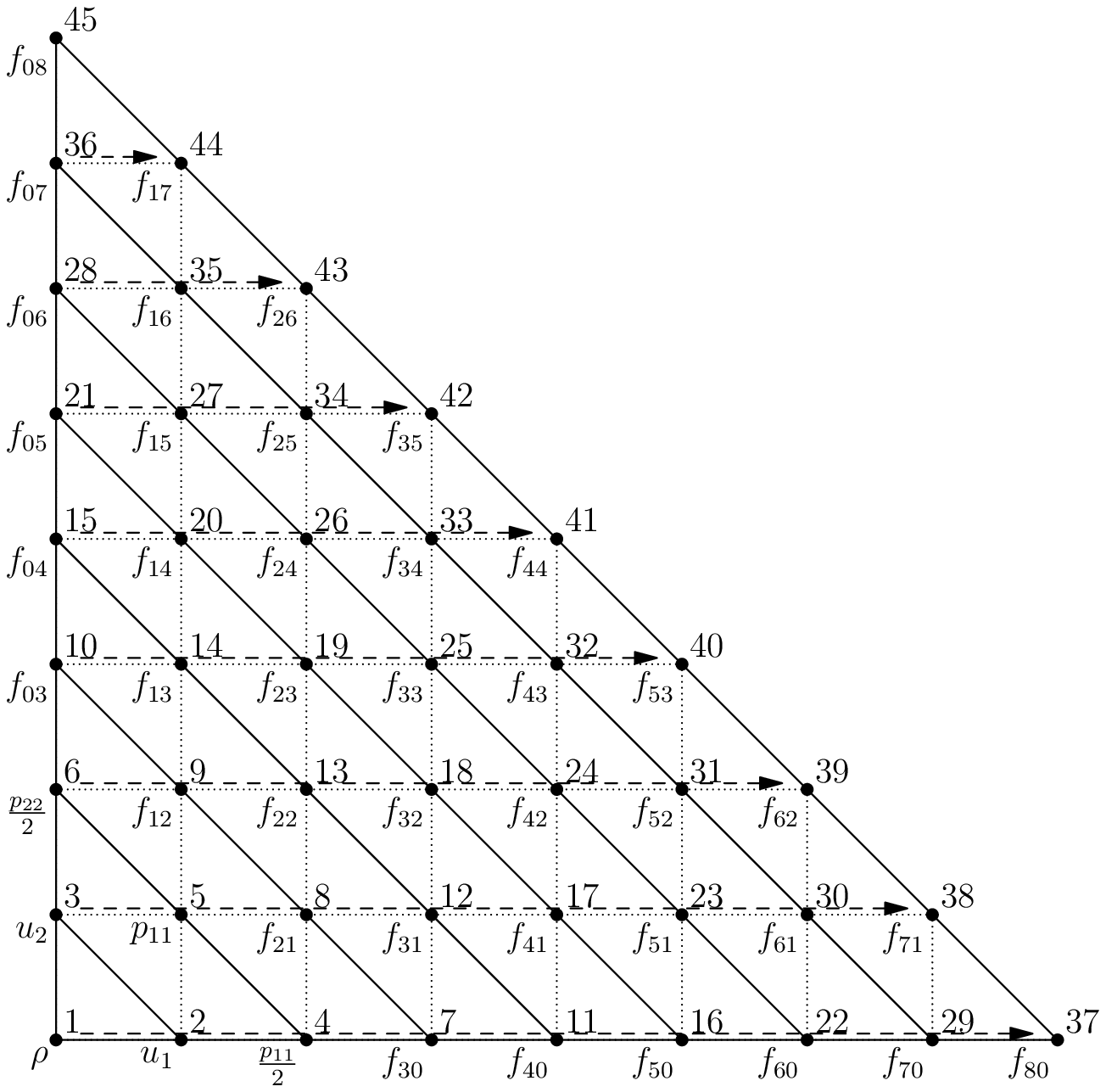}
    }
    \caption{The permutation of the coefficients while $D=2$,
      $M=8$. Each node stands for one coefficient. The marks in the
      lower-left of the node shows the expression of the coefficient,
      while the number in the upper-right represents the ordinal
      number in $\bw$ or $\bw'$. The dashed arrows depict the path of
      the corresponding permutation. The left one is the permutation
      of $\bw$, and the right one is a permutation of $\bw'$ defined
      in example \ref{exam:reducible}.}
  \end{figure}
  \begin{figure}[h!]
    \centering
    \includegraphics[width=0.65\textwidth]{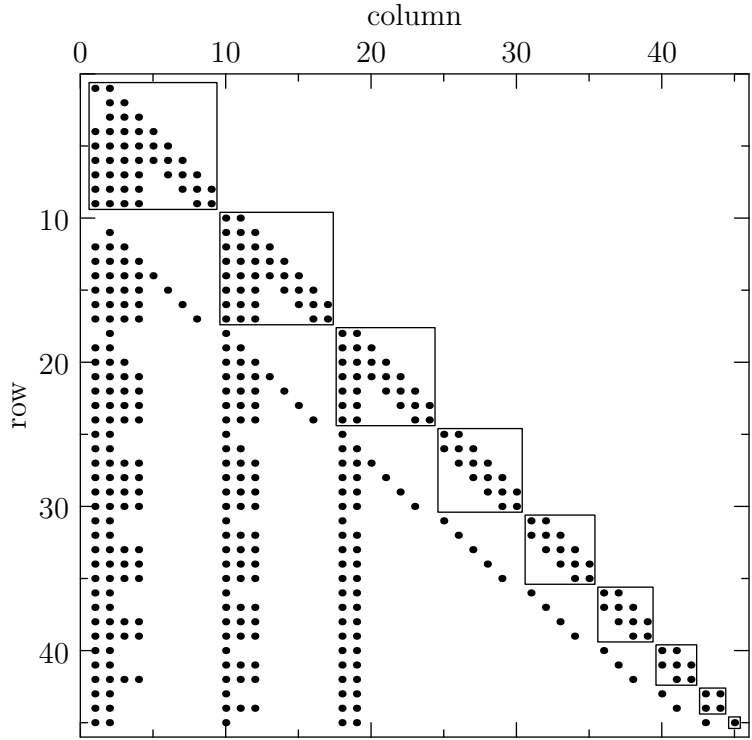}
    \caption{\label{fig:sp_pmt}The sparsity pattern of ${\bf A}_M'$
      with $M=8$, $D=2$.  ${\bf A}_M'$ is reducible and a block lower
      triangular matrix. Each diagonal block is a lower Hessenberg
      matrix.}
  \end{figure}

  The permutation above shows that there exists a permutation matrix
  $\boldsymbol{P}$ such that $\bw'=\boldsymbol{P}\bw$. Let
  $\bA_M'=\boldsymbol{P}\bA_M\boldsymbol{P}^{-1}$, then
  \[
  \odd{\bw'}{t}+\bA_M'\pd{\bw'}{x_1}
  =\nu\boldsymbol{P}\boldsymbol{Q}\boldsymbol{P}^{-1}\bw'
  \]
  holds. Fig. \ref{fig:sp_pmt} gives the sparsity pattern of $\bA_M'$
  with $M=8$. By Fig. \ref{fig:sp_pmt}, it is clear that $\bA_M'$ is
  reducible, and furthermore it is a block lower triangular matrix.
  Precisely, $\bA_M'$ can be written as
  \[
  \bA_M'=\begin{bmatrix}
    \hat{A}_{0}   &   &   &   \\
    *   &   \hat{A}_{1} &   &   \\
    *   &   *   &   \hat{A}_{2}   &   \\
    \hdotsfor{4}\\ 
    *   &   *   &   *   &   \hat{A}_M
  \end{bmatrix},
  \]
  where $\hat{A}_{i}\in\bbR^{(M+1-i)\times (M+1-i)}$, $i=0,\cdots,M$,
  is a lower Hessenberg matrix.

  Let us turn to study the properties of $\hat{A}_{i}$, $i=0,\cdots,M$.
  $\hat{A}_{0}$, $\hat{A}_1$ and $\hat{A}_{2}$ are defined in
  \eqref{eq:hat0}, \eqref{eq:hat1} and \eqref{eq:hat2}, respectively.
  \begin{landscape}
    \renewcommand{\arraystretch}{1.5}
    \begin{equation}\label{eq:hat0}
      \hat{A}_{0}=\begin{pmatrix}
        U&\rho&0&\hdotsfor{6}&0\\
        0&U&2\,{\rho}^{-1}&0&\hdotsfor{5}&0\\
        0&3p_{11}/2&U&3&0&\hdotsfor{4}&0\\
        -1/2\theta_{11}^2&4\,f_{3e_1}&\theta_{11}&U &4&0&\hdotsfor{3}&0\\
        -{\frac {5\theta_{11}f_{3e_1}}{2{\rho}}}&5\,f_{4e_1}&
        3{\frac {f_{3e_1}}{\rho}}&{\theta_{11}}&U&5&0&\hdotsfor{2}&0\\
        -3{\frac
          {\theta_{11}f_{{4e_1}}}{{\rho}}}&6f_{5e_1}&4{\frac
          {f_{{4e_1}}}{\rho}}
        &{\frac{-3f_{3e_1}}{\rho}}&\theta_{11}&U&6&0&\cdots&0\\
        \hdotsfor{10}\\
        -\frac{M\theta_{11}f_{(M-2)e_1}+\theta_{11}^2f_{(M-4)e_1}}{2\rho} 
        &{\scriptstyle{M}}f_{(M-1)e_1} &
        \frac{(M-2)f_{(M-2)e_1}+\theta_{11}f_{(M-4)e_1}}{\rho} &
        {\frac{-3f_{(M-3)e_1}}{\rho}}&0&\cdots&0&\theta_{11}&U&M\\
        -\frac{(M+1)\theta_{11}f_{(M-1)e_1}+\theta_{11}^2f_{(M-3)e_1}}{2\rho} 
        &{\scriptstyle{(M+1)}}f_{Me_1} &
        \frac{(M-1)f_{(M-1)e_1}+\theta_{11}f_{(M-3)e_1}}{\rho} &
        {\frac{-3f_{(M-2)e_1}}{\rho}}&0&\hdotsfor{2}&0&\theta_{11}&U\\
      \end{pmatrix},
    \end{equation}
    \begin{equation}\label{eq:hat1}
      \hat{A}_1=
      \begin{pmatrix}
        U&\rho^{-1}&0&\hdotsfor{5}&0\\
        p_{11}&U&2&0&\hdotsfor{4}&0\\
        3f_{3e_1}&\theta_{11}&U&3&0&\hdotsfor{3}&0\\
        4f_{4e_1}&\frac{3f_{3e_1}}{\rho}&\theta_{11}&U &4&0&\hdotsfor{2}&0\\
        5f_{5e_1}&\frac{4f_{4e_1}}{\rho}&-\frac{2f_{3e_1}}{\rho}&\theta_{11}&U &5&0&\hdotsfor{1}&0\\
        \hdotsfor{9}\\
        (M-1)f_{(M-1)e_1}&\frac{(M-2)f_{(M-2)e_1}+\theta_{11}f_{(M-4)e_1}}{\rho}
        &-\frac{2f_{(M-3)e_1}}{\rho}&0&\cdots&0&\theta_{11}&U
        &M-1\\
        Mf_{Me_1}&\frac{(M-1)f_{(M-1)e_1}+\theta_{11}f_{(M-3)e_1}}{\rho}
        &-\frac{2f_{(M-2)e_1}}{\rho}&0&\hdotsfor{2}&0&\theta_{11}&U
      \end{pmatrix},
    \end{equation}
  \end{landscape}
  \begin{equation}\label{eq:hat2}
    \hat{A}_{2}=
    \begin{pmatrix}
      U&1&0&\hdotsfor{4}&0\\
      \theta_{11}&U&2&0&\hdotsfor{3}&0\\
      \frac{3f_{3e_1}}{\rho}&\theta_{11}&U&3&0&\hdotsfor{2}&0\\
      \frac{4f_{4e_1}}{\rho}&-\frac{f_{3e_1}}{\rho}&\theta_{11}&U &4&0&\hdotsfor{1}&0\\
      \hdotsfor{8}\\
      \frac{(M-2)f_{(M-2)e_1}+\theta_{11}f_{(M-4)e_1}}{\rho}
      &-\frac{f_{(M-3)e_1}}{\rho}&0&\hdotsfor{1}&0&\theta_{11}&U
      &M-2\\
      \frac{(M-1)f_{(M-1)e_1}+\theta_{11}f_{(M-3)e_1}}{\rho}
      &-\frac{f_{(M-2)e_1}}{\rho}&0&\hdotsfor{2}&0&\theta_{11}&U \\
    \end{pmatrix},
  \end{equation}
  and for $\hat{A}_i$, $i=3,\cdots,M$, they all have exactly the same form as
  \begin{equation}
    \hat{A}_i=
    \begin{pmatrix}
      U   &   1   &   0   &   \hdotsfor{3}    &   0\\
      \theta_{11} &   U   &   2   &   0   &   \hdotsfor{2}    & 0\\
      0   &\theta_{11} &   U   &   3   &   0&\cdots&0\\
      \hdotsfor{7}\\
      0&\hdotsfor{2}&0&\theta_{11}&U&M-i\\
      0&\hdotsfor{3}&0&\theta_{11}&U\\
    \end{pmatrix},
    \quad i=3,\cdots,M,
  \end{equation}
  where $U=0$. Consider the matrix $\bB_M$ in \eqref{eq:1Dmatrix}, and
  denote $\bB_M(\rho, \theta_{11}, f_3, \dots, f_M) = \bB_M$, then
  \[
  \hat{A}_0=\bB_M(\rho, \theta_{11}, f_{3e_1},\dots,f_{Me_1}).
  \]
  Hence, $\hat{A}_{0}$ has exactly the same structure as $\bB_M$.
\end{example}

\begin{example}
    The properties of $\hat{A}_0$ is listed in Section \ref{sec:caseD1}.
    Now let us study the properties of $\hat{A}_1$ and $\hat{A}_2$.

    It is clear that the coefficient matrix $\hat{A}_1$ and
    $\hat{A}_2$
    \begin{itemize}
        \item
            are independent of $u$ and the diagonal entries are all
            zeros;
        \item
            are lower Hessenberg matrices.
    \end{itemize}
    Then we study the characteristic polynomials of $\hat{A}_1$ and
    $\hat{A}_2$. Let $\br\neq0$ be an eigenvector of $\hat{A}_1$
    corresponding to the eigenvalue $\lambda$,
    e.g. $\hat{A}_1\br=\lambda\br$. Since $\hat{A}_1$ is a lower
    Hessenberg matrix, we assert $r_1\neq0$.  Assume $r_1=1$, then
    $\hat{A}_1(1,:)\br=\lambda r_1$ gives $r_2=\rho\lambda$. Using the
    same skill on $\hat{A}_1(k,:)\br=\lambda r_k$, $k=2,\cdots,M-1$,
    we can obtain
    \[
        r_{k+1}=\rho\frac{\Het_{k}(\lambda)}{k!}-f_{ke_1}-\lambda
        f_{(k-1)e_1},\quad k=2,\cdots,M-1.
    \]
    $\hat{A}_1(M,:)\br=\lambda r_M$ can be written as 
    \[
        \rho\frac{\Het_{M}(\lambda)}{M!}-(-1)^{M}\left(  
        f_{Me_1}-\lambda f_{(M-1)e_1}
        \right)=0.
    \]
    Hence, the above equation has to be satisfied, if $\lambda$ is an
    eigenvalue of $\hat{A}_1$, which indicates the characteristic
    polynomial of $\hat{A}_1$ is
    \begin{equation}\label{eq:characteristicpolynomial_hatA1}
        \Het_{M}(\lambda)-(-1)^{M}M!\left(  
        f_{Me_1}-\lambda f_{(M-1)e_1}
        \right)/\rho.
    \end{equation}
    Similarly, the characteristic polynomial of $\hat{A}_2$ is
    \begin{equation}\label{eq:characteristicpolynomial_hatA2}
        \Het_{M-1}(\lambda)+(-1)^{M}(M-1)!
        f_{(M-1)e_1}/\rho.
    \end{equation}
\end{example}

For $\alpha\in\bbN^D$, let
\begin{equation}
    \hat{\alpha}=(\alpha_2,\dots,\alpha_D),
\end{equation}
and we denote $\hat{e}_1=0$, $\hat{e}_2=(1,0,\cdots,0)\in\bbR^{D-1}$,
$\cdots$, $\hat{e}_D=(0,\cdots,0,1)\in\bbR^{D-1}$.

For any $2\leq M\in\bbN$, and any $D\in\bbN^+ $, let
$\boldsymbol{P}\in\bbR^{N\times N}$ be the permutation matrix that
$\bw'=\boldsymbol{P}\bw$ in the lexicographic order of
$(\alpha_2,\cdots,\alpha_D,\alpha_1)$, and
$\bA_M'=\boldsymbol{P}\bA_M\boldsymbol{P}$. Then we have the following
results.
\begin{property}\label{pro:reducible}
    For $D\geq2$, the matrix $\bA_M'$ is a block lower triangular
    matrix, and the diagonal blocks
    \[
        (\bA_M')_{ii}=\hat{A}_{\hat{\alpha}}
    \]
    are irreducible, where
    $i=\mathcal{N}_{D-1}(\hat{\alpha})$. Precisely,
    $\hat{A}_{\hat{\alpha}}=\hat{A}_{|\hat{\alpha}|}$,
    $|\hat{\alpha}|=\alpha_2+\cdots+\alpha_D$, where $\hat{A}_{i}$,
    $i=0,\cdots,M$, is defined in Example \ref{exam:reducible}.
\end{property}

\subsection{Lack of global hyperbolicity}
\label{sec:lackofhyperbolicity}
We are ready to show the major result in this section, that the
moment system obtained is not globally hyperbolic for any
$D\in\bbN^+ $, $M\geq3$.
\begin{theorem}\label{thm:nonhyperbolic}
    The moment system obtained in Section \ref{sec:expansion} is not
    globally hyperbolic for any $D\in\bbN^+ $ and $M\geq3$.
\end{theorem}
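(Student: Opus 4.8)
The plan is to disprove global hyperbolicity directly. Hyperbolicity in the $x_1$-direction holds only if the coefficient matrix $\bA_M=\bA_M^{(1)}$ is diagonalizable with real eigenvalues at \emph{every} admissible state (one with $\rho>0$ and $\Theta$ symmetric positive definite), so it is enough to produce, for each fixed $D\ge1$ and $M\ge3$, a single admissible state at which $\bA_M$ has a non-real eigenvalue. I would reduce the whole matter to one elementary statement about the one-dimensional characteristic polynomial \eqref{eq:eigenpolynomial1D}.

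First, the reduction. For $D=1$ the matrix is exactly $\bB_M$ of \eqref{eq:1Dmatrix}, whose characteristic polynomial is \eqref{eq:eigenpolynomial1D}. For $D\ge2$, Property~\ref{pro:reducible} tells us that $\bA_M$ is permutation-similar to a block lower triangular matrix $\bA_M'$ with diagonal blocks $\hat A_0,\dots,\hat A_M$, and by Example~\ref{exam:reducible} the block $\hat A_0=\bB_M(\rho,\theta_{11},f_{3e_1},\dots,f_{Me_1})$ in \eqref{eq:hat0} has exactly the structure of $\bB_M$. Since the characteristic polynomial of a block triangular matrix is the product of those of its diagonal blocks, every root of the characteristic polynomial of $\hat A_0$ is an eigenvalue of $\bA_M'$, hence of $\bA_M$. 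Therefore in all cases it suffices to pick an admissible state so that the polynomial in \eqref{eq:eigenpolynomial1D}, regarded in the variables $(\rho,\theta_{11},f_3,\dots,f_M)$ (read as $(\rho,\theta_{11},f_{3e_1},\dots,f_{Me_1})$ when $D\ge2$), has a non-real root; choosing $\Theta$ diagonal with positive entries and all remaining $f_\alpha$ with $|\alpha|\ge3$ equal to zero makes $\rho>0$, $\theta_{11}>0$ and $f_{3e_1},\dots,f_{Me_1}\in\bbR$ into free parameters that do not affect $\hat A_0$, so such a state always exists once the polynomial claim is settled.

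Second, the polynomial claim. Set $f_{M-1}=0$ and write \eqref{eq:eigenpolynomial1D} as $p(\lambda)=q(\lambda)-c\lambda$, where $q(\lambda)=\theta_{11}^{M+1}\He^{[\theta_{11}]}_{M+1}(\lambda)$ is a fixed real polynomial of degree $M+1\ge4$ with $M+1$ distinct real zeros (the rescaled zeros of $\He_{M+1}$), and $c=(M+1)!\,f_M$ is an arbitrary real parameter. I would show that $p$ has strictly fewer than $M+1$ real roots once $|c|$ is large. Indeed $p'(\lambda)=q'(\lambda)-c$ has degree $M$, and since $q'$ is bounded on every compact set and strictly monotone near $\pm\infty$, for $|c|$ large the equation $q'(\lambda)=c$ has at most two real solutions (exactly one when $M$ is odd), all simple; hence $p$ has at most two real critical points, so by Rolle's theorem at most three real roots counted with multiplicity. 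As $\deg p=M+1\ge4$ and non-real roots come in conjugate pairs, $p$ must have at least two non-real roots. (For any fixed $M$ this is also a one-line discriminant check, and alternatively one may simply quote the $D=1$ analysis of \cite{Fan}, where the loss of reality for $M\ge3$ is already recorded.) With the reduction above, $\bA_M$ is not $\bbR$-diagonalizable at the chosen state, so the moment system is not globally hyperbolic for any $D\ge1$ and $M\ge3$.

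The only genuine obstacle is the polynomial claim — understanding how the real zeros of a generalized Hermite polynomial are destroyed by a low-degree perturbation; the rest is bookkeeping flowing from Property~\ref{pro:reducible} and the definition of hyperbolicity. The critical-value/Rolle argument is the cleanest uniform-in-$M$ route, and it also explains the sharpness of the hypothesis $M\ge3$: when $M=2$, the constraints $f_{e_i}=f_{e_i+e_j}=0$ from \eqref{eq:freeparameter} force the perturbation in \eqref{eq:eigenpolynomial1D} to vanish, leaving $\theta_{11}^{3}\He^{[\theta_{11}]}_{3}(\lambda)$, so the ($10$-moment) system stays hyperbolic.
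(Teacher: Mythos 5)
Your proposal is correct and follows essentially the same route as the paper: reduce via Property~\ref{pro:reducible} to the first diagonal block $\hat A_0=\bB_M(\rho,\theta_{11},f_{3e_1},\dots,f_{Me_1})$ and then exhibit an admissible state at which the characteristic polynomial \eqref{eq:eigenpolynomial1D} has a non-real root. The only difference is that you actually prove the polynomial claim (via the critical-point/Rolle count for $f_{(M-1)e_1}=0$ and $|f_{Me_1}|$ large), whereas the paper simply asserts that ``certain values'' of $f_{Me_1},f_{(M-1)e_1}$ produce complex eigenvalues, deferring to the $D=1$ analysis of \cite{Fan}; your argument is a clean, uniform-in-$M$ justification of that step.
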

\begin{proof}
    To prove the theorem, we need only to prove $\bA_M^{(1)}$ is not
    always diagonalizable with real eigenvalues.  
    
    Since $\boldsymbol{P}$ is a permutation matrix, it is enough to
    examine $\bA_M'$. Property \ref{pro:reducible} shows $\bA_M'$ is a
    block lower triangular matrix, thus if $(\bA_M')_{11}$ is not
    diagonalizable with real eigenvalues, $\bA_M'$ is also not.  Since
    $(\bA_M')_{11}=\bB_M(\rho, \theta_{11}, f_{3e_1},\dots,f_{Me_1})$
    and \eqref{eq:eigenpolynomial1D} indicates that if $f_{Me_1}$ and
    $f_{(M-1)e_1}$ take certain values, $(\bA_M')_{11}$ has complex
    eigenvalues. This proves the theorem.
\end{proof}

For the case $M=2$, if $D=1$, then the moment system obtained is
exactly the Euler equations, which is hyperbolic. If $D=3$, then the
moment system is the well-known 10-moment system, which has been
studied in, e.g. \cite{Brown10m35m, Suzuki10m,Holway10m}.

\section{Globally Hyperbolic Regularization}
In this section, we propose a regularization to the moment system to
obtain a globally hyperbolic moment system, following the idea in
\cite{Fan_new}.

\subsection{In one-dimensional spatial space}\label{sec:regularization1D}
For any $3\leq M\in\bbN$, the generalized Grad-type moment system
obtained in Section. \ref{sec:expansion} gives \textit{accurate}
evolution equations for all the variables except for those
$f_{\alpha}$ with $|\alpha|=M$, since $f_{\alpha+e_d}$, $d=1,\cdots,D$
appear in the equations of them, and are taken to be zero in Grad's
closure. The regularization methods given in such as \cite{Levermore,
  Struchtrup2003, Torrilhon2010} were trying to propose a modified form
for $f_{\alpha+e_d}$, $|\alpha|=M$. Actually, noticing that the terms
$f_{\alpha+e_d}$, $|\alpha|=M$, appear only in the evolving equation
of $f_{\alpha}$, $|\alpha|=M$ in the form of its derivatives, a
reasonable regularization should only modify the evolving equations of
$f_{\alpha}$, $|\alpha|=M$ by proposing a suitable form of the
derivatives $\partial f_{\alpha+e_d}/\partial x_d$, $|\alpha|=M$,
$d=1,\cdots,D$. Property \ref{pro:reducible} show us that the
coefficient matrix $\bA'_M$ has the form
\begin{equation} \label{eq:matrix_form}
    \bA'_M=\begin{pmatrix}
        \hat{A}_0   &   0\\
        *   &   *
    \end{pmatrix},
\end{equation}
since the variables $u_1$, $\theta_{11}$ and $f_{ke_i}$, $k = 0,
\cdots, M$ are independent of the other variables. It is natural to
require the regularization to preserve such structure. The
regularization we are proposing below can fulfil all these
constraints, and at the same time achieves the global hyperbolicity.
For convenience, we call
\begin{definition}\label{def:admissible}
  A regularization for the generalized Grad-type moment system is
  {\rm{admissible}}, if
    \begin{enumerate}
        \item
            it only modifies the governing equations of $f_{\alpha}$,
            $|\alpha|=M$;
        \item
            it keeps the regularized coefficient matrix have the form
            as \eqref{eq:matrix_form}. 
    \end{enumerate}
\end{definition}

The proof of Theorem \ref{thm:nonhyperbolic} shows that
$(\bA_M')_{ii}$, $i=1,\cdots,\hat{N}$ is diagonalizable with real
eigenvalues is a necessary condition for that $\bA_M$ is
diagonalizable with real eigenvalues. In this subsection, we first
study the regularization of $(\bA_M')_{ii}$, $i=1,\cdots,\hat{N}$,
then prove that the regularization also make $\bA_M$ diagonalizable
with real eigenvalues.

As discussed above, only the last row of $\hat{A}_0$ are to be
modified in the regularization. Property \ref{pro:reducible} shows
$\hat{A}_0 = \bB_M(\rho,\theta_{11},f_{3e_1},\cdots,f_{Me_1})$. And
for $D=1$, the coefficient matrix $\bA'_M=\hat{A}_0$. In \cite{Fan},
the regularization with $D=1$ is studied in details, and the result
therein we will need later on is as below.
\begin{lemma}\label{lem:1D}
    Let 
    \[
        \tilde{\bB}_M\pd{\bw}{x}=\bB_M\pd{\bw}{x}-
        (M+1)\left(f_{Me_1}\pd{u}{x}
        +\frac{f_{M-1}}{2\rho}\left(\pd{p}{x}-\theta\pd{\rho}{x}\right)
        \right)I_{M+1},
    \]
    for any admissible $\bw$, i.e.,
    \begin{equation}\label{eq:modifyB}
        \tilde{\bB}_M=\bB_M-I_{M+1} \mathcal{R}_0^T,
    \end{equation}
    where $\mathcal{R}_0=(M+1)(-\theta f_{M-1}/2\rho, f_M,
    f_{M-1}/\rho,0,\cdots,0)^T\in\bbR^{M+1}$ and $I_{M+1}$ is the last
    column of the $(M+1)\times(M+1)$ identity matrix.  Then
    $\tilde{\bB}_M$ is diagonalizable with real eigenvalues.
    Precisely, the characteristic polynomial of $\tilde{\bB}_M$ is
    \[
        \det(\lambda\boldsymbol{I}-\tilde{\bB}_M)=
        \theta^{M+1}\He^{[\theta]}_{M+1}(\lambda),
    \]
    and the eigenvalues of $\tilde{\bB}_M$ are
    $\sqrt{\theta}\rC{1}{M+1},\cdots,\sqrt{\theta}\rC{M+1}{M+1}$,
    where $\rC{j}{k}$ is the $j$-th root of Hermite polynomial
    $\He_k(x)$, noticing that $\He_k(x), k\in\bbN$ has $k$ different
    zeros, which read $\rC{1}{k},\dots,\rC{k}{k}$, and satisfy
    $\rC{1}{k}<\dots<\rC{k}{k}$.
    Let $\br\in\bbR^{M+1}$ and
    \begin{align*}
        &r_1=1,\quad r_2=\lambda/\rho,\quad
        r_3=\lambda^2/2,\\
        &r_k=\He_{k-1}^{[\theta]}(\lambda)/(k-1)!-\lambda
        f_{k-2}/\rho-(\lambda^2-1)f_{k-3}/(2\rho),\quad k=4,\cdots,M+1,
    \end{align*}
    where $\lambda$ is an eigenvalue of $\bB_M$, then $\br$ is an
    eigenvector of $\bB_M$ for the eigenvalue $\lambda$.

    Moreover, the regularization is admissible and the admissible
    regularization to modify $\bB_M$ to be diagonalizable with real
    eigenvalues with the characteristic polynomial
    $\theta^{M+1}\He^{[\theta]}_{M+1}$ is unique.
\end{lemma}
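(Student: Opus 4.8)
The plan is to verify the three claims in turn: (i) the characteristic polynomial of $\tilde{\bB}_M$ equals $\theta^{M+1}\He^{[\theta]}_{M+1}(\lambda)$, which by the known location of the roots of Hermite polynomials gives $M+1$ distinct real eigenvalues and hence diagonalizability; (ii) the vector $\br$ exhibited is indeed an eigenvector; (iii) the admissible regularization producing this characteristic polynomial is unique. For (i), I would start from the explicit expression \eqref{eq:eigenpolynomial1D} for the characteristic polynomial of $\bB_M$ in the case $D=1$, namely $\theta^{M+1}\He^{[\theta]}_{M+1}(\lambda) - (M+1)!\big(\lambda f_M + \tfrac{\lambda^2-\theta}{2}f_{M-1}\big)$. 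The rank-one modification \eqref{eq:modifyB} only alters the last row of $\bB_M$; expanding $\det(\lambda\boldsymbol I - \tilde{\bB}_M)$ along that last row, or equivalently using the matrix-determinant lemma $\det(\lambda\boldsymbol I - \bB_M + I_{M+1}\mathcal R_0^T)$, the correction term is $\mathcal R_0^T\,\mathrm{adj}(\lambda\boldsymbol I - \bB_M)\,I_{M+1}$, which picks out precisely the cofactors appearing in the spurious term $-(M+1)!(\lambda f_M + \tfrac{\lambda^2-\theta}{2}f_{M-1})$. A direct bookkeeping check, using that $\mathcal R_0$ has nonzero entries only in its first three slots with coefficients $-\theta f_{M-1}/2\rho$, $f_M$, $f_{M-1}/\rho$ scaled by $(M+1)$, shows the correction exactly cancels the $f_M$ and $f_{M-1}$ dependent terms, leaving $\theta^{M+1}\He^{[\theta]}_{M+1}(\lambda)$. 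Since $\He_{M+1}$ has $M+1$ simple real zeros $\rC{1}{M+1}<\dots<\rC{M+1}{M+1}$, the eigenvalues are $\sqrt\theta\,\rC{j}{M+1}$ and $\tilde{\bB}_M$ is diagonalizable.

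For (ii), I would proceed as in the verification of \eqref{eq:eigenvector1D}: assume $\tilde{\bB}_M\br = \lambda\br$ with $r_1$ normalized, and read off the rows one by one. The first three rows of $\tilde{\bB}_M$ coincide with those of $\bB_M$ and immediately give $r_2 = \lambda/\rho$ and $r_3 = \lambda^2/2$; rows $4$ through $M$ are the unmodified Hessenberg rows of $\bB_M$, and a straightforward induction — using the Hermite three-term recursion to identify the emerging combination — yields $r_k = \He_{k-1}^{[\theta]}(\lambda)/(k-1)! - \lambda f_{k-2}/\rho - (\lambda^2-1)f_{k-3}/(2\rho)$ for $k=4,\dots,M+1$. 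The final row then reproduces the condition $\theta^{M+1}\He^{[\theta]}_{M+1}(\lambda)=0$, consistent with $\lambda$ being an eigenvalue, so $\br$ is a genuine eigenvector. (The statement as written asserts $\br$ is an eigenvector of $\bB_M$; I would check whether it is meant to be $\tilde{\bB}_M$, since the last-row relation only closes up after the regularization — but the first $M$ rows are shared, so the computation is identical up to that point.)

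For (iii), uniqueness: an admissible regularization modifies only the last row of $\bB_M$ (Definition \ref{def:admissible}), so $\tilde{\bB}_M = \bB_M - I_{M+1}\mathcal R^T$ for some $\mathcal R\in\bbR^{M+1}$, and the matrix-determinant lemma gives $\det(\lambda\boldsymbol I - \tilde{\bB}_M) = \det(\lambda\boldsymbol I-\bB_M) + \mathcal R^T\,\mathrm{adj}(\lambda\boldsymbol I - \bB_M)\,I_{M+1}$. Requiring this to equal $\theta^{M+1}\He^{[\theta]}_{M+1}(\lambda)$ forces $\mathcal R^T\,\mathrm{adj}(\lambda\boldsymbol I - \bB_M)\,I_{M+1} = (M+1)!(\lambda f_M + \tfrac{\lambda^2-\theta}{2}f_{M-1})$ identically in $\lambda$. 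The vector of cofactors $\mathrm{adj}(\lambda\boldsymbol I-\bB_M)I_{M+1}$, whose $j$-th component is the $(j,M+1)$-cofactor, consists of polynomials in $\lambda$ of strictly decreasing degree (the last component being $\det$ of the leading $M\times M$ Hessenberg block, degree $M$, then degree $M-1$, etc.), hence these components are linearly independent as polynomials; therefore the coefficients $\mathcal R_j$ are uniquely determined, and they must be exactly $\mathcal R_0$. I expect the main obstacle to be the cofactor bookkeeping in steps (i) and (iii): one has to compute the cofactors $(\lambda\boldsymbol I-\bB_M)$ along the last column of a Hessenberg matrix with nontrivial lower-triangular entries and confirm that their span, restricted against $\mathcal R_0$, reproduces precisely the two-parameter family $\lambda f_M + \tfrac{\lambda^2-\theta}{2}f_{M-1}$ — this is where the special structure of $\bB_M$ (that the $f_{M-2}, f_{M-3}$ and deeper entries enter lower rows but not in a way that contaminates the relevant cofactors) is really used. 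Since all of this is already carried out in detail in \cite{Fan} for the one-dimensional case, I would cite that reference for the heavy computation and only reproduce the structural argument here.
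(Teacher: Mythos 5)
The paper itself gives no proof of Lemma \ref{lem:1D}: it is imported verbatim from \cite{Fan} with the single sentence ``the regularization with $D=1$ is studied in details [there]'', so your proposal cannot be matched against an in-paper argument, only against the result and the analogous computations the paper does carry out. Judged that way, your three-step reconstruction is correct. For (i), the matrix-determinant-lemma route closes: since $\lambda\boldsymbol{I}-\bB_M$ is lower Hessenberg, its $(M+1,j)$ cofactor factors as the leading $(j-1)\times(j-1)$ characteristic minor times $\prod_{i\geq j}$ of the superdiagonal entries, giving $M!$, $\lambda M!/\rho$ and $\lambda^2 M!/2$ for $j=1,2,3$; paired with $\mathcal{R}_0$ this is $\frac{(M+1)!}{\rho}\bigl(\lambda f_M+\frac{\lambda^2-\theta}{2}f_{M-1}\bigr)$, which cancels the spurious term exactly (note that \eqref{eq:eigenpolynomial1D} as printed is missing the factor $1/\rho$ in that term; redoing the last-row eigenvector computation confirms it should be there, so your cancellation is consistent with the corrected polynomial). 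Step (ii) is precisely the row-by-row Hessenberg verification the paper performs for $\thA_1$ in Appendix \ref{sec:proofLemmabAsecond}, and your side remark is right: $\br$ is an eigenvector of $\tilde{\bB}_M$, not of $\bB_M$ as the statement literally says, since the last-row identity only closes after regularization. Step (iii) is the one ingredient not spelled out anywhere in this paper and your argument for it is sound: an admissible regularization in 1D can only alter the last row, the $(M+1,j)$ cofactors of $\lambda\boldsymbol{I}-\bB_M$ have degrees $j-1=0,\dots,M$ with nonzero leading coefficients (products of the superdiagonal entries $\rho,2/\rho,3,\dots,M$, all nonzero for admissible $\bw$), hence they are linearly independent polynomials and the modification vector is uniquely determined. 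In short, the proposal is a correct, self-contained substitute for a proof the paper merely cites.
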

\begin{remark}
  Since $f_{\alpha}$ are related to $f(t, \bx, \bxi)$ by
  \eqref{eq:expansion}, the positivity of the distribution function
  will impose some constraints on the $f_{\alpha}$.  Particularly,
  $\rho$ and $\Theta$ satisfy
  \begin{equation}\label{eq:constraints}
    \rho>0\quad \text{and} \quad\Theta
    \text{ being a symmetrical positive definite matrix.}
  \end{equation}
  Though \eqref{eq:constraints} is not enough to ensure the positivity
  of $f(t, \bx, \bxi)$, the discussion in this section requires no
  further constraints on all the other variables. Hence, the
  admissible $\bw'$ stands for the $\bw'$ satisfying
  \eqref{eq:constraints} in this section.
\end{remark}

We extend the results of $D=1$ to any dimensional case.
\begin{definition}\label{def:regularization}
    $\tbAM$ is called the regularized matrix of $\bA_M$, if it
    satisfies that for any admissible $\bw$,
    \begin{equation}\label{eq:regularization}
        \begin{split}
            \tbAM\pd{\bw}{x_1}&=\bA_M\pd{\bw}{x_1}\\
            &-\sum_{|\alpha|=M}(\alpha_1+1)\left( 
            \sum_{i=1}^Df_{\alpha+e_1-e_i}\pd{u_i}{x_1} 
            +\sum_{i,j=1}^D\frac{f_{\alpha+e_1-e_i-e_j}}{2\rho}
            \left(\pd{p_{ij}}{x_1}-\theta_{ij}\pd{\rho}{x_1}\right)
            \right)I_{\mN(\alpha)},
        \end{split}
    \end{equation}
    where $I_{k}$ is the $k$-th column of the $N\times N$ identity
    matrix.
\end{definition}

In this the definition of the regularized matrix $\tbAM$, $\tbAM$ is
obtained by changing a few entries of $\bA_M$. Precisely for any
$|\alpha|=M$, let $k=\mN(\alpha)$
\begin{align*}
    \tbAM(k,1) &= \bA_M(k,1) +
    (\alpha_1+1)\sum_{i,j=1}^D\frac{\theta_{ij}f_{\alpha+e_1-e_i-e_j}}{2\rho},\\
    \tbAM(k,d+1) &= \bA_M(k,d+1) -
    (\alpha_1+1)f_{\alpha+e_1-e_d}, \quad d=1,\dots,D,\\
    \tbAM(k,\mN(e_i+e_j)) &= \bA_M(k,\mN(e_i+e_j)) -
    (\alpha_1+1)\frac{f_{\alpha+e_1-e_i-e_j}}{\rho}
    \quad i,j=1,\dots,D.
\end{align*}
Other entries of $\tbAM$ remain the same values as those of $\bA_M$.

For convenience, we list the regularized collisionless moment system
with $\pd{\cdot}{x_2}=\cdots=\pd{\cdot}{x_D}=0$, which is the case of
1D spatial space, as following:
\begin{subequations}\label{eq:modified_ms}
\begin{align}
    &\odd{\rho}{t}+\rho\pd{u_1}{x_1}=0,
    \label{eq:modified_mass}\\
    &\odd{u_i}{t}+\frac{1}{\rho}\pd{p_{1i}}{x_1}=0,
    \label{eq:modified_momentum}\\
    &\odd{p_{ij}}{t}+p_{ij}\pd{u_1}{x_1}
    +p_{1i}\pd{u_j}{x_1}+p_{1j}\pd{u_i}{x_1}
    +(e_i+e_j+e_1)!\pd{f_{e_i+e_j+e_1}}{x_1}
    =0,\label{eq:modified_pressure}\\
    \begin{split}\label{eq:modified_f}
        &\odd{f_{\alpha}}{t} +  \sum_{k = 1}^D \theta_{1k}
        \pd{f_{\alpha - e_k}}{x_1} +
        (1-\delta_{|\alpha|,M})(\alpha_1+1)\pd{f_{\alpha+e_1}}{x_1}\\
        &\qquad+\sum_{i,j=1}^D\frac{\tilde{C}_{ij}(\alpha)}{2\rho}\left(
        \pd{p_{ij}}{x_1}-\theta_{ij}\pd{\rho}{x_1} \right)
        +  \sum_{i = 1}^D
        (1-\delta_{|\alpha|,M})(\alpha_1+1)f_{\alpha-e_i+e_1}\pd{u_i}{x_1}\\ 
        &\qquad-  \sum_{i = 1}^D \frac{f_{\alpha-e_i}}{\rho} \pd{p_{i1}}{x_1} 
        -  \sum_{i,j=1}^D \frac{(e_i+e_j+e_1)!}{2}
        \frac{f_{\alpha-e_i-e_j}}{\rho} \pd{f_{e_i+e_j+e_1}}{x_1}
        = 0,
  \end{split}
\end{align}
\end{subequations}
where $\tilde{C}_{ij}$ is 
\begin{equation}\label{eq:def_tildeCijd}
    \tilde{C}_{ij}(\alpha)= \sum_{k=1}^D\theta_{k 1} f_{\alpha - e_i -
    e_j - e_k} + (1-\delta_{|\alpha|,M})( \alpha_1 + 1) f_{\alpha-e_i-e_j+e_1}.
\end{equation}
Clearly, the equations \eqref{eq:modified_mass},
\eqref{eq:modified_momentum}, \eqref{eq:modified_pressure} and
\eqref{eq:modified_f} is the simplified formulation of the regularized
moment system, and the entries of the matrix $\tbAdotM$ can be
retrieved directly from the system.

Notice that $\tbAdotM=\boldsymbol{P}\tbAM\boldsymbol{P}^{-1}$ is the
regularized matrix of $\bA'_M$, where $\boldsymbol{P}$ is the
permutation matrix, such that $\bA'_M = \boldsymbol{P} \bA_M
\boldsymbol{P}^{-1}$. Clearly, only the rows of $\tbAdotM$
corresponding to the last rows of $\hat{A}_k$, $k=1,\dots,\hat{N}$ are
different from those of $\bA'_M$. Particularly, $\tilde{\bB}_M$ is
defined in \eqref{eq:modifyB}, and $\thA_1$ and $\thA_2$ are denoted
by
\begin{subequations}
    \begin{align}
        \thA_1 &=\hat{A}_1-I_M^{(M)}\mathcal{R}_1^T,\\
        \thA_2 &= \hat{A}_2-I_{M-1}^{(M-1)}\mathcal{R}_2^T,
    \end{align}
\end{subequations}
where $I_k^{(n)}$ is the $k$-th column of the $n\times n$ identity
matrix, and 
\begin{align*}
    \mathcal{R}_1&=M(f_{Me_1},~
    f_{(M-1)e_1}/\rho,~0,\cdots,0)^T\in\bbR^M,\\
    \mathcal{R}_2&=(M-1)(f_{(M-1)e_1}/\rho,~0,\cdots,0)^T\in\bbR^{M-1}.
\end{align*}
For $\hat{A}_k$, $k=3,\cdots,M$, we have $\thA_k=\hat{A}_k$.  Hence,
the regularization \eqref{eq:regularization} is admissible.

Here we give a note on the convention of the notations used
here. $\bA_M$ is the coefficient of the moment system
\eqref{eq:system} on the direction $x$, and
$\bA_M'=\boldsymbol{P}\bA_M\boldsymbol{P}^{-1}$ is a lower block
triangular matrix, where $\boldsymbol{P}$ satisfies
$\bw'=\boldsymbol{P}\bw$. $\hat{\alpha}=(\alpha_2,\cdots,\alpha_D)$
and $\hat{A}_{\hat{\alpha}}$, $|\hat{\alpha}|\leq M$ are diagonal
blocks of $\bA_M'$ defined in Property \ref{pro:reducible}.
$\tilde{\cdot}$ stands for the regularized matrix, such as, $\tbAM'$
is the regularized matrix of $\bA_M'$, and $\thA_{\hat{\alpha}}$ is
the regularized matrix of $\hat{A}_{\hat{\alpha}}$.

Lemma \eqref{lem:1D} shows for $D=1$, the regularization defined in
Definition \ref{def:regularization} make the coefficient matrix
diagonalizable with real eigenvalues. For arbitrary dimensional case,
we have the following results.
\begin{theorem}\label{thm:hyperbolic1}
    The regularized moment system
    \[
        \odd{\bw}{t}+\tbAM\pd{\bw}{x_1}=0
    \]
    is globally hyperbolic for any admissible $\bw$.
\end{theorem}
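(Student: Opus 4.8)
Since $\boldsymbol{P}$ is a permutation matrix, $\tbAM$ is diagonalizable with real eigenvalues if and only if $\tbAdotM=\boldsymbol{P}\tbAM\boldsymbol{P}^{-1}$ is, so I would work with $\tbAdotM$. By Property \ref{pro:reducible} and the fact that the regularization is admissible (only the last rows of the diagonal blocks $\hat{A}_{\hat\alpha}$ are altered), $\tbAdotM$ is again block lower triangular, with diagonal blocks $\thA_{\hat\alpha}=\thA_{|\hat\alpha|}$, $\hat\alpha\in\cS_{D-1,M}$; the block $\thA_k$ has size $M+1-k$ and occurs $\binom{k+D-2}{D-2}$ times. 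So the argument splits into (i) the spectral structure of each $\thA_k$ and (ii) controlling the interaction of the diagonal blocks through the strictly lower part.

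For (i), I claim that for every $k=0,\dots,M$ the matrix $\thA_k$ is diagonalizable with $M+1-k$ \emph{distinct} real eigenvalues $\sqrt{\theta_{11}}\,\rC{1}{M+1-k},\dots,\sqrt{\theta_{11}}\,\rC{M+1-k}{M+1-k}$, its characteristic polynomial being a positive multiple of $\He^{[\theta_{11}]}_{M+1-k}$. For $k=0$, $\thA_0=\tilde{\bB}_M$ and this is Lemma \ref{lem:1D}. For $k=1,2$ one checks that the rank-one corrections $\thA_1=\hat{A}_1-I_M^{(M)}\mathcal{R}_1^{T}$ and $\thA_2=\hat{A}_2-I_{M-1}^{(M-1)}\mathcal{R}_2^{T}$ annihilate exactly the non-Hermite remainders in \eqref{eq:characteristicpolynomial_hatA1} and \eqref{eq:characteristicpolynomial_hatA2}. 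For $k\ge 3$, $\thA_k=\hat{A}_k$ is an unreduced tridiagonal (Jacobi-type) matrix whose three-term recurrence is the Hermite recurrence, so its characteristic polynomial is $\He^{[\theta_{11}]}_{M+1-k}$. In every case the eigenvalues are the simple scaled zeros of a Hermite polynomial, and the eigenvector for eigenvalue $\lambda$ can be written explicitly in terms of $\He^{[\theta_{11}]}_{j}(\lambda)$ by the recursion already used before Property \ref{pro:reducible}.

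For (ii), the spectrum of $\tbAdotM$ is the union, with multiplicity, of the spectra of its diagonal blocks, so all eigenvalues are real, and the algebraic multiplicity of an eigenvalue $\lambda$ is the number of diagonal blocks having $\lambda$ as an eigenvalue (each simply). It thus remains to exhibit, for each $\lambda$, that many linearly independent eigenvectors of $\tbAdotM$. Ordering the diagonal blocks $1,\dots,\hat N$ along the block-triangular structure, for each block $i_s$ owning $\lambda$ I would build an eigenvector by seeding it with the block-Hermite eigenvector from (i) in block $i_s$, putting all earlier blocks to zero, and then solving the remaining blocks downward: blocks not owning $\lambda$ have invertible shifted operator and are uniquely determined, while at each later block $i_t$ owning $\lambda$ one faces a Fredholm condition, namely that the accumulated off-diagonal contribution lie in the range of $\thA_{i_t}-\lambda I$. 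Conjecture \ref{con:He} is what keeps this under control: for $\lambda\neq 0$, two different blocks $\thA_k,\thA_{k'}$ with $k\neq k'$ cannot share $\lambda$, since that would force $\He_{M+1-k}$ and $\He_{M+1-k'}$ to have a common \emph{nonzero} zero; hence for $\lambda\ne 0$ the blocks owning $\lambda$ are all copies of one and the same $\thA_{k_0}$ (in particular a singleton when $D=2$).

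The main obstacle is the remaining coincidences: repeated identical blocks, which can occur only for $D\ge 3$, and, unavoidably, the eigenvalue $\lambda=0$, which is a zero of $\He_n$ for every odd $n$ and so is an eigenvalue of every block $\thA_k$ with $M+1-k$ odd, with algebraic multiplicity $\sum_{k:\,M+1-k\text{ odd}}\binom{k+D-2}{D-2}>1$. For these cases Conjecture \ref{con:He} gives nothing and one must verify the Fredholm conditions directly; I would do so by carrying out the downward solve explicitly, using the special values $\He^{[\theta_{11}]}_n(0)$ (which vanish for odd $n$ and equal $(-\theta_{11})^{n/2}(n-1)!!$ for even $n$) together with the explicit off-diagonal entries read off from \eqref{eq:hat0}--\eqref{eq:hat2} and their higher-dimensional analogues, and checking that every accumulated right-hand side indeed lands in the range of the next singular block, so that $\dim\ker(\lambda I-\tbAdotM)$ matches the algebraic multiplicity of $\lambda$ for every $\lambda$. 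Once this is in place, $\tbAdotM$ has $N$ linearly independent real eigenvectors, so $\tbAdotM$ and hence $\tbAM$ are diagonalizable with real eigenvalues, and the regularized system \eqref{eq:modified_ms} is globally hyperbolic on the admissible set \eqref{eq:constraints}.
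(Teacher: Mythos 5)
Your proposal follows essentially the same route as the paper: pass to the permuted block lower triangular matrix $\tbAdotM$, show each diagonal block $\thA_k$ is diagonalizable with real eigenvalues given by scaled Hermite zeros (the paper's Lemmas \ref{lem:1D}, \ref{lem:bAsecond}, \ref{lem:bAthird}, \ref{lem:bAelse} and \ref{lem:eigenpolynomial}), and then extend each block eigenvector to a ``proper prolongation'' that is an eigenvector of the full matrix, invoking Conjecture \ref{con:He} for the simplicity of nonzero eigenvalues and an explicit downward solve for $\lambda=0$ and coinciding blocks (the paper's Lemma \ref{lem:properprolongation} and Property \ref{pro:prolongation}). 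The argument is correct and matches the paper's proof, including its honest dependence on the conjecture and on the case-by-case verification carried out in the appendix.
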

To prove this theorem, we need to verify the regularized matrix
$\tbAM$ is diagonalizable with real eigenvalues for any admissible
$\bw$. Since $\tbAdotM$ is a similar matrix of $\tbAM$, next we first
study the eigenvalues and eigenvectors of $\thA_k$, $k=1,\cdots,M$,
then we can obtain the characteristic polynomial of $\tbAdotM$, and
verify that all the eigenvalues of $\tbAdotM$ are real. Furthermore,
any eigenvector of $\thA_k$, $k=0,\cdots,M$ can be extended to an
eigenvector of $\tbAdotM$ under relevant constraints, and then we can
prove $\tbAdotM$ have $N$-linearly independent eigenvectors, which
means $\tbAdotM$ is diagonalizable.

Since $\thA_1$ is a lower Hessenberg matrix, it is possible to
calculate its eigenvector, once the eigenvalue is given. Actually, we
have the following lemma.
\begin{lemma}\label{lem:bAsecond}
  The matrix
  $\thA_1\in\bbR^{M\times M}$ is diagonalizable with real eigenvalues
  for any $\rho>0$, $\theta_{11}>0$, $f_{ke_1}\in\bbR$,
  $k=3,\cdots,M-1$.  Precisely, its characteristic polynomial is
  \begin{equation}
      \det(\lambda\boldsymbol{I}-\thA_1)=
      \theta_{11}^{M}\He^{[\theta_{11}]}_{M}(\lambda),
  \end{equation}
and the eigenvalues of $\thA_1$ are
$\sqrt{\theta_{11}}\rC{1}{M},\dots,\sqrt{\theta_{11}}\rC{M}{M}$.
Let $\br\in\bbR^M$ and 
\begin{align*}
    &r_1=1,\quad r_2=\rho\lambda,\quad
    r_k=\rho\He_{k-1}^{[\theta_{11}]}(\lambda)/(k-1)!-f_{(k-1)e_1}-\lambda
    f_{(k-2)e_1},\quad k=2,\dots,M,
\end{align*}
then $\br$ is an eigenvector of $\thA_1$ for the eigenvalue $\lambda$.
\end{lemma}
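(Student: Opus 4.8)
The plan is to treat $\thA_1$ as an unreduced lower Hessenberg matrix, just as was done for $\hat{A}_1$. The regularization $\thA_1=\hat{A}_1-I_M^{(M)}\mathcal{R}_1^T$ alters only the last row of $\hat{A}_1$, and that row carries no super-diagonal entry, so $\thA_1$ keeps the whole super-diagonal of $\hat{A}_1$, namely $\rho^{-1},2,3,\dots,M-1$, all nonzero since $\rho>0$. Consequently, for any scalar $\lambda$ the system $\thA_1\br=\lambda\br$ forces $r_1\neq0$ (if $r_1=0$, rows $1,\dots,M-1$ give $r_2=\cdots=r_M=0$ by forward substitution), so after normalizing $r_1=1$ the first $M-1$ rows determine $r_2,\dots,r_M$ uniquely, while the last row imposes a single scalar condition on $\lambda$. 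In particular every eigenvalue of $\thA_1$ is geometrically simple, and the entire argument reduces to a one-variable computation.

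First I would run this forward substitution. Because the first $M-1$ rows of $\thA_1$ coincide with those of $\hat{A}_1$, the recursion is exactly the one already carried out for $\hat{A}_1$ preceding this lemma: using the three-term recursion $\lambda\He^{[\theta_{11}]}_n(\lambda)=\theta_{11}\He^{[\theta_{11}]}_{n+1}(\lambda)+n\He^{[\theta_{11}]}_{n-1}(\lambda)$ together with $f_0=\rho$ and $f_{e_1}=f_{2e_1}=0$, one recovers the eigenvector $\br$ displayed in the statement (the $k=2$ entry being $r_2=\rho\lambda$). The only genuinely new equation is the last one, $\thA_1(M,:)\br=\lambda r_M$. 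Subtracting $\mathcal{R}_1^T$ was designed precisely to cancel the $f_{Me_1}$ and $f_{(M-1)e_1}$ terms that, for the unregularized $\hat{A}_1$, turned the last-row condition into \eqref{eq:characteristicpolynomial_hatA1}; after substituting $r_1,\dots,r_M$ and applying the recursion relation one more time, the surviving condition collapses to $\He^{[\theta_{11}]}_M(\lambda)=0$, equivalently $\theta_{11}^M\He^{[\theta_{11}]}_M(\lambda)=0$ after clearing the $\theta_{11}$-denominators.

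It then remains to assemble the pieces. The classical fact (recalled in Lemma~\ref{lem:1D}) that $\He_M$ has $M$ distinct real zeros $\rC{1}{M}<\cdots<\rC{M}{M}$ shows that $\theta_{11}^M\He^{[\theta_{11}]}_M(\lambda)$ is a monic polynomial of degree $M$ whose roots are the $M$ distinct reals $\sqrt{\theta_{11}}\,\rC{1}{M},\dots,\sqrt{\theta_{11}}\,\rC{M}{M}$. By the previous paragraph these are exactly the eigenvalues of $\thA_1$; since $\det(\lambda\boldsymbol{I}-\thA_1)$ is monic of degree $M$ with precisely this root set, it must equal $\theta_{11}^M\He^{[\theta_{11}]}_M(\lambda)$, and $M$ distinct real eigenvalues — each geometrically simple, with eigenvector as computed above — make $\thA_1$ diagonalizable with real eigenvalues.

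The main obstacle is the bookkeeping in the forward substitution and, above all, in verifying that the regularized last row collapses exactly to $\He^{[\theta_{11}]}_M(\lambda)=0$: one must track how the three-term recursion interlocks with the correction terms $-f_{(k-1)e_1}-\lambda f_{(k-2)e_1}$ and confirm that subtracting $I_M^{(M)}\mathcal{R}_1^T$ annihilates precisely the remainder that is not a multiple of $\He^{[\theta_{11}]}_M$. This is routine but delicate; since it parallels the $D=1$ computation behind Lemma~\ref{lem:1D}, I would organize it as a short induction on $k$ rather than a brute-force determinant expansion. (Should an admissibility or uniqueness statement for this block also be wanted, it follows by the same reasoning as the corresponding claim in Lemma~\ref{lem:1D}.)
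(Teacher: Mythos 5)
Your proposal is correct and follows essentially the same route as the paper's proof in Appendix~\ref{sec:proofLemmabAsecond}: exploit the lower Hessenberg structure to determine the eigenvector by forward substitution through the first $M-1$ rows (which coincide with those of $\hat{A}_1$), show that the regularized last row reduces the eigenvalue condition to $\He^{[\theta_{11}]}_{M}(\lambda)=0$, and conclude diagonalizability from the $M$ distinct real zeros of $\He_M$. The only cosmetic difference is that the paper verifies the displayed eigenvector row by row rather than deriving it, and leaves the nonvanishing of $r_1$ implicit.
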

The proof is trivial but rather tedious, which is presented in the
Appendix \ref{sec:proofLemmabAsecond}. Analogously, the matrix
$\thA_2\in\bbR^{(M-1)\times (M-1)}$ has the following properties.
\begin{lemma}\label{lem:bAthird}
The matrix $\thA_2\in\bbR^{(M-1)\times (M-1)}$ is diagonalizable with
real eigenvalues for any $\rho>0$, $\theta_{11}>0$, $f_{ke_1}\in\bbR$,
$k=3,\cdots,M-2$.  Precisely, its characteristic polynomial is
\begin{equation}
    \det(\lambda\boldsymbol{I}-\thA_2)=
    \theta_{11}^{M-1}\He^{[\theta_{11}]}_{M-1}(\lambda),
\end{equation}
and the eigenvalues of $\thA_2$ are
$\sqrt{\theta_{11}}\rC{1}{M-1},\cdots,\sqrt{\theta_{11}}\rC{M-1}{M-1}$.
Let $\br\in\bbR^{M-1}$ and 
\begin{align*}
    &r_1=1,\quad 
    r_k=\He_{k-1}^{[\theta_{11}]}(\lambda)/(k-1)!-f_{(k-1)e_1},
    \quad k=2,\dots,M-1,
\end{align*}
then $\br$ is an eigenvector of $\thA_2$ for the eigenvalue $\lambda$.
\end{lemma}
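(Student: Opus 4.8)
The plan is to mirror the proof of Lemma \ref{lem:bAsecond}, exploiting two structural facts about $\thA_2$: it is a lower Hessenberg matrix whose super-diagonal entries $\thA_2(k,k+1)=k$, $k=1,\dots,M-2$, are all nonzero, and the regularization $\thA_2=\hat{A}_2-I_{M-1}^{(M-1)}\mathcal{R}_2^T$ is a rank-one update that alters only row $M-1$ (and, within that row, only column $1$). Consequently, for every candidate eigenvalue $\lambda$ the equations obtained from rows $1,\dots,M-2$ of $\thA_2\br=\lambda\br$ coincide with those obtained from $\hat{A}_2\br=\lambda\br$; only the closing equation furnished by row $M-1$ is changed.

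First I would show that any eigenvector of $\thA_2$ has $r_1\neq 0$: if $r_1=0$, then row $1$ (whose only nonzero off-first-column entry is the super-diagonal $1$) forces $r_2=0$, then row $2$ forces $r_3=0$, and so on down to $r_{M-1}=0$, a contradiction. Normalising $r_1=1$, rows $1,\dots,M-2$ then determine $r_2,\dots,r_{M-1}$ \emph{uniquely} as polynomials in $\lambda$, so each eigenspace of $\thA_2$ is one-dimensional; moreover row $1$ gives $r_2=\lambda$. Next, reading the diagonal block $\hat{A}_2$ off the moment equation \eqref{eq:momentsystem} together with Property \ref{pro:reducible} (so that its row $k$, for $3\le k\le M-2$, is the evolution equation of $f_{(k-1)e_1+2e_2}$ in the $D=2$ picture), one finds that the only within-block entries of such a row are $\theta_{11}$ on the sub-diagonal, $k$ on the super-diagonal, and two entries in columns $1$ and $2$ coming respectively from the coefficient $C_{221}(\cdot)$ in \eqref{eq:def_Cijd} and from the flux term involving $f_{\alpha-e_i-e_j}$ and $f_{e_i+e_j+e_1}$; all remaining within-block entries vanish because $f_{e_1}=f_{2e_1}=0$ by \eqref{eq:freeparameter} and because the other couplings leave the block. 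Substituting into $\thA_2(k,:)\br=\lambda r_k$, using $r_1=1$, $r_2=\lambda$, and the three-term recursion $\He_{k}^{[\theta_{11}]}(\lambda)=\lambda\He_{k-1}^{[\theta_{11}]}(\lambda)-(k-1)\theta_{11}\He_{k-2}^{[\theta_{11}]}(\lambda)$ for the generalized Hermite polynomials, I would then prove by induction on $k$ that $r_k$ has the form stated in the lemma, the $f$-contributions telescoping exactly.

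It remains to impose the closing equation, row $M-1$ of $\thA_2\br=\lambda\br$. Here the regularization has replaced $\hat{A}_2(M-1,1)$ by $\theta_{11}f_{(M-3)e_1}/\rho$, i.e. it subtracts precisely the piece that produced the correction term $(-1)^M(M-1)!f_{(M-1)e_1}/\rho$ in the characteristic polynomial \eqref{eq:characteristicpolynomial_hatA2} of $\hat{A}_2$. Substituting the closed forms of $r_{M-2}$ and $r_{M-1}$ and invoking the same recursion once more, row $M-1$ collapses to $\He_{M-1}^{[\theta_{11}]}(\lambda)=0$. Hence $\lambda$ is an eigenvalue of $\thA_2$ if and only if $\He_{M-1}^{[\theta_{11}]}(\lambda)=0$; since $\thA_2\in\bbR^{(M-1)\times(M-1)}$ and $\He_{M-1}^{[\theta_{11}]}$ has degree $M-1$, this is, up to the leading constant, the characteristic polynomial of $\thA_2$, namely $\theta_{11}^{M-1}\He_{M-1}^{[\theta_{11}]}(\lambda)$. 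Because the classical Hermite polynomial $\He_{M-1}(x)$ has $M-1$ distinct real zeros, $\thA_2$ has the $M-1$ distinct real eigenvalues $\sqrt{\theta_{11}}\,\rC{1}{M-1},\dots,\sqrt{\theta_{11}}\,\rC{M-1}{M-1}$ and is therefore diagonalizable with real eigenvalues, with the vector $\br$ constructed above an eigenvector for each of them.

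The main obstacle is the bookkeeping of the middle step: correctly identifying, after the block reduction, exactly which entries of \eqref{eq:momentsystem} survive inside the diagonal block $\hat{A}_2$, pinning down their coefficients, signs and shifted multi-indices against \eqref{eq:def_Cijd}, and then carrying the induction through without index slips (and checking the low rows $k=1,2$ separately, where several of the generic contributions coincide or drop out). This is routine but tedious, conceptually the same computation as for $\thA_1$, and (as with Lemma \ref{lem:bAsecond}) is best relegated to Appendix \ref{sec:proofLemmabAsecond}.
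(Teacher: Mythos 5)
Your proposal follows exactly the route the paper intends: it declares the proof ``almost the same as that of Lemma~\ref{lem:bAsecond}'' and omits it, and the Appendix~\ref{sec:proofLemmabAsecond} argument you mirror --- row-by-row verification of the eigenvector via the lower Hessenberg structure and the three-term recursion $\He_{n+1}^{[\theta_{11}]}(\lambda)+n\He_{n-1}^{[\theta_{11}]}(\lambda)=\lambda\He_n^{[\theta_{11}]}(\lambda)$, with the single regularized last row collapsing to $\He_{M-1}^{[\theta_{11}]}(\lambda)=0$ and diagonalizability following from the $M-1$ distinct real zeros --- is precisely that computation adapted to $\thA_2$. Your identification of the modified entry $\thA_2(M-1,1)=\theta_{11}f_{(M-3)e_1}/\rho$ and of the within-block couplings matches \eqref{eq:hat2}, so the proposal is correct and takes essentially the same approach as the paper.
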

For the matrix $\thA_n$, $n=3,\dots,M$, we have the following results.
\begin{lemma}\label{lem:bAelse}
The matrix $\thA_n$, $n=3,\cdots,M$ is diagonalizable with real
eigenvalues for any $\theta_{11}>0$. Precisely, let $m=M+1-n$, then
the characteristic polynomial of $\thA_n$ is
\begin{equation}
    \det(\lambda\boldsymbol{I}-\thA_n )=
    \theta_{11}^{m}\He^{[\theta_{11}]}_{m}(\lambda),
\end{equation}
and the eigenvalues of $\thA_n$ are
$\sqrt{\theta_{11}}\rC{1}{m},\dots,\sqrt{\theta_{11}}\rC{m}{m}$.
Let $\br\in\bbR^m$ and 
\begin{align*}
    r_k=\He_{k-1}^{[\theta_{11}]}(\lambda)/(k-1)!, \quad k=1,\dots,m,
\end{align*}
then $\br$ is an eigenvector of $\thA_n$ for the eigenvalue $\lambda$.
\end{lemma}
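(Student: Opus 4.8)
The plan is to reduce everything to a single fixed tridiagonal pattern. Since $\thA_n=\hat A_n$ for $n=3,\dots,M$, the matrix in question is, up to its size $m=M+1-n$, the tridiagonal matrix with vanishing diagonal, super-diagonal entries $1,2,\dots,m-1$, and sub-diagonal entries all equal to $\theta_{11}$; it carries no $\rho$ and no $f_\alpha$ whatsoever. Hence it suffices to treat one generic $m\times m$ matrix of this form for every $1\le m\le M-2$: no induction on $n$ is needed, and this is the easiest of Lemmas \ref{lem:bAsecond}--\ref{lem:bAelse}.

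First I would compute the characteristic polynomial. Expanding $\det(\lambda\boldsymbol I-\hat A_n)$ along the last row gives the three-term recurrence $\Delta_k=\lambda\Delta_{k-1}-(k-1)\theta_{11}\Delta_{k-2}$ for the determinants $\Delta_k$ of the leading $k\times k$ principal blocks, with $\Delta_0=1$ and $\Delta_1=\lambda$. From the recursion relation for the generalized Hermite polynomials, $v\,\He^{[\theta_{11}]}_k=\theta_{11}\He^{[\theta_{11}]}_{k+1}+k\He^{[\theta_{11}]}_{k-1}$ ($D=1$ specialization of the first listed property in Section~\ref{sec:expansion}), one checks that the polynomials $\theta_{11}^{k}\He^{[\theta_{11}]}_{k}(\lambda)$ satisfy exactly the same recurrence with the same initial data; therefore $\Delta_m(\lambda)=\theta_{11}^{m}\He^{[\theta_{11}]}_{m}(\lambda)$, the asserted characteristic polynomial. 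Because $\He_m$ has $m$ distinct real zeros $\rC{1}{m}<\dots<\rC{m}{m}$ (the classical fact recalled in Appendix \ref{sec:HermitePolynomial}) and $\theta_{11}>0$, the matrix has the $m$ distinct real eigenvalues $\sqrt{\theta_{11}}\,\rC{j}{m}$, and a matrix with $m$ distinct eigenvalues is diagonalizable. (Alternatively, a real diagonal similarity $\mathrm{diag}(d_1,\dots,d_m)$ with $d_{k+1}/d_k=\sqrt{k/\theta_{11}}$ turns $\hat A_n$ into $\sqrt{\theta_{11}}$ times the symmetric Hermite Jacobi matrix, settling reality and simplicity of the spectrum at once.)

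For the eigenvectors I would exploit that a tridiagonal matrix with nonzero super-diagonal has, for each eigenvalue $\lambda$, a one-dimensional eigenspace obtained by forward substitution from the first component: if $r_1=0$ then successively $r_2=0,\ r_3=0,\dots$, so necessarily $r_1\neq0$, and we may fix $r_1=1$. The first equation of $\hat A_n\br=\lambda\br$ then forces $r_2=\lambda$, while rows $2,\dots,m-1$ give $k\,r_{k+1}=\lambda r_k-\theta_{11}r_{k-1}$; a short induction using the $\He^{[\theta_{11}]}$ recursion verifies that the vector $\br$ in the statement solves these equations. The last ($m$-th) equation reads $\theta_{11}r_{m-1}=\lambda r_m$, which, once the first $m-1$ relations are used, is precisely $\He^{[\theta_{11}]}_m(\lambda)=0$ — automatically satisfied since $\lambda$ is an eigenvalue. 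Collecting the $m$ eigenpairs (independent because the eigenvalues are simple) completes the argument.

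I do not expect a genuine obstacle: the reasoning is routine, of the same flavour as — and simpler than — the proofs of Lemmas \ref{lem:bAsecond}--\ref{lem:bAthird} in Appendix \ref{sec:proofLemmabAsecond}, since here the only entries are $0$, the integers $1,\dots,m-1$, and $\theta_{11}$. The only points needing care are the bookkeeping of the factors $(k-1)!$ and of the powers of $\theta_{11}$, so that the determinant and eigenvector recursions line up with the three-term recurrence for $\He^{[\theta_{11}]}$ exactly, and the invocation (from Appendix \ref{sec:HermitePolynomial}) that $\He_m$ has $m$ simple real zeros, which is what upgrades ``real eigenvalues'' to ``diagonalizable''. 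I would therefore carry out the detailed computation in the appendix.
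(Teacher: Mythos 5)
Your proposal is correct, and it fills in essentially the argument the paper has in mind: the paper omits this proof, deferring to the row-by-row verification used for Lemma \ref{lem:bAsecond}, and your forward-substitution construction of the eigenvector (using that $r_1\neq0$ because the super-diagonal never vanishes) is that same verification read in the opposite direction, with the last row reducing to $\He^{[\theta_{11}]}_m(\lambda)=0$ exactly as in Appendix \ref{sec:proofLemmabAsecond}. Your one real variation is deriving the characteristic polynomial from the three-term recurrence $\Delta_k=\lambda\Delta_{k-1}-(k-1)\theta_{11}\Delta_{k-2}$ for the leading principal minors, rather than from ``a monic degree-$m$ polynomial all of whose roots are eigenvalues''; both work, and your symmetrization aside is in fact the cleanest way to get reality and simplicity of the spectrum in one stroke --- but the ratio there should be $d_{k+1}/d_k=\sqrt{\theta_{11}/k}$, not $\sqrt{k/\theta_{11}}$, so that both off-diagonal entries become $\sqrt{k\,\theta_{11}}$. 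The bookkeeping you flagged does bite: with the appendix normalization $\He^{[\theta]}_k(x)=\theta^{-k/2}\He_k(x/\sqrt{\theta})$, the minor recurrence is solved by $\theta_{11}^{k}\He^{[\theta_{11}]}_{k}(\lambda)$ (as you state), and correspondingly forward substitution from $r_1=1$ yields $r_k=\theta_{11}^{k-1}\He^{[\theta_{11}]}_{k-1}(\lambda)/(k-1)!$; the formula printed in the lemma matches only the monic normalization $\theta^{k-1}\He^{[\theta]}_{k-1}$, an inconsistency already visible in Lemma \ref{lem:1D}. This is a defect of the statement's normalization, not of your argument, which is sound.
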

The proof of these two lemmas are almost the same as that of Lemma
\ref{lem:bAsecond} thus we omit it.

Property \ref{pro:reducible} and Lemma \ref{lem:1D},
\ref{lem:bAsecond}, \ref{lem:bAthird} and \ref{lem:bAelse} show the
regularized matrix $\tbAdotM$ is a block lower triangular matrix and
the characteristic polynomial of each diagonal block is known.  Since
$\tbAdotM$ is a similar matrix of $\tbAM$, we have the following
result on the characteristic polynomial of $\tbAM$.
\begin{lemma}\label{lem:eigenpolynomial}
    Let
\begin{align}
    \mathcal{P}_{1,m} &= \theta_{11}^{m+1}\He^{[\theta_{11}]}_{m+1}(\lambda),\quad m\in\bbN,\\
\mathcal{P}_{d,m} &=
\prod_{k=0}^{m}\mathcal{P}_{d-1,k}, \quad 1<d\in\bbN^+ .\label{eq:eigenpolynomials}
\end{align}
$\mathcal{P}_{D,M}$ is the characteristic polynomial of
 $\tbAM$.
\end{lemma}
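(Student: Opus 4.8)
The plan is to exploit the block lower triangular structure of the permuted regularized matrix and then reduce everything to a purely combinatorial identity among the polynomials $\mathcal{P}_{d,m}$.

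First, since $\tbAdotM=\boldsymbol{P}\tbAM\boldsymbol{P}^{-1}$ is similar to $\tbAM$, the two matrices share the same characteristic polynomial, so it suffices to compute $\det(\lambda\boldsymbol{I}-\tbAdotM)$. By Property \ref{pro:reducible} together with the discussion after Definition \ref{def:regularization} (which shows the regularization is admissible and only modifies the last rows of the diagonal blocks), $\tbAdotM$ is block lower triangular, with diagonal blocks $\thA_{\hat\alpha}=\thA_{|\hat\alpha|}$ indexed by $\hat\alpha\in\bbN^{D-1}$, $|\hat\alpha|\le M$, where $\thA_{|\hat\alpha|}\in\bbR^{(M+1-|\hat\alpha|)\times(M+1-|\hat\alpha|)}$. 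Hence
\[
  \det(\lambda\boldsymbol{I}-\tbAM)=\det(\lambda\boldsymbol{I}-\tbAdotM)=\prod_{\substack{\hat\alpha\in\bbN^{D-1}\\ |\hat\alpha|\le M}}\det(\lambda\boldsymbol{I}-\thA_{|\hat\alpha|}).
\]

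Next, I invoke Lemma \ref{lem:1D} for $\thA_0=\tilde{\bB}_M$, and Lemmas \ref{lem:bAsecond}, \ref{lem:bAthird}, \ref{lem:bAelse} for $\thA_1$, $\thA_2$, and $\thA_n$ with $n\ge3$; these together give, for every $n=0,1,\dots,M$,
\[
  \det(\lambda\boldsymbol{I}-\thA_n)=\theta_{11}^{M+1-n}\He^{[\theta_{11}]}_{M+1-n}(\lambda)=\mathcal{P}_{1,M-n}.
\]
Therefore $\det(\lambda\boldsymbol{I}-\tbAM)=\prod_{\hat\alpha\in\bbN^{D-1},\,|\hat\alpha|\le M}\mathcal{P}_{1,M-|\hat\alpha|}$, and it only remains to identify the right-hand side with $\mathcal{P}_{D,M}$.

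Finally, I prove by induction on $d\ge1$ the identity $\mathcal{P}_{d,m}=\prod_{\beta\in\bbN^{d-1},\,|\beta|\le m}\mathcal{P}_{1,m-|\beta|}$ for all $m\in\bbN$, and apply it with $d=D$, $m=M$. The case $d=1$ is vacuous (the product reduces to the single factor $\mathcal{P}_{1,m}$). For the inductive step, the defining relation $\mathcal{P}_{d,m}=\prod_{k=0}^m\mathcal{P}_{d-1,k}$ and the inductive hypothesis give
\[
  \mathcal{P}_{d,m}=\prod_{k=0}^m\ \prod_{\substack{\gamma\in\bbN^{d-2}\\ |\gamma|\le k}}\mathcal{P}_{1,k-|\gamma|};
\]
setting $\beta_1=m-k$ and $\beta=(\beta_1,\gamma)\in\bbN^{d-1}$, the pair $(k,\gamma)\mapsto\beta$ is a bijection between $\{(k,\gamma):0\le k\le m,\ \gamma\in\bbN^{d-2},\ |\gamma|\le k\}$ and $\{\beta\in\bbN^{d-1}:|\beta|\le m\}$, under which $\mathcal{P}_{1,k-|\gamma|}=\mathcal{P}_{1,m-|\beta|}$, yielding the claimed formula. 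The main (and essentially only) obstacle is this last bookkeeping step: one must check that each $\thA_n$ occurs as a diagonal block of $\tbAdotM$ with multiplicity equal to the number of $\hat\alpha\in\bbN^{D-1}$ with $|\hat\alpha|=M-n$, and that these multiplicities are exactly reproduced by the nested-product definition of $\mathcal{P}_{D,M}$ — which is precisely what the induction above verifies. Everything else is direct bookkeeping: the similarity invariance of the characteristic polynomial, the block-triangular determinant formula, and the explicit characteristic polynomials of the blocks supplied by the preceding lemmas.
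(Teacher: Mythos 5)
Your proof is correct and follows essentially the same route as the paper: similarity to the permuted matrix $\tbAdotM$, the block lower triangular determinant factorization over the diagonal blocks $\thA_{|\hat\alpha|}$, and the block characteristic polynomials from Lemmas \ref{lem:1D}, \ref{lem:bAsecond}, \ref{lem:bAthird} and \ref{lem:bAelse}. The only difference is that you carry out explicitly (via the bijection $(k,\gamma)\mapsto\beta$) the multiplicity count that the paper compresses into the factor $\binom{M-m+D-2}{D-2}$ and an appeal to ``induction on $D$''.
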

\begin{proof}
    If $D=1$, it is part of Lemma \ref{lem:1D}. Next we consider the
    case $D \ge 2$.  Since $\tbAM$ is similar to $\tbAdotM$, the
    characteristic polynomial of $\tbAM$ is same as that of
    $\tbAdotM$. Thus,
    \begin{align*}
        \det(\lambda\boldsymbol{I}-\tbAM) &=
        \det(\lambda\boldsymbol{I}-\tbAdotM)\\
        &=\prod_{|\hat{\alpha}|\leq M} 
        \det(\lambda\boldsymbol{I}-\tilde{A}_{\hat{\alpha}})\\
        &=\prod_{m=0}^{M}\left(\theta_{11}^{m+1}\He^{[\theta_{11}]}_{m+1}
        \right)^{\binom{D-2}{M-m+D-2}}\\
        &=\mathcal{P}_{D,M}.
    \end{align*}
    The second equality is then obtained by induction on $D$.
\end{proof}

Until now, we have revealed that the regularized matrix $\tbAdotM$ is
a block lower triangular matrix and each diagonal block is
diagonalizable with real eigenvalues. Unfortunately, it is not
sufficient to conclude that $\tbAdotM$ is diagonalizable yet, since
some eigenvalues may be not semi-simple. This pushes us to clarify the
structure of the eigen-subspace of $\tbAdotM$. Actually, we will
demonstrate that based on any eigenvector of $\thA_k$, $k=0,\cdots,M$,
we can construct an eigenvector of $\tbAdotM$. For this purpose, we
start with an example.
\begin{example}\label{exam:eigenvectorprolongation}
    Consider the block lower triangular matrix
    \begin{equation}
        \bA =
        \left(\begin{array}{ccc|cc}
            1   &   14  &   36  & 0 & 0 \\
            16  &   -10 &   -54 & 0 & 0 \\
            -10 &   4   &   27  & 0 & 0 \\
            \hline
            2   &   1   &   1   &   6   &   3\\
            4   &   2   &   2  &   6   &   9
        \end{array}
        \right)
        =\begin{pmatrix}
            A_{11}  &   \mathbf{0}\\
            A_{21}  &   A_{22}
        \end{pmatrix}.
    \end{equation}
    The eigenvalues and eigenvectors of $A_{11}$ and $A_{22}$ that
    \begin{align*}
        A_{11}:&\quad \lambda_1=3, \br_1=(16, -26,11)^T;~
        \lambda_2=6, \br_2=(10, -17, 8)^T;~
        \lambda_3=9, \br_3=(1, -2, 1)^T;\\
        A_{22}:&\quad \lambda_4=3, \br_4=(1, -1)^T;~
        \lambda_5=12, \br_5=(1, 2)^T.
    \end{align*}
    Now we examine whether there is $\bR_i\in\bbR^5$ and
    $\bR_i(1:3)=\br_i$, $i=1,2,3$, satisfying
    $\bA\bR_i=\lambda_i\bR_i$. Actually, it is equivalent to whether
    there is a solution of 
    $A_{21}\br_i+(A_{22}-\lambda_i\boldsymbol{I})\bR_i(4:5)=0$, 
    which has a solution if and only if the augmented matrix of
    $A_{22}-\lambda\boldsymbol{I}$ has the same rank as
    $A_{22}-\lambda\boldsymbol{I}$, i.e.
    \begin{equation}\label{eq:rank_augmented}
        \rank([A_{22}-\lambda_i\boldsymbol{I}, A_{21}\br_i]) 
        = \rank(A_{22}-\lambda_i\boldsymbol{I}).
    \end{equation}
    For $i=2,3$, since $A_{22}-\lambda_i\boldsymbol{I}$ is nonsigular,
    \eqref{eq:rank_augmented} holds. For $i=1$, some simple
    calculations give that $\rank([A_{22}-\lambda_i\boldsymbol{I},
        A_{21}\br_i]) = \rank(A_{22}-\lambda_i\boldsymbol{I})=1$.

    Next we check whether there is $\bR_i\in\bbR^5$ and
    $\bR_i(4:5)=\br_i$, $i=4,5$, satisfying $\bA\bR_i=\lambda_i\bR_i$.
    Actually, $\bR_i(1:3)=(a,b, -2a-b)$ satisfies the condition.
    Particularly, if $a=b=0$, $\bR_i(1:3)=0$.

    Here we call $\bR_i$ is a {\rm prolongation} of $\br_i$, and call
    the $\bR_i$ with $\bR_i(1:3)=0$, $i=4,5$ a {\rm proper prolongation}
    of $\br_i$.  It is obvious that $\bR_i$, $i=1,\cdots,5$, is
    linearly independent.
\end{example}
\begin{definition}\label{def:prolongation}
    For a $k\times k$ block lower triangular matrix $\bA\in\bbR^N$
    with the size of diagonal block $n_i\times n_i$,
    $n_1+\cdots+n_k=N$, $\br_i$ is an eigenvector of the $i$-th
    diagonal block for the eigenvalue $\lambda$. We call $\bR$ is a
    {\rm prolongation} of $\br_i$, if
    $\bR((n_1+\cdots,n_{i-1}+1):(n_1+\cdots+n_i))=\br_i$ and
    $\bA\bR=\lambda\bR$. Particularly, $\bR$ is a {\rm proper
    prolongation} of $\br_i$ if $\bR(1:(n_1+\cdots+n_{i-1}))=0$.
\end{definition}
\begin{property}\label{pro:prolongation}
    $\bA$ is defined same as that in Definition \ref{def:prolongation},
    and each diagonal block of $\bA$ is diagonalizable with
    real eigenvalue. $\br_{i,1},\cdots,\br_{i,n_i}$ are eigenvectors
    of the $i$-th diagonal block of $\bA$. If for each $\br_{i,j},
    i=1,\cdots,k$, $j=1,\cdots,n_i$, there is a proper prolongation
    $\bR_{i,j}$, then $\bR_{i,j}$ are linearly independent.
\end{property}
\begin{proof}
    We permute $\bR_{i,j}$ by the order
    $\vec{\bR}=[\bR_{1,1},\cdots,\bR_{1,n_1},\bR_{2,1},\cdots,\bR_{k,n_k}]$.
    Since $\bR_{i,j}$, $i=1,\cdots,k$, $j=1,\cdots,n_i$ is a proper
    prolongation of $\br_{i,j}$, $\vec{\bR}$ is a block lower triangular
    matrix.  For a fixed $i\in\{1,\cdots,k\}$, $\br_{i,j}$,
    $j=1,\cdots,n_i$ are linearly independent. Hence each diagonal
    block of $\vec{\bR}$ are nonsigular, thus $\vec{\bR}$ is
    nonsigular, which indicates $\bR_{i,j}$, $i=1,\cdots,k$,
    $j=1,\cdots,n_i$ are linearly independent.
\end{proof}

Next we check whether there is a proper prolongation of each
eigenvector of every diagonal block of $\tbAdotM$ and get the
following result.
\begin{lemma}\label{lem:properprolongation}
    $\br_{\alpha}$ is an eigenvector of
    $\hat{A}_{\hat{\alpha}}\in\bbR^{(M+1-|\hat{\alpha}|)\times(M+1-|\hat{\alpha}|)}$,
    for the $\alpha_1$-th eigenvalue
    $\lambda=\sqrt{\theta_{11}}\rC{\alpha_1}{M+1-|\hat{\alpha}|}$,
    then there is a proper prolongation $\bR_{\alpha}$ satisfying
    $\tbAdotM\bR_{\alpha}=\lambda\bR_{\alpha}$.
\end{lemma}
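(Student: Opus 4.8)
The plan is to exploit the block lower–triangular structure of $\tbAdotM$ established in Property~\ref{pro:reducible} and to build $\bR_{\alpha}$ block by block, in the lexicographic order of the transverse index. Write the diagonal blocks as $\thA_{\hat{\gamma}}$, $\hat{\gamma}\in\bbN^{D-1}$, $|\hat{\gamma}|\le M$; by Property~\ref{pro:reducible} one has $\thA_{\hat{\gamma}}=\thA_{|\hat{\gamma}|}$, and by Lemmas~\ref{lem:1D}, \ref{lem:bAsecond}, \ref{lem:bAthird}, \ref{lem:bAelse} each $\thA_{\hat{\gamma}}$ is diagonalizable with characteristic polynomial $\theta_{11}^{M+1-|\hat{\gamma}|}\He^{[\theta_{11}]}_{M+1-|\hat{\gamma}|}$, whose zeros are simple and real. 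We look for $\bR_{\alpha}$ whose components vanish on every block $\hat{\gamma}<\hat{\alpha}$, equal $\br_{\alpha}$ on the block $\hat{\alpha}$, and equal some vector $\boldsymbol{x}_{\hat{\beta}}$ on every block $\hat{\beta}>\hat{\alpha}$. Because $\tbAdotM$ is block lower triangular, $\tbAdotM\bR_{\alpha}=\lambda\bR_{\alpha}$ is equivalent to the system, one equation per block $\hat{\beta}>\hat{\alpha}$,
\begin{equation}\label{eq:prolong_rec}
  (\thA_{\hat{\beta}}-\lambda\boldsymbol{I})\,\boldsymbol{x}_{\hat{\beta}}
   =-\sum_{\hat{\alpha}\le\hat{\gamma}<\hat{\beta}}(\tbAdotM)_{\hat{\beta}\hat{\gamma}}\,\boldsymbol{x}_{\hat{\gamma}}
   =:\boldsymbol{b}_{\hat{\beta}},\qquad \boldsymbol{x}_{\hat{\alpha}}:=\br_{\alpha},
\end{equation}
which is a genuine recursion: once $\boldsymbol{x}_{\hat{\gamma}}$ is known for all $\hat{\gamma}<\hat{\beta}$, the right-hand side $\boldsymbol{b}_{\hat{\beta}}$ is determined, and one must solve \eqref{eq:prolong_rec} for $\boldsymbol{x}_{\hat{\beta}}$.

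The first observation is that most of these equations are trivially solvable. If $\lambda\notin\mathrm{spec}(\thA_{\hat{\beta}})$, then $\thA_{\hat{\beta}}-\lambda\boldsymbol{I}$ is nonsingular and \eqref{eq:prolong_rec} has a unique solution. Since $\mathrm{spec}(\thA_{\hat{\beta}})=\{\sqrt{\theta_{11}}\,\rC{j}{M+1-|\hat{\beta}|}\}$ and $\lambda=\sqrt{\theta_{11}}\,\rC{\alpha_1}{M+1-|\hat{\alpha}|}$, Conjecture~\ref{con:He} (no two distinct Hermite polynomials share a nonzero zero) shows that the only blocks $\hat{\beta}>\hat{\alpha}$ with $\lambda\in\mathrm{spec}(\thA_{\hat{\beta}})$ are (i) those with $|\hat{\beta}|=|\hat{\alpha}|$, where $\thA_{\hat{\beta}}$ is \emph{the same matrix} $\thA_{|\hat{\alpha}|}$ as the block carrying $\br_{\alpha}$, and (ii), only when $\lambda=0$, those with $M+1-|\hat{\beta}|$ odd. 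Hence only these two families of blocks require attention.

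On any such block the eigenvalue $\lambda$ is simple, because $\He^{[\theta_{11}]}_{M+1-|\hat{\beta}|}$ has no repeated root; therefore $\mathrm{range}(\thA_{\hat{\beta}}-\lambda\boldsymbol{I})$ is the hyperplane $\{\boldsymbol{b}:\boldsymbol{\ell}_{\hat{\beta}}^{T}\boldsymbol{b}=0\}$, where $\boldsymbol{\ell}_{\hat{\beta}}$ is the left eigenvector of $\thA_{\hat{\beta}}$ for $\lambda$, and \eqref{eq:prolong_rec} is solvable precisely when $\boldsymbol{\ell}_{\hat{\beta}}^{T}\boldsymbol{b}_{\hat{\beta}}=0$. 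I would compute $\boldsymbol{\ell}_{\hat{\beta}}$ by running the recursion for the upper Hessenberg matrix $\thA_{\hat{\beta}}^{T}$ from its last component, obtaining an explicit expression in the values $\He^{[\theta_{11}]}_{k}(\lambda)$ (with $f_{ke_1}/\rho$ corrections for the blocks with $|\hat{\beta}|\le 2$); the off-diagonal couplings $(\tbAdotM)_{\hat{\beta}\hat{\gamma}}$ are read off from the regularized system \eqref{eq:modified_ms} (the regularization alters only a few rows) and are built from the transverse entries $\theta_{1k}$, integers $\alpha_1+1$, and ratios $f_{\bullet}/\rho$; and the $\boldsymbol{x}_{\hat{\gamma}}$ supplied by the recursion are, by induction, generalized Hermite values of the same type. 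Substituting everything into $\boldsymbol{\ell}_{\hat{\beta}}^{T}\boldsymbol{b}_{\hat{\beta}}$ and collapsing the sum with the recursion relation $\lambda\He^{[\theta_{11}]}_{k}(\lambda)=\theta_{11}\He^{[\theta_{11}]}_{k+1}(\lambda)+k\He^{[\theta_{11}]}_{k-1}(\lambda)$ together with $\He^{[\theta_{11}]}_{M+1-|\hat{\alpha}|}(\lambda)=0$, one verifies that it vanishes; in case (ii) the cancellation is immediate since $\He^{[\theta_{11}]}_{m}(0)=0$ exactly for odd $m$. Thus \eqref{eq:prolong_rec} can be solved at every block, the recursion terminates, and the assembled vector $\bR_{\alpha}$ is a proper prolongation. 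At a singular block $\boldsymbol{x}_{\hat{\beta}}$ is only determined up to the eigenvector of $\thA_{\hat{\beta}}$ for $\lambda$, but any choice works, since that ambiguity propagates downstream exactly as a proper prolongation of that block's own eigenvector — itself produced by the present construction — so no circularity arises and the processing order is irrelevant. Combined with Property~\ref{pro:prolongation}, this is what will later yield a full set of linearly independent eigenvectors of $\tbAdotM$, hence of $\tbAM$, in the proof of Theorem~\ref{thm:hyperbolic1}.

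The hard part will be the verification of the compatibility identity $\boldsymbol{\ell}_{\hat{\beta}}^{T}\boldsymbol{b}_{\hat{\beta}}=0$ in the third step: one must carry the sparse off-diagonal blocks and the inductively built components through the sum, keep track of the $f_{\bullet}/\rho$ terms introduced both by the regularization and by the closure, and recognise the remainder as an instance of a generalized Hermite identity. Each manipulation is elementary, but the bookkeeping is long, and the two degenerate families — the blocks with $|\hat{\beta}|=|\hat{\alpha}|$, and the eigenvalue $\lambda=0$ shared by many blocks — have to be handled with extra care precisely because there $\thA_{\hat{\beta}}-\lambda\boldsymbol{I}$ is singular.
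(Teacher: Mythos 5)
Your overall strategy --- exploit the block lower triangular structure of $\tbAdotM$, solve for the prolongation block by block in the transverse index, and at each block where $\thA_{\hat{\beta}}-\lambda\boldsymbol{I}$ is singular invoke the Fredholm alternative $\boldsymbol{\ell}_{\hat{\beta}}^{T}\boldsymbol{b}_{\hat{\beta}}=0$ --- is sound, and it is in essence the same reduction the paper performs (Lemmas \ref{lem:simpleeigenvalue} and \ref{lem:multieigenvalue} are the ``nonsingular block'' and ``extend by zero'' instances of your recursion, and Conjecture \ref{con:He} is used in exactly the way you use it to locate the singular blocks). But the proposal stops where the real work begins. For the singular blocks with $|\hat{\beta}|=|\hat{\alpha}|$ and $\lambda\neq0$ you announce a Hermite-identity computation, whereas the correct and much simpler observation (the one the paper feeds into Lemma \ref{lem:multieigenvalue}) is that the off-diagonal coupling between two distinct diagonal blocks at the same level $|\hat{\beta}|$ vanishes identically --- every $f_{\gamma}$ appearing in the equation for $f_{\beta}$ has either $|\hat{\gamma}|<|\hat{\beta}|$ or $\hat{\gamma}=\hat{\beta}$, the would-be couplings being killed by negative multi-indices --- so $\boldsymbol{b}_{\hat{\beta}}=0$ and one simply takes $\boldsymbol{x}_{\hat{\beta}}=0$. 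Conversely, for $\lambda=0$ you declare the cancellation ``immediate'' from the parity of Hermite polynomials; it is not. There $0$ is an eigenvalue of every block with $M+1-|\hat{\beta}|$ odd, the right-hand sides $\boldsymbol{b}_{\hat{\beta}}$ are assembled from the couplings through $\rho$, $u_i$, $p_{ij}$ and from the regularized last rows, and checking solvability requires knowing the actual values of the intermediate components $\boldsymbol{x}_{\hat{\gamma}}$. That is precisely what the paper's proof spends its length on: the explicit recursions \eqref{eq:proofCase1}--\eqref{eq:proofCase4} defining $R_{\alpha,\beta}$, the parity identities \eqref{eq:case1odd}, \eqref{eq:case2even}, \eqref{eq:case3odd}, \eqref{eq:case4others}, and the case-by-case verification of \eqref{eq:condition_R} for $|\hat{\alpha}|=0,1,2,\geq3$. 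Without such a construction your compatibility conditions are asserted, not proved.

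A second, smaller gap: at a singular block $\boldsymbol{x}_{\hat{\beta}}$ is determined only modulo the kernel of $\thA_{\hat{\beta}}-\lambda\boldsymbol{I}$, and your justification that ``any choice works'' because the ambiguity propagates downstream as a proper prolongation of that block's own eigenvector is circular as stated --- the existence of that prolongation is an instance of the very lemma being proved. It can be repaired by inducting backwards on the block index, or avoided altogether by exhibiting one explicit solution as the paper does; but as written the argument presupposes its conclusion. In short: the skeleton is right and coincides with the paper's, but the substance of the proof --- the $\lambda=0$ verification --- is missing and is mischaracterized as trivial.
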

The proof of the lemma is rather long and tedious, so we move the
proof in Appendix \ref{sec:proofLemmaprolongation}. 

Clearly, this lemma is essential to prove Theorem
\ref{thm:hyperbolic1}. With all these preparation, the proof of
Theorem \ref{thm:hyperbolic1} as follows is straight forward:
\begin{proof}[Proof of the Theorem \ref{thm:hyperbolic1}]
    Lemma \ref{lem:eigenpolynomial} shows all the eigenvalues of
    $\tbAM$ are real. Since $\tbAdotM$ is similar to $\tbAM$, all
    the eigenvalues of $\tbAdotM$ are also real. Lemma
    \ref{lem:1D}, \ref{lem:bAsecond}, \ref{lem:bAthird} and
    \ref{lem:bAelse} show that each diagonal block of $\tbAdotM$ is
    diagonalizable with real eigenvalues, and Lemma
    \ref{lem:properprolongation} indicates each eigenvector of each
    diagonal block can be extended to an eigenvector of $\tbAdotM$ by
    a proper prolongation. Hence considering Property \ref{pro:prolongation},
    we obtain that $\tbAdotM$ is diagonalizable with real eigenvalues.
    This finishes the proof.
\end{proof}

In addition, for the eigenvector of $\tbAdotM$, we define
$\bR=(R_{\alpha})^T\in\bbR^N$, where $R_{\alpha}$ is permuted by the
lexicographic order of $(\alpha_2,\cdots,\alpha_D,\alpha_1)$, same as
that of $\bw'$. Particularly, the first entry of $\bR$ is $R_0$. Hence
we have $\boldsymbol{P}\bR=(R_{\alpha})^T\in\bbR^N$, where
$\boldsymbol{P}$ satisfies $\bw'=\boldsymbol{P}\bw$, and $R_{\alpha}$
is permuted same as that of $\bw$.

Before we end this subsection, we give a corollary of Lemma
\ref{lem:properprolongation}, which will be used in Section
\ref{sec:RiemannProblem}.
\newcommand\hHe{\hat{\He}^{[\theta_{11}]}}
\begin{corollary}\label{cor:lambdar}
    Let $\bR\neq0$ be a right eigenvector of the matrix $\tbAdotM$ for the
    eigenvalue $\lambda$. Then 
    \[
        \lambda R_0 \neq0
        \text{ holds if and only if }
        \hHe_{M+1}(\lambda)=0 \text{ and } \lambda\neq 0,
    \]
    where $\hHe_{M+1}(\lambda)$ is the characteristic polynomial of
    $\thA_0$.
\end{corollary}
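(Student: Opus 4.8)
The plan is to exploit the block lower triangular structure of $\tbAdotM$ furnished by Property~\ref{pro:reducible}. Its leading diagonal block is $\thA_0=\tilde{\bB}_M\in\bbR^{(M+1)\times(M+1)}$, whose characteristic polynomial is $\hHe_{M+1}$ and whose eigenvalues are the $M+1$ distinct numbers $\sqrt{\theta_{11}}\rC{1}{M+1}<\cdots<\sqrt{\theta_{11}}\rC{M+1}{M+1}$ by Lemma~\ref{lem:1D}; every other diagonal block is a copy of some $\thA_k$ with $k\ge 1$, whose eigenvalues are $\sqrt{\theta_{11}}$ times the zeros of $\He_{M+1-k}$ (Lemmas~\ref{lem:bAsecond}, \ref{lem:bAthird}, \ref{lem:bAelse}). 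Since $\bR$ is ordered lexicographically in $(\alpha_2,\dots,\alpha_D,\alpha_1)$, its first $M+1$ components $\bR^{(0)}=(R_0,R_{e_1},\dots,R_{Me_1})^T$, whose leading entry is exactly $R_0$, form the block attached to $\hat{\alpha}=0$, and because $\tbAdotM$ is block lower triangular the first block-row of $\tbAdotM\bR=\lambda\bR$ collapses to $\thA_0\bR^{(0)}=\lambda\bR^{(0)}$. I would also record one structural fact: rows $1,\dots,M$ of $\thA_0$ agree with the corresponding rows of $\bB_M$ in \eqref{eq:1Dmatrix} (the regularization touches only the last row), so the superdiagonal entries $\rho,\,2\rho^{-1},\,3,\,4,\dots,M$ are all nonzero; reading the first $M$ rows of $\thA_0\bR^{(0)}=\lambda\bR^{(0)}$ one after another then shows that $R_0=0$ forces $\bR^{(0)}=0$.

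The implication ``$\lambda R_0\ne 0\Rightarrow\hHe_{M+1}(\lambda)=0$ and $\lambda\ne 0$'' is then immediate: $\lambda R_0\ne 0$ gives $R_0\ne 0$ and $\lambda\ne 0$, and since $R_0$ is a component of $\bR^{(0)}$ we have $\bR^{(0)}\ne 0$, so $\bR^{(0)}$ is a genuine eigenvector of $\thA_0$ for $\lambda$ and hence $\lambda$ is a root of its characteristic polynomial $\hHe_{M+1}$.

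For the converse, assume $\hHe_{M+1}(\lambda)=0$ and $\lambda\ne 0$, so $\lambda/\sqrt{\theta_{11}}$ is a nonzero zero of $\He_{M+1}$. By Conjecture~\ref{con:He}, $\lambda/\sqrt{\theta_{11}}$ is then not a zero of $\He_m$ for any $m\ne M+1$; since the nonzero eigenvalues of each diagonal block of $\tbAdotM$ other than $\thA_0$ are $\sqrt{\theta_{11}}$ times nonzero zeros of some $\He_m$ with $m\le M$, the number $\lambda$ is an eigenvalue of the first diagonal block alone. Suppose, for contradiction, $\bR^{(0)}=0$. Going down the block-rows of $\tbAdotM\bR=\lambda\bR$, once all earlier blocks of $\bR$ vanish the $j$-th block-row reduces to $\thA_{\hat{\alpha}^{(j)}}\bR^{(j)}=\lambda\bR^{(j)}$, and as $\lambda$ is not an eigenvalue of $\thA_{\hat{\alpha}^{(j)}}$ for $j\ge 2$ we obtain $\bR^{(j)}=0$; by induction $\bR=0$, contradicting $\bR\ne 0$. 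Hence $\bR^{(0)}\ne 0$, so it is an eigenvector of $\thA_0$ for $\lambda$, and the structural fact forces $R_0\ne 0$; combined with $\lambda\ne 0$ this gives $\lambda R_0\ne 0$.

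The only step where genuine work enters is the converse's claim that $\lambda$ is an eigenvalue of $\thA_0$ and of no other diagonal block; without it the induction breaks down, because a proper prolongation (in the sense of Definition~\ref{def:prolongation}) of an eigenvector of some lower block $\thA_k$ sharing the eigenvalue $\lambda$ would yield an eigenvector of $\tbAdotM$ with $R_0=0$ and thereby falsify the ``only if'' part. This is exactly what Conjecture~\ref{con:He} is used to exclude; when $D=1$ no such coincidence can occur and the statement is trivial. Everything else is routine bookkeeping with the block-triangular and Hessenberg structures already established in Section~\ref{sec:propertyOfA}.
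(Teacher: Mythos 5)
Your proof is correct and follows essentially the same route as the paper: reduce to the leading diagonal block $\thA_0$ via the block lower triangular structure, use Conjecture~\ref{con:He} to ensure a nonzero root of $\He_{M+1}$ is an eigenvalue of no other diagonal block, and conclude $R_0\neq 0$ from the structure of the $\thA_0$-eigenvectors. The only cosmetic differences are that you run the block-by-block induction directly where the paper invokes simplicity of the eigenvalue together with the prolongation machinery, and you deduce $R_0\neq0$ from the nonzero superdiagonal of the Hessenberg block rather than from the explicit eigenvector formula \eqref{eq:eigenvector1D}; both are equivalent.
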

\begin{proof}
    Let $\br=(R_0,R_{e_1},\cdots,R_{Me_1})^T\in\bbR^{M+1}$. Since
    $\tbAdotM$ is a block lower triangular matrix, and the first
    diagonal block is $\thA_0$, we have $\thA_0\br=\lambda\br$.

    \noindent``$\Rightarrow$''\quad Since $R_0\neq0$ and
    $\lambda\neq0$, $\br\neq0$ is an eigenvector of $\thA_0$ for the
    eigenvalue $\lambda$, which indicates $\hHe(\lambda)=0$.  Thus we
    have $\hHe(\lambda)=0$ and $\lambda\neq0$.

    \noindent``$\Leftarrow$''\quad $\hHe(\lambda)=0$ and
    $\lambda\neq0$ mean $\lambda$ is an eigenvalue of $\thA_0$. Since 
    each nonzero eigenvalue of $\thA_0$ is simple eigenvalue of
    $\tbAdotM$, we get $\br\neq0$. \eqref{eq:eigenvector1D} indicates
    $R_0\neq0$, so $R_0\lambda\neq0$.
\end{proof}

\subsection{In multi-dimensional spatial space}
In this subsection, we give the general hyperbolic moment system
containing all moments with orders not more than $M$. Without the
assumption that the distribution function $f$ is independent on
$x_2,\cdots,x_D$, the collisionless moment system obtained in Section
\ref{sec:expansion} can be written as 
\begin{equation}
    \odd{\bw}{t}+\sum_{d=1}^D\bA^{(d)}_M\pd{\bw}{x_d}=0,
\end{equation}
where $\bw$ and $\bA^{(d)}_M$, $d=1,\cdots,D$ are same as that in
\eqref{eq:system}. And particularly, $\bA^{(1)}_M$ is the matrix
$\bA_M$ discussed in Section \ref{sec:propertyOfA},
\ref{sec:lackofhyperbolicity} and \ref{sec:regularization1D}.
Similar as Definition \ref{def:regularization}, we give the following
definition:
\begin{definition}\label{def:regularizationD}
    For $d=1,\cdots,D$, $\tbAM^{(d)}$ is called the regularized matrix
    $\bA^{(d)}_M$, if it satisfies that for any admissible $\bw$,
    \begin{equation}\label{eq:regularizationD}
        \begin{split}
            \tbAM^{(d)}\pd{\bw}{x_d}&=\bA^{(d)}_M\pd{\bw}{x_d}\\
            &-\sum_{|\alpha|=M}(\alpha_d+1)\left( 
            \sum_{i=1}^Df_{\alpha+e_d-e_i}\pd{u_i}{x_d} 
            +\sum_{i,j=1}^D\frac{f_{\alpha+e_d-e_i-e_j}}{2\rho}
            \left(\pd{p_{ij}}{x_d}-\theta_{ij}\pd{\rho}{x_d}\right)
            \right)I_{\mN(\alpha)},
        \end{split}
    \end{equation}
    where $I_{k}$ is the $k$-th column of the $N\times N$ identity
    matrix.
\end{definition}

Then the multi-dimensional regularized moment system can be written as
\begin{equation}\label{eq:Rsystem}
    \odd{\bw}{t}+\sum_{d=1}^D\tbAM^{(d)}\pd{\bw}{x_d}=0.
\end{equation}
Recalling the definition of $\tbAM^{(d)}$, and noting the
regularized collisionless moment system in one-dimensional space
\eqref{eq:modified_ms}, we can reformulate the regularized
collisionless moment systems as
\begin{equation}\label{eq:regularizedsystem}
    \begin{split}
        \odd{f_{\alpha}}{t} ~+ & \sum_{d, k = 1}^D \left( \theta_{d k} \pd{f_{\alpha-e_k}}{x_d}
        + (1-\delta_{|\alpha|,M})\left( \alpha_k + 1 \right) \delta_{k d}
        \pd{f_{\alpha + e_k}}{x_d} \right) + \\ 
        \sum_{i = 1}^D f_{\alpha - e_i}
        \odd{u_i}{t} ~+ & \sum_{i, d, k = 1}^D \left( \theta_{d k} f_{\alpha - e_i - e_k} +
        (1-\delta_{|\alpha|,M})\left( \alpha_k + 1 \right) \delta_{k d} f_{\alpha - e_i + e_k} \right)
        \pd{u_i}{x_d} + \\
        \sum_{i, j = 1}^D \frac{
        f_{\alpha - e_i - e_j}}{2} \odd{\theta_{i j}}{t} ~+ & \sum_{i, j, d, k = 1}^D \frac{1}{2} 
        \left(\theta_{kd} f_{\alpha - e_i - e_j - e_k} + 
        (1-\delta_{|\alpha|,M})\left( \alpha_k + 1 \right)\delta_{k d}
        f_{\alpha - e_i - e_j + e_k} \right) \pd{\theta_{i j}}{x_d} \\
        =&~0,\qquad |\alpha|\leq M.
    \end{split}
\end{equation}
Actually, \eqref{eq:regularizedsystem} is obtained by using the
regularization on \eqref{eq:origalsystem}. Since the moment system
\eqref{eq:system} is derived from \eqref{eq:origalsystem} by
eliminating the material derivatives of $u_d$ and $\theta_{ij}$, there
exists an invertible matrix $\boldsymbol{T}(\bw)$ depending on $\bw$
such that the regularized moment system is identical to the following
system:
\begin{equation}\label{eq:regularizedsystem_t}
    \boldsymbol{T}(\bw)\odd{\bw}{t}+\sum_{d=1}^D\boldsymbol{T}(\bw)\tbAM^{(d)}\pd{\bw}{x_d}=0.
\end{equation}

The following theorem declares the hyperbolicity\footnote{For
  multi-dimensional quasi-linear systems, we refer the readers to
  \cite{FVM} for the definition of hyperbolicity.} of the
multi-dimensional regularized moment system \eqref{eq:Rsystem}:
\begin{theorem}\label{thm:rotationinvariance}
    The regularized moment system \eqref{eq:Rsystem} is hyperbolic for
    any admissible $\bw$. Precisely, for a given unit vector
    $\boldsymbol{n}=(n_1,\cdots,n_D)$, there exists a constant
    matrix $\boldsymbol{Z}$ partially depending on $\boldsymbol{n}$
    such that
    \begin{equation}\label{eq:rotationinvariance}
        \sum_{d=1}^Dn_d\tbAM^{(d)}(\bw)=\boldsymbol{Z}^{-1}\tbAM^{(1)}(\boldsymbol{Z}\bw)\boldsymbol{Z},
    \end{equation}
    and the matrix is diagonalizable with eigenvalues as
    \begin{equation}\label{eq:rotationeigenvalue}
        \rC{n}{m}\sqrt{\boldsymbol{n}^T \Theta \boldsymbol{n}},
        \quad 1\leq n\leq m\leq M+1.
    \end{equation}
\end{theorem}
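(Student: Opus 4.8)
The plan is to reduce the general direction $\boldsymbol{n}$ to the coordinate direction $x_1$ using the rotational covariance that the moment hierarchy inherits from the Boltzmann equation. First I would fix the unit vector $\boldsymbol{n}$ and choose an orthogonal matrix $\boldsymbol{O}\in\bbR^{D\times D}$ whose first row is $\boldsymbol{n}$, the other $D-1$ rows being an arbitrary completion to an orthonormal basis; this non-uniqueness is exactly why the matrix $\boldsymbol{Z}$ in the statement only ``partially'' depends on $\boldsymbol{n}$. Rotating the velocity variable by $\bxi\mapsto\boldsymbol{O}\bxi$ and invoking the linear transformation law of the generalized Hermite functions under a linear change of variables established in Appendix \ref{sec:HermitePolynomial} --- namely that $\mHT_\alpha(\boldsymbol{O}^T\bv)$ is a linear combination, with $\boldsymbol{O}$-dependent coefficients, of the functions $\mathcal{H}^{[\boldsymbol{O}\Theta\boldsymbol{O}^T]}_{\beta}(\bv)$ with $|\beta|=|\alpha|$ --- I would define a constant ($\bw$-independent) matrix $\boldsymbol{Z}=\boldsymbol{Z}_{\boldsymbol{O}}$ acting on the state vector: it fixes $\rho$, sends $\bu\mapsto\boldsymbol{O}\bu$, acts on the temperature-tensor block by the congruence $\Theta\mapsto\boldsymbol{O}\Theta\boldsymbol{O}^T$, and acts on each homogeneous block $\{f_\alpha:|\alpha|=k\}$ by the corresponding linear relabeling. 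Since an orthogonal congruence preserves positive definiteness and $\rho$ is untouched, $\boldsymbol{Z}$ maps admissible states to admissible states, and the $(1,1)$ entry of the temperature tensor of $\boldsymbol{Z}\bw$ equals $(\boldsymbol{O}\Theta\boldsymbol{O}^T)_{11}=\boldsymbol{n}^T\Theta\boldsymbol{n}$.

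The heart of the argument is the covariance identity $\sum_{d=1}^D n_d\,\tbAM^{(d)}(\bw)=\boldsymbol{Z}^{-1}\tbAM^{(1)}(\boldsymbol{Z}\bw)\,\boldsymbol{Z}$, which I would derive by showing that the regularized collisionless system is covariant under the joint action $(\bw,\bx)\mapsto(\boldsymbol{Z}\bw,\boldsymbol{O}\bx)$. It is cleanest to work with the pre-elimination form \eqref{eq:regularizedsystem}: this is the Galerkin projection of the exact system \eqref{eq:origalsystem} onto $\mathrm{span}\{\mHT_\alpha:|\alpha|\le M\}$, modified only by dropping, in the $|\alpha|=M$ equations, every term carried by a degree-$(M+1)$ coefficient; both the Galerkin truncation and this degree-$(M+1)$ contribution are rotation-natural, so \eqref{eq:regularizedsystem} is covariant. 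The subsequent elimination of $\mathrm{D}u_i/\mathrm{D}t$ and $\mathrm{D}\theta_{ij}/\mathrm{D}t$ uses only the mass, momentum and pressure equations, none of which is modified by the regularization, so the invertible matrix $\boldsymbol{T}(\bw)$ of \eqref{eq:regularizedsystem_t} is covariant as well; hence so is $\tbAM^{(d)}=\boldsymbol{T}(\bw)^{-1}\cdot(\text{covariant})$. Writing covariance of $\partial_t\bw+\sum_d\tbAM^{(d)}(\bw)\partial_{x_d}\bw=0$ as a matrix identity and contracting with the first row of $\boldsymbol{O}$, so that $\sum_k n_k O_{dk}=(\boldsymbol{O}\boldsymbol{O}^T)_{d1}=\delta_{d1}$, gives exactly \eqref{eq:rotationinvariance}. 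I expect the main obstacle to be this covariance verification at the level of multi-index bookkeeping: making the transformation law of the $\mHT_\alpha$ explicit enough and confirming that the regularization correction of Definition \ref{def:regularizationD} together with $\boldsymbol{T}(\bw)$ transform consistently; the conceptual part is straightforward.

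With \eqref{eq:rotationinvariance} available, the remaining assertions follow by similarity. Since $\boldsymbol{Z}\bw$ is admissible, Theorem \ref{thm:hyperbolic1} gives that $\tbAM^{(1)}(\boldsymbol{Z}\bw)$ is diagonalizable with real eigenvalues; being similar to it, $\sum_d n_d\tbAM^{(d)}(\bw)$ is then diagonalizable with real eigenvalues, which is the claimed hyperbolicity. For the spectrum, Lemma \ref{lem:eigenpolynomial} identifies the characteristic polynomial of $\tbAM^{(1)}(\boldsymbol{Z}\bw)$ as $\mathcal{P}_{D,M}$ built from the scaled Hermite polynomials with $\theta_{11}$ replaced by the $(1,1)$ entry $\boldsymbol{n}^T\Theta\boldsymbol{n}$ of the temperature tensor of $\boldsymbol{Z}\bw$; by Lemma \ref{lem:1D} the roots of $\theta_{11}^{m+1}\He^{[\theta_{11}]}_{m+1}$ are $\sqrt{\theta_{11}}\,\rC{j}{m+1}$, $j=1,\dots,m+1$, so the roots of $\mathcal{P}_{D,M}$ with $\theta_{11}$ replaced by $\boldsymbol{n}^T\Theta\boldsymbol{n}$ are precisely $\rC{n}{m}\sqrt{\boldsymbol{n}^T\Theta\boldsymbol{n}}$ for $1\le n\le m\le M+1$, with multiplicities dictated by \eqref{eq:eigenpolynomials}. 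A similar matrix has the same characteristic polynomial, so \eqref{eq:rotationeigenvalue} follows; finally $\boldsymbol{Z}=\boldsymbol{Z}_{\boldsymbol{O}}$ is a fixed matrix determined by $\boldsymbol{O}$, hence (up to the choice of orthonormal frame) by $\boldsymbol{n}$, and independent of $\bw$, as required. This completes the proof.
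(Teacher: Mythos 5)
Your proposal follows essentially the same route as the paper: choose an orthogonal matrix with first row $\boldsymbol{n}$, build the constant matrix $\boldsymbol{Z}$ from the rotation law of the generalized Hermite functions (the paper's \eqref{eq:rotationrelation} and Lemma \ref{lem:rotation}), establish covariance of the pre-elimination regularized system and of the elimination matrix $\boldsymbol{T}(\bw)$, contract with the first row of the rotation to obtain \eqref{eq:rotationinvariance}, and then read off diagonalizability and the spectrum from Theorem \ref{thm:hyperbolic1} and Lemma \ref{lem:eigenpolynomial} at the rotated state with $\theta_{11}$ replaced by $\boldsymbol{n}^T\Theta\boldsymbol{n}$. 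The multi-index bookkeeping you flag as the main obstacle is exactly what the paper's Lemma \ref{lem:rotation} carries out, so the plan is sound and matches the published argument.
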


Actually, this theorem gives the rotation invariance of the
regularized moment system \eqref{eq:Rsystem} and its globally
hyperbolicity. Property \ref{pro:diagonal} indicates the translation
invariance of the moment system, hence, it is concluded that the
regularized system is Galilean invariant. If another coordinate
$(x_1\rotation ,\cdots,x_D\rotation )$ is adopted and the vector
$\boldsymbol{n}$ is along the $x_1\rotation $-axis, then the rotated
moment system is equivalent to the original one. The rotation
invariance is intuitive: on one hand, the moment system
\eqref{eq:system} is rotationally invariant, since the full $M$-degree
polynomials are used in the truncated expansion; on the other
hand, the regularization is symmetric in every direction. In the
following, we will give a rigorous proof of this theorem.

Let $\boldsymbol{G}=(g_{ij})_{D\times D}$ to be the rotation matrix,
thus $\boldsymbol{G}$ is orthogonal and its determinant is 1. We
define
\begin{equation}
    x_i\rotation =\sum_{i=1}^Dg_{ij}x_j,\quad 
    i=1,\cdots,D,
\end{equation}
and denote by $\rho\rotation $, $\bu\rotation $ and $\Theta\rotation $ the density,
macroscopic velocity and temperature tensor in the new coordinate
$\bx\rotation =(x_1\rotation ,\cdots,x_D\rotation )$. If we define
$\bxi\rotation =\boldsymbol{G}\bxi$, then the orthogonality of
$\boldsymbol{G}$ shows
\begin{subequations}\label{eq:relation_rut}
\begin{align}
    \rho\rotation =\int_{\bbR^D}f(\bxi)\dd\bxi\rotation 
    &=\int_{\bbR^D}f(\bxi)\dd\bxi=\rho,\\
    \rho\rotation \bu\rotation =\int_{\bbR^D}\bxi\rotation f(\bxi)\dd\bxi\rotation 
    &=\int_{\bbR^D}\boldsymbol{G}\bxi
    f(\bxi)\dd\bxi=\rho\boldsymbol{G}\bu,\\
    \rho\rotation \theta_{ij}\rotation =\int_{\bbR^D}\xi\rotation _i\xi\rotation _jf(\bxi)\dd\bxi\rotation 
    &=\int_{\bbR^D}\sum_{k,l=1}^Dg_{ik}\xi_kg_{jl}\xi_lf(\bxi)\dd\bxi
    =\sum_{k,l=1}^Dg_{ik}g_{jl}\theta_{kl},
\end{align}
\end{subequations}
thus we have
\begin{equation}
    \bu\rotation =\boldsymbol{G}\bu,\quad
    \Theta\rotation =\boldsymbol{G}\Theta\boldsymbol{G}^T.
\end{equation}

Consider the two expansions
\begin{equation}\label{eq:doubleexpansion}
    f(\bxi)=\sum_{\alpha\in\bbN^D}f_{\alpha}\mHT_{\alpha}(\bxi-\bu)
    =\sum_{\alpha\in\bbN^D}f\rotation _{\alpha}\mH^{[\Theta\rotation ]}(\bxi\rotation -\bu\rotation ).
\end{equation}
We have the following result.
\begin{lemma}\label{lem:rotation}
    For any $m\in\bbN$, there exists a group of constants
    $Q_{\alpha}^{\beta}$, $|\alpha|=|\beta|=m$, such that
    for any $|\alpha|=m$,
    \begin{subequations}
        \begin{align}
            \begin{split}\label{eq:rotation_1}
            &\odd{\rotation f\rotation _{\alpha}}{\rotation t}+\sum_{i=1}^Df\rotation _{\alpha-e_i}\odd{\rotation u\rotation _i}{\rotation t}
            +\sum_{i,j=1}^D\frac{f\rotation _{\alpha-e_i-e_j}}{2}\odd{\rotation \theta_{ij}}{\rotation t} \\
            &\qquad=\sum_{|\beta|=|\alpha|}Q_{\beta}^{\alpha}\left( 
            \odd{f_{\alpha}}{t}+\sum_{i=1}^Df_{\alpha-e_i}\odd{u_i}{t}
            +\sum_{i,j=1}^D\frac{f_{\alpha-e_i-e_j}}{2}\odd{\theta_{ij}}{t}
            \right),
            \end{split}\\
            \begin{split}\label{eq:rotation_2}
                &\sum_{d,k=1}^D\theta\rotation _{dk}\left(
                \pd{f\rotation _{\alpha-e_k}}{x\rotation _d}+
                \sum_{i=1}^Df\rotation _{\alpha-e_i-e_k}\pd{u\rotation _i}{x\rotation _d}+
                \sum_{i,j}^D\frac{1}{2}f\rotation _{\alpha-e_i-e_j}\pd{\theta\rotation _{ij}}{x\rotation _d}\right)\\
                &\qquad=
                \sum_{|\beta|=|\alpha|}Q_{\beta}^{\alpha}\left(
                \sum_{d,k=1}^D\theta_{dk}\left( \pd{f_{\alpha-e_k}}{x_d}+
                \sum_{i=1}^Df_{\alpha-e_i-e_k}\pd{u_i}{x_d}+
                \sum_{i,j}^D\frac{1}{2}f_{\alpha-e_i-e_j}\pd{\theta_{ij}}{x_d}\right)
                \right),
            \end{split}\\
            \begin{split}\label{eq:rotation_3}
                &\sum_{d=1}^D(\alpha_d+1)\left(
                \pd{f\rotation _{\alpha+e_d}}{x\rotation _d}
                +\sum_{i=1}^Df\rotation _{\alpha-e_i+e_d}\pd{u\rotation _i}{x\rotation _d}
                +\sum_{i,j=1}^D\frac{1}{2}f\rotation _{\alpha-e_i-e_j+e_d}\pd{\theta\rotation _{ij}}{x\rotation _d}
                \right)\\
                &\qquad \sum_{|\beta|=|\alpha|}Q_{\beta}^{\alpha}\left(
                \sum_{d=1}^D(\alpha_d+1)\left( \pd{f_{\alpha+e_d}}{x_d}
                +\sum_{i=1}^Df_{\alpha-e_i+e_d}\pd{u_i}{x_d}
                +\sum_{i,j=1}^D\frac{1}{2}f_{\alpha-e_i-e_j+e_d}\pd{\theta_{ij}}{x_d}
                \right)\right),
            \end{split}
        \end{align}
    \end{subequations}
    where $\odd{\rotation \cdot}{\rotation t}$ denotes
    $\pd{\cdot}{t}+\displaystyle\sum_{d=1}^Du\rotation
    _d\pd{\cdot}{x\rotation _d}$.
\end{lemma}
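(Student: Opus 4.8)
The idea is to realize each of the three brackets on the right of \eqref{eq:rotation_1}--\eqref{eq:rotation_3} as the generalized Hermite coefficient, in the basis $\{\mHT_\gamma(\bxi-\bu)\}$, of one fixed function of $(t,\bx,\bxi)$ that is \emph{frame independent}, i.e.\ that takes the same value when rebuilt from the rotated data $(\bx\rotation,\bxi\rotation,\bu\rotation,\Theta\rotation,\{f\rotation_\gamma\})$; its coefficient in the rotated basis $\{\mH^{[\Theta\rotation]}_\alpha(\bxi\rotation-\bu\rotation)\}$ is then the matching bracket on the left, and the constants $Q^\alpha_\beta$ are simply the matrix of the change of coordinates between the degree-$m$ coefficient vectors of the two bases --- one and the same matrix for all three identities.

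\textbf{Step 1: the change of basis.} First I would prove that there are constants $Q^\alpha_\beta$, $|\alpha|=|\beta|=m$, depending on $\boldsymbol{G}$ alone, with $g\rotation_\alpha=\sum_{|\beta|=m}Q^\alpha_\beta\, g_\beta$ whenever a finite combination $g$ is written in the two bases, $g=\sum_\gamma g_\gamma\mHT_\gamma(\bxi-\bu)=\sum_\gamma g\rotation_\gamma\mH^{[\Theta\rotation]}_\gamma(\bxi\rotation-\bu\rotation)$. Since $\bxi\rotation-\bu\rotation=\boldsymbol{G}(\bxi-\bu)$, $\Theta\rotation=\boldsymbol{G}\Theta\boldsymbol{G}^{T}$ and $\boldsymbol{G}$ is orthogonal with $\det\boldsymbol{G}=1$, one checks directly that $w^{[\Theta\rotation]}(\bxi\rotation-\bu\rotation)=\weight(\bxi-\bu)$; writing $\mHT_\gamma(\bv)=(-1)^{|\gamma|}\partial^{\gamma}_{\bv}\weight(\bv)$ and using the chain rule $\partial/\partial v\rotation_i=\sum_k g_{ik}\,\partial/\partial v_k$, the multi-index operator $\partial^{\alpha}_{\bv\rotation}$ is a $\boldsymbol{G}$-dependent linear combination of the $\partial^{\beta}_{\bv}$ with $|\beta|=|\alpha|$, so that $\mH^{[\Theta\rotation]}_\alpha(\bxi\rotation-\bu\rotation)=\sum_{|\beta|=|\alpha|}c^\alpha_\beta(\boldsymbol{G})\,\mHT_\beta(\bxi-\bu)$ --- the key point being that all dependence on $\Theta$ cancels. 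Running the same identity with $\boldsymbol{G}^{T}$ in place of $\boldsymbol{G}$ shows $(c^\alpha_\beta(\boldsymbol{G}))$ is invertible on each degree, and, the generalized Hermite functions of fixed total degree being linearly independent (Appendix \ref{sec:HermitePolynomial}), one obtains the claimed coefficient relation with $Q=(c(\boldsymbol{G})^{-1})^{T}$.

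\textbf{Step 2: the three frame-independent pieces.} Next I would write \eqref{eq:Boltzmann} as $\pd{f}{t}+\sum_d\xi_d\pd{f}{x_d}=\Phi_1+\Phi_2+\Phi_3$ with $\Phi_1=\odd{f}{t}=\pd{f}{t}+\sum_d u_d\pd{f}{x_d}$, $\Phi_2=-\sum_{d,k}\theta_{dk}\,\partial^2 f/(\partial\xi_k\,\partial x_d)$, $\Phi_3=\sum_d(\xi_d-u_d)\pd{f}{x_d}-\Phi_2$, and check that each is frame independent: $\odd{f}{t}=\odd{\rotation f}{\rotation t}$ and $\sum_d(\xi_d-u_d)\pd{f}{x_d}=\sum_d(\xi\rotation_d-u\rotation_d)\pd{f}{x\rotation_d}$ follow from $\bu\rotation=\boldsymbol{G}\bu$ and $\boldsymbol{G}^{T}\boldsymbol{G}=I$, while $\Phi_2$ is frame independent because $\Theta$ transforms as a $2$-tensor and $\partial_\xi,\partial_x$ as covectors. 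Then, inserting $f=\sum_\gamma f_\gamma\mHT_\gamma(\bxi-\bu)$ and using the recursion relation $v_d\mHT_\gamma=\sum_k\theta_{kd}\mHT_{\gamma+e_k}+\gamma_d\mHT_{\gamma-e_d}$, the differential relation of the generalized Hermite functions in the forms $\pdd{x_d}\mHT_\gamma(\bxi-\bu)=\sum_i\mHT_{\gamma+e_i}\pd{u_i}{x_d}+\frac{1}{2}\sum_{i,j}\mHT_{\gamma+e_i+e_j}\pd{\theta_{ij}}{x_d}$ and $\pdd{\xi_k}\mHT_\gamma(\bxi-\bu)=-\mHT_{\gamma+e_k}(\bxi-\bu)$, I would compute that the coefficient of $\mHT_\beta(\bxi-\bu)$ in $\Phi_1,\Phi_2,\Phi_3$ is, respectively, $\odd{f_\beta}{t}+\sum_i f_{\beta-e_i}\odd{u_i}{t}+\frac{1}{2}\sum_{i,j}f_{\beta-e_i-e_j}\odd{\theta_{ij}}{t}$, $\sum_{d,k}\theta_{dk}\left(\pd{f_{\beta-e_k}}{x_d}+\sum_i f_{\beta-e_i-e_k}\pd{u_i}{x_d}+\frac{1}{2}\sum_{i,j}f_{\beta-e_i-e_j-e_k}\pd{\theta_{ij}}{x_d}\right)$, and $\sum_d(\beta_d+1)\left(\pd{f_{\beta+e_d}}{x_d}+\sum_i f_{\beta-e_i+e_d}\pd{u_i}{x_d}+\frac{1}{2}\sum_{i,j}f_{\beta-e_i-e_j+e_d}\pd{\theta_{ij}}{x_d}\right)$ --- exactly the three brackets multiplied by $Q^\alpha_\beta$ on the right of \eqref{eq:rotation_1}--\eqref{eq:rotation_3}, this decomposition being just \eqref{eq:origalsystem} with its left-hand terms sorted into the material-derivative group, the $\theta_{dk}$-group and the $(\gamma_d+1)\delta_{kd}$-group. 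The identical computation in the rotated frame yields the three left-hand sides of \eqref{eq:rotation_1}--\eqref{eq:rotation_3} as the $\mH^{[\Theta\rotation]}_\alpha(\bxi\rotation-\bu\rotation)$-coefficients of $\Phi_1,\Phi_2,\Phi_3$.

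\textbf{Step 3 and the main obstacle.} Applying Step 1 with $g=\Phi_n$, $n=1,2,3$, each left-hand coefficient equals $\sum_{|\beta|=|\alpha|}Q^\alpha_\beta$ times the matching right-hand coefficient, which is precisely \eqref{eq:rotation_1}, \eqref{eq:rotation_2} and \eqref{eq:rotation_3} with one common $Q^\alpha_\beta$. I expect the main obstacle to be Step 2: carefully re-deriving \eqref{eq:origalsystem} in the grouped form $\Phi_1+\Phi_2+\Phi_3$ and, above all, seeing that the degree-raising part of $\sum_d(\xi_d-u_d)\pd{f}{x_d}$ collapses to $-\sum_{d,k}\theta_{dk}\,\partial^2 f/(\partial\xi_k\,\partial x_d)$; this hinges on $\pdd{\xi_k}\mHT_\gamma(\bxi-\bu)=-\mHT_{\gamma+e_k}(\bxi-\bu)$ and is exactly what forces $Q^\alpha_\beta$ to be independent of $\Theta$, hence usable for all three pieces at once. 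The remaining ingredients --- the three frame-independence checks, the invertibility in Step 1, and the final assembly in Step 3 --- are short index manipulations.
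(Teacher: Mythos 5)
Your proposal is correct and follows essentially the same route as the paper: both arguments rest on the rotation relation \eqref{eq:rotationrelation} of the generalized Hermite functions (which you re-derive via the chain rule on $\partial^{\alpha}_{\boldsymbol{G}\bv}$ rather than citing the appendix's basis argument for $\mathcal{V}_m$, but the content is identical) and on expanding the same three frame-independent operators $\odd{\cdot}{t}$, $-\sum_{d,k}\theta_{dk}\partial^2/(\partial\xi_k\partial x_d)$ and $\sum_d(\xi_d-u_d)\partial/\partial x_d$ minus the second, whose $\mHT_{\alpha}$-coefficients are exactly the three brackets. The only difference is presentational: you package the three identities uniformly before comparing coefficients, whereas the paper treats them one after another.
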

\begin{proof}
    Since $\bu\rotation =\boldsymbol{G}\bu$ and $\bx\rotation =\boldsymbol{G}\bx$ hold, 
    and $\boldsymbol{G}$ is an orthogonal matrix, we have 
    \[
        \sum_{d=1}^Du\rotation _d\pd{\cdot}{x\rotation _d}=\sum_{d=1}^D\sum_{i=1}^Dg_{di}u_i
        \sum_{j=1}^Dg_{dj}\pd{\cdot}{x_j}=\sum_{i,j=1}^D\delta_{ij}u_i\pd{\cdot}{x_j}
        =\sum_{d=1}^Du_d\pd{\cdot}{x_d}.
    \]
    Here $\sum_{d=1}^Dg_{id}g_{jd}=\delta_{ij}$ is used in the second
    equality. Thus, we have 
    \[
        \odd{\cdot}{t}=\odd{\rotation \cdot}{\rotation t}.
    \]
    Since 
    \begin{align*}
        \odd{f(\bxi)}{t}&=\sum_{\alpha\in\bbN^D}
        \left( \odd{f_{\alpha}}{t} +
        \sum_{i=1}^Df_{\alpha-e_i}\odd{u_i}{t}+
        \sum_{i,j=1}^D\frac{f_{\alpha-e_i-e_j}}{2}\odd{\theta_{ij}}{t}\right)
        \mHT_{\alpha}(\bxi-\bu),
    \end{align*}
    considering the expansion \eqref{eq:doubleexpansion}, we have
    \begin{equation}\label{eq:compare_ft}
    \begin{split}
        &\sum_{\alpha\in\bbN^D}
        \left( \odd{f_{\alpha}}{t} +
        \sum_{i=1}^Df_{\alpha-e_i}\odd{u_i}{t}+
        \sum_{i,j=1}^D\frac{f_{\alpha-e_i-e_j}}{2}\odd{\theta_{ij}}{t}\right)
        \mHT_{\alpha}(\bxi-\bu) \\
        &\qquad\qquad = 
        \sum_{\alpha\in\bbN^D}
        \left( \odd{f\rotation _{\alpha}}{t} +
        \sum_{i=1}^Df\rotation _{\alpha-e_i}\odd{u\rotation _i}{t}+
        \sum_{i,j=1}^D\frac{f\rotation _{\alpha-e_i-e_j}}{2}\odd{\theta\rotation _{ij}}{t}\right)
        \mH^{[\Theta\rotation ]}_{\alpha}(\bxi\rotation -\bu\rotation ).
    \end{split}
    \end{equation}
    The rotation relation \eqref{eq:rotationrelation} of the
    generalized Hermite functions indicates that there exists a group
    of constants $Q_{\alpha}^{\beta}$, $|\alpha|=|\beta|$, such that
    \begin{equation}\label{eq:rotationrelation2}
        \mHT_{\alpha}(\bx)=\sum_{|\beta|=|\alpha|}Q_{\alpha}^{\beta}
        \mH^{[\boldsymbol{G}\Theta\boldsymbol{G}^T]}_{\alpha}
        (\boldsymbol{G}\bx),
    \end{equation}
    and the matrix $(Q_{\alpha}^{\beta})|_{|\alpha|=|\beta|=m}$,
    formulated by collecting these constants $Q_{\alpha}^{\beta}$, is
    non-singular. Substituting \eqref{eq:rotationrelation2} into
    \eqref{eq:compare_ft}, we obtain
    \begin{equation*}
        \begin{split}
            &\sum_{\alpha\in\bbN^D}\sum_{|\beta|=|\alpha|}
            \left( \odd{f_{\alpha}}{t} +
            \sum_{i=1}^Df_{\alpha-e_i}\odd{u_i}{t}+
            \sum_{i,j=1}^D\frac{f_{\alpha-e_i-e_j}}{2}\odd{\theta_{ij}}{t}\right)
            Q_{\alpha}^{\beta}\mH^{[\Theta\rotation ]}_{\beta}(\bxi\rotation -\bu\rotation ) \\
            &\qquad\qquad = 
            \sum_{\alpha\in\bbN^D}
            \left( \odd{f\rotation _{\alpha}}{t} +
            \sum_{i=1}^Df\rotation _{\alpha-e_i}\odd{u\rotation _i}{t}+
            \sum_{i,j=1}^D\frac{f\rotation _{\alpha-e_i-e_j}}{2}\odd{\theta\rotation _{ij}}{t}\right)
            \mH^{[\Theta\rotation ]}_{\alpha}(\bxi\rotation -\bu\rotation ).
        \end{split}
    \end{equation*}
    Comparing the coefficient of
    $\mH^{[\Theta\rotation ]}_{\alpha}(\bxi\rotation -\bu\rotation )$, we obtain
    \eqref{eq:rotation_1}.

    Analogously, since $\Theta\rotation=\boldsymbol{G}\Theta\boldsymbol{G}^T$,
    we have
    \[
        \sum_{d=1}^D(\xi_d\rotation -u_d\rotation )\pd{\cdot}{x\rotation _d}
        =\sum_{d,i,j=1}g_{id}(\xi_i-u_i)g_{jd}\pd{\cdot}{x_j}
        =\sum_{d=1}^D(\xi_d-u_d)\pd{\cdot}{x_d},
    \]
    and 
    \[
        \sum_{k,d=1}^D\theta\rotation _{kd}\pd{ }{\xi\rotation _k}\pd{\cdot}{x\rotation _d}=
        \sum_{k,d,i,j=1}^Dg_{ki}\theta_{ij}g_{dj}g_{ki}\pd{
        }{\xi_i}g_{dj}\pd{\cdot}{x_j}
        =\sum_{k,d=1}^D\theta_{kd}\pd{ }{\xi_k}\pd{\cdot}{x_d}.
    \]
    Since 
    \begin{equation}\label{eq:relation_tmp1}
    \begin{split}
        &\sum_{k,d=1}^D\theta_{kd}\pd{ }{\xi_k}\pd{f(\bxi)}{x_d}=\\
        &-\sum_{\alpha\in\bbN^D}\sum_{k,d=1}^D\theta_{kd}\left( 
        \pd{f_{\alpha-e_k}}{x_d}+\sum_{i=1}^Df_{\alpha-e_k-e_i}\pd{u_i}{x_d}
        +\sum_{i,j=1}^D\frac{f_{\alpha-e_k-e_i-e_j}}{2}\pd{\theta_{ij}}{x_d}
        \right)\mHT(\bxi-\bu),
    \end{split}
    \end{equation}
    using the same procedure in proving \eqref{eq:rotation_1}, we can
    have \eqref{eq:rotation_2}.

    For the operator
    $\displaystyle\sum_{d=1}^D(\xi_d-u_d)\pd{\cdot}{x_d}$, we have
    \begin{align*}
        &\sum_{d=1}^D(\xi_d-u_d)\pd{f(\bxi)}{x_d}=\\
        &-\sum_{\alpha\in\bbN^D}\left(\sum_{k,d=1}^D\theta_{kd}\left( 
        \pd{f_{\alpha-e_k}}{x_d}+\sum_{i=1}^Df_{\alpha-e_k-e_i}\pd{u_i}{x_d}
        +\sum_{i,j=1}^D\frac{f_{\alpha-e_k-e_i-e_j}}{2}\pd{\theta_{ij}}{x_d}
        \right)\right.\\
        &\left.\sum_{d=1}^D(\alpha_d+1)\left(
        \pd{f\rotation _{\alpha+e_d}}{x\rotation _d}
        +\sum_{i=1}^Df\rotation _{\alpha-e_i+e_d}\pd{u\rotation _i}{x\rotation _d}
        +\sum_{i,j=1}^D\frac{1}{2}f\rotation _{\alpha-e_i-e_j+e_d}\pd{\theta\rotation _{ij}}{x\rotation _d}
        \right)
        \right)\mHT_{\alpha}(\bxi-\bu).
    \end{align*}
    Noting that the second line is that in \eqref{eq:relation_tmp1},
    we can obtain \eqref{eq:rotation_3} by using the same procedure in
    proving \eqref{eq:rotation_1}. This ends the proof.
\end{proof}

\begin{proof}[Proof of Theorem \ref{thm:rotationinvariance}]
  Since $\boldsymbol{n}=(n_1,\cdots,n_D)$ is a unit vector, we let
  $\boldsymbol{G}=(g_{ij})_{D\times D}$ be the orthogonal rotation
  matrix with its first row as $(n_1,\cdots,n_D)$. With this rotation
  matrix, we define $\bw\rotation$ as \eqref{eq:relation_rut} and
  \eqref{eq:doubleexpansion}. Then the relation between $\bw$ and
  $\bw\rotation $ is linear. Therefore, there exists a constant matrix
  $\boldsymbol{Z}$ depending on $\boldsymbol{G}$ such that
    \[
        \bw\rotation =\boldsymbol{Z}\bw,
    \]
    and $\boldsymbol{Z}$ is invertible, since $\bw$ can be obtained
    from $\bw\rotation $ by applying the rotation matrix $\boldsymbol{G}^{-1}$.

    Lemma \ref{lem:rotation} have clearly shown that the ``rotated
    equations''
    \begin{equation}\label{eq:rotated_system}
        \boldsymbol{T}(\bw\rotation )\odd{\bw\rotation }{t}+\sum_{d=1}^D\boldsymbol{T(\bw\rotation )}
        \tbAM^{(d)}(\bw\rotation )\pd{\bw\rotation }{x\rotation _d}=0.
    \end{equation}
    can be deduced from \eqref{eq:regularizedsystem_t} by a linear
    transformation. Hence, there exists a square matrix
    $\boldsymbol{H}(\bw)$ such that
    \begin{equation}\label{eq:equivalentsystem}
        \boldsymbol{H}(\bw)\boldsymbol{T}(\bw)\odd{\bw}{t}
        +\sum_{d=1}^D\boldsymbol{H}(\bw)\boldsymbol{T}(\bw)
        \tbAM^{(d)}\pd{\bw}{x_d}=0
    \end{equation}
    is identical to \eqref{eq:rotated_system}. Matching the terms with
    time derivatives, one can find
    $\boldsymbol{H}(\bw)=\boldsymbol{T}(\bw\rotation )
    \boldsymbol{Z}\boldsymbol{T}^{-1}(\bw)$. Thus
    \eqref{eq:equivalentsystem} can be written as
    \[
        \boldsymbol{T}(\bw\rotation )\odd{\bw\rotation }{t}+\sum_{d=1}^D\boldsymbol{T(\bw\rotation )}
        \boldsymbol{Z}\tbAM^{(d)}(\bw)\pd{\bw}{x_d}=0.
    \]
    Noting $\bx\rotation =\boldsymbol{G}\bx$, we can rewrite the upper equation
    as
    \[
        \boldsymbol{T}(\bw\rotation )\odd{\bw\rotation }{t}+\sum_{j,d=1}^Dg_{jd}
        \boldsymbol{T(\bw\rotation )}\boldsymbol{Z}\tbAM^{(d)}(\bw)\pd{\bw}{x\rotation _j}=0.
    \]
    Comparing with \eqref{eq:rotated_system}, one concludes 
    \[
        \sum_{d=1}^Dg_{1d}
        \boldsymbol{T(\bw\rotation )}\boldsymbol{Z}\tbAM^{(d)}(\bw)
        =\boldsymbol{T}(\bw\rotation )\tbAM^{(1)}(\boldsymbol{Z}\bw)\boldsymbol{Z}.
    \]
    Multiplying both sides by
    $\boldsymbol{Z}^{-1}\boldsymbol{T}(\bw\rotation )^{-1}$, and
    noting $g_{1d}=n_d$, we obtain \eqref{eq:rotationinvariance}.

    Since the macroscopic temperature tensor are
    $\Theta\rotation =\boldsymbol{G}\Theta\boldsymbol{G}^T$(see
    \eqref{eq:relation_rut}), and particularly,
    $\theta\rotation _{11}=\sum_{i,j=1}^Dn_in_j\theta_{ij}$, the
    diagonalizability and \eqref{eq:rotationeigenvalue} is instantly
    obtained using Theorem \ref{thm:hyperbolic1} and Lemma
    \ref{lem:eigenpolynomial}.
\end{proof}


\section{Riemann Problem}
\label{sec:RiemannProblem}
Though the regularized moment system \eqref{eq:Rsystem} is given by
moment expansion up to an arbitrary order $M$ thus extremely complex,
the eigenvalues and eigenvectors of the coefficient matrix
$\tbAM^{(d)}$ are rather organized, which makes it possible to study
the structure of the elementary wave of this system with Riemann
initial value, including the rarefaction wave, contact discontinuity
and shock wave. Definitely, the structure of the elementary wave is
fundamental for further investigation into the behavior of the
solution of the system, and is instructional for studying the
Godunov-type Riemann solver. The investigation below shows that the
structure of the elementary wave of the Riemann problem is quite
natural an extension of that of Euler equations, which indicates that
the regularized moment system \eqref{eq:Rsystem} is actually a very
reasonable high order moment approximation of Boltzmann
equations. Similarly as analyses of Euler equations (see \cite{Toro}),
we consider the $x_1$-split, $D$-dimensional Riemann problem as below:
\begin{equation}\label{eq:RiemannProblem}
    \left\{
        \begin{array}{l}
            \pd{\bw}{t}+\left(u_1\boldsymbol{I}+\tbAM\right)\pd{\bw}{x_1}=0,\\
            \bw(x_1,t=0)=\left\{\begin{array}{ll}
                \bw_L, & \text{if } x_1<0,\\
                \bw_R, & \text{if } x_1>0,
            \end{array}\right.
        \end{array}\right.
\end{equation}
where $\tbAM$ is equal to $\tbAM^{(1)}$, and is defined in Definition
\ref{def:regularization}.

Now let us recall the properties of $\tbAM$. The characteristic
polynomial of $\tbAM$ is $\mathcal{P}_{D,M}(\lambda)$ defined in Lemma
\ref{lem:eigenpolynomial}, thus the eigenvalues of $\tbAM$ are
$\rC{i}{m}\sqrt{\theta_{11}}$ (multiplicity is ignored),
$i=1,\cdots,m$, $m=1,\cdots,M+1$, if $D\geq2$, and are
$\rC{i}{M+1}\sqrt{\theta_{11}}$, $i=1,\cdots,M+1$ if $D=1$. 
For each eigenvalue $\lambda$ of $\tbAdotM$, the corresponding
eigenvector $\bR$ can be obtained by extending the corresponding
diagonal block's eigenvector. And $\boldsymbol{P}    \bR$ \footnote{
    $\boldsymbol{P}$ satisfying $\boldsymbol{P}\bw=\bw'$.
} is
eigenvector of $\tbAM$ for the eigenvalue $\lambda$.  Property
\ref{cor:lambdar} indicates that $\lambda R_0\neq0$ holds, if and only
if $\He_{M+1}^{[\theta_{11}]}(\lambda)=0$ and $\lambda\neq0$.
Therefore, for the matrix $u_1\boldsymbol{I}+\tbAM$,
\begin{enumerate}
    \item 
        the characteristic polynomial is
        $\mathcal{P}_{D,M}(\lambda-u_1)$;
    \item 
        the eigenvalues are $u_1+\rC{i}{m}\sqrt{\theta_{11}}$
        (multiplicity is ignored), $i=1,\cdots,m$, $m=1,\cdots,M+1$,
        if $D\geq2$, and are $u_1+\rC{i}{M+1}\sqrt{\theta_{11}}$,
        $i=1,\cdots,M+1$ if $D=1$;
    \item
        for each eigenvalue $\lambda$, the corresponding eigenvectors
        are same as that of $\tbAM$. Particularly,
        $R_{e_1}=\frac{\lambda-u_1}{\rho}R_0$,
        $R_{2e_1}=\frac{(\lambda-u_1)^2}{2}R_0$;
    \item 
        the eigenvalue and the corresponding eigenvector satisfy the
        relation:
        \begin{equation}\label{eq:lambda_r_condition}
            (\lambda-u_1) R_0\neq0 \text{ holds, if and only if }
            \He_{M+1}^{[\theta_{11}]}(\lambda-u_1)=0 \text{ and }
            \lambda-u_1\neq0.
        \end{equation}
\end{enumerate}

Since the eigenvalues and eigenvectors of coefficient matrix
$u_1\boldsymbol{I}+\tbAM$ are clarified, we can obtain the following
result.
\begin{theorem}\label{thm:waves}
    Each characteristic field of \eqref{eq:RiemannProblem} is
    either genuinely nonlinear or linearly degenerate. And one
    characteristic field is genuinely nonlinear if and only if the
    eigenvalue $\lambda=u_1+\mathrm{C}\sqrt{\theta_{11}}$ satisfies
    $\He_{M+1}^{[\theta_{11}]}(\mathrm{C}\sqrt{\theta_{11}})=0$ and $\mathrm{C}\neq0$.
\end{theorem}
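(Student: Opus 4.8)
The plan is to test the genuine‑nonlinearity / linear‑degeneracy dichotomy directly, i.e. to compute $\nabla_{\bw}\lambda\cdot\bR$ for each eigenpair $(\lambda,\bR)$ of $u_1\boldsymbol{I}+\tbAM$ and to show that its vanishing is controlled by a condition on the constant $\mathrm{C}$ alone, not on the state. Write $\lambda=u_1+\mathrm{C}\sqrt{\theta_{11}}$, where $\mathrm{C}=\rC{i}{m}$ is a fixed zero of the Hermite polynomial $\He_m$ (hence a constant, not a function of $\bw$). The first observation I would make is that $\lambda$ depends on $\bw=(w_1,\dots,w_N)$ only through the three entries $w_1=\rho$, $w_2=u_1$ and $w_{\mN(2e_1)}=p_{11}/2$, since $\theta_{11}=p_{11}/\rho=2w_{\mN(2e_1)}/w_1$. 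A direct differentiation then gives
\[
    \pd{\lambda}{w_1}=-\frac{\lambda-u_1}{2\rho},\qquad
    \pd{\lambda}{w_2}=1,\qquad
    \pd{\lambda}{w_{\mN(2e_1)}}=\frac{\lambda-u_1}{\rho\,\theta_{11}},
\]
while all the remaining partial derivatives of $\lambda$ vanish.

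Next I would insert the components of an arbitrary right eigenvector $\bR$ of $u_1\boldsymbol{I}+\tbAM$ recorded in the discussion preceding this theorem, namely $R_{e_1}=\frac{\lambda-u_1}{\rho}R_0$ and $R_{2e_1}=\frac{(\lambda-u_1)^2}{2}R_0$, with $R_0$ the first component of $\bR$ (the one corresponding to $\rho$). Substituting and using $(\lambda-u_1)^2=\mathrm{C}^2\theta_{11}$, a short simplification yields
\[
    \nabla_{\bw}\lambda\cdot\bR
    =\frac{(\lambda-u_1)\bigl(\theta_{11}+(\lambda-u_1)^2\bigr)}{2\rho\,\theta_{11}}\,R_0
    =\frac{\mathrm{C}\,(1+\mathrm{C}^2)\sqrt{\theta_{11}}}{2\rho}\,R_0 .
\]
Since $\rho>0$, $\theta_{11}>0$ and $1+\mathrm{C}^2>0$ for any admissible $\bw$, this quantity is nonzero if and only if $(\lambda-u_1)R_0\neq0$. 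By \eqref{eq:lambda_r_condition} the latter holds precisely when $\He^{[\theta_{11}]}_{M+1}(\lambda-u_1)=0$ and $\lambda-u_1\neq0$; and because $\He^{[\theta_{11}]}_{M+1}(\mathrm{C}\sqrt{\theta_{11}})=\theta_{11}^{(M+1)/2}\He_{M+1}(\mathrm{C})$ with $\theta_{11}>0$, this is equivalent to $\He^{[\theta_{11}]}_{M+1}(\mathrm{C}\sqrt{\theta_{11}})=0$ and $\mathrm{C}\neq0$, which does not involve $\bw$. Hence for each fixed characteristic field $\nabla_{\bw}\lambda\cdot\bR$ is either nowhere zero (the field is genuinely nonlinear) or identically zero (linearly degenerate), and the first alternative occurs exactly under the stated condition, which proves both assertions.

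The computations above are routine; the points that need a little care are: (i) confirming that $\lambda$ genuinely depends on $\bw$ only through $\rho,u_1,p_{11}$, so that $\nabla_{\bw}\lambda\cdot\bR$ picks up exactly the three ``hydrodynamic'' entries of $\bR$, for which \eqref{eq:lambda_r_condition} and the eigenvector formulas preceding the theorem supply closed forms; and (ii) handling the multiple eigenvalues uniformly. For (ii): when $\mathrm{C}=0$ the displayed formula gives $\nabla_{\bw}\lambda\cdot\bR\equiv0$ for \emph{every} eigenvector, so each field attached to $\lambda=u_1$ is linearly degenerate regardless of the eigenbasis chosen; when $\He_{M+1}(\mathrm{C})=0$ and $\mathrm{C}\neq0$ the eigenvalue $u_1+\mathrm{C}\sqrt{\theta_{11}}$ is simple (cf. the proof of Corollary \ref{cor:lambdar}), so the field is well defined and $R_0\neq0$ forces genuine nonlinearity; in the remaining case $(\lambda-u_1)R_0\neq0$ fails for every eigenvector, giving linear degeneracy. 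Notably, no appeal to Conjecture \ref{con:He} is needed here, since the dichotomy and the characterization follow directly from the already-established relation \eqref{eq:lambda_r_condition}.
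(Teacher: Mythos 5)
Your proof is correct and follows essentially the same route as the paper: differentiate $\lambda=u_1+\mathrm{C}\sqrt{p_{11}/\rho}$ with respect to the three hydrodynamic entries of $\bw$, insert $R_{e_1}=\frac{\lambda-u_1}{\rho}R_0$ and $R_{2e_1}=\frac{(\lambda-u_1)^2}{2}R_0$ to obtain $\nabla_{\bw}\lambda\cdot\bR=\frac{(\mathrm{C}^2+1)\sqrt{\theta_{11}}}{2\rho}\mathrm{C}R_0$, and then invoke \eqref{eq:lambda_r_condition}. You in fact attach the conclusions to the two cases correctly (nonvanishing of $\mathrm{C}R_0$ yields genuine nonlinearity), whereas the paper's printed case analysis has the labels ``linearly degenerate'' and ``genuinely nonlinear'' transposed relative to its own theorem statement -- an evident typo; your only slip is an immaterial sign in the exponent of $\theta_{11}$ when rescaling $\He^{[\theta_{11}]}_{M+1}$ to $\He_{M+1}$.
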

\begin{proof}
    Let $\bR$ be an eigenvector of $\tbAdotM$ for the eigenvalue
    $\mathrm{C}\sqrt{\theta_{11}}$, then $\boldsymbol{P}\bR$ is an
    eigenvector of $u_1\boldsymbol{I}+\tbAM$ for the eigenvalue
    $\lambda=u_1+\mathrm{C}\sqrt{\theta_{11}}$.  Since
    \[
        \lambda=u_1+\mathrm{C}\sqrt{\dfrac{p_{11}}{\rho}},
    \]
    depends only on $\rho$, $u_1$, $p_{11}/2$, we have
    \begin{align*}
        \nabla_{\bw}\lambda \cdot \bR &=
        -\frac{\mathrm{C}\sqrt{\theta_{11}}}{2\rho} \cdot R_0 
        + 1\cdot \frac{\mathrm{C}\sqrt{\theta_{11}}}{\rho}R_0
        +\frac{\mathrm{C}}{\rho\sqrt{\theta_{11}}}\cdot\frac{\mathrm{C}^2\theta_{11}}{2}R_0\\
        &=\frac{(\mathrm{C}^2+1)\sqrt{\theta_{11}}}{2\rho}\mathrm{C}R_0.
    \end{align*}
    \eqref{eq:lambda_r_condition} shows that:
    \begin{enumerate}
        \item If
            $\He_{M+1}^{[\theta_{11}]}(\mathrm{C}\sqrt{\theta_{11}})=0$
            and $\mathrm{C}\neq0$, then
            $\mathrm{C}\sqrt{\theta_{11}}R_0\neq0$, thus
            $\nabla_{\bw}\lambda\cdot\bR\equiv0$. Hence, this
            characteristic field is linearly degenerate.
        \item If
            $\He_{M+1}^{[\theta_{11}]}(\mathrm{C}\sqrt{\theta_{11}})\neq0$
            or $\mathrm{C}=0$, then 
            $\mathrm{C}\sqrt{\theta_{11}}R_0=0$, thus
            $\nabla_{\bw}\lambda\cdot\bR\not\equiv0$. Hence, this
            characteristic field is genuinely nonlinear.
    \end{enumerate}
    This completes the proof.
\end{proof}

The waves associated with $\lambda$ satisfying
$\He_{M+1}^{[\theta_{11}]}(\lambda-u_1)\neq0$ or $\lambda-u_1=0$ are
\textit{contact discontinuities}, and those associated with $\lambda$
satisfying $\He_{M+1}^{[\theta_{11}]}(\lambda-u_1)=0$ and
$\lambda-u_1\neq0$ will either be rarefaction waves or shock waves.
Of course one does not know in advance what types of waves will
be present in the solution of the Riemann problem. Below, we will
study each type of waves separately in detail.

\subsection{Rarefaction Waves}
For the Riemann problem \eqref{eq:RiemannProblem}, 
if two states $\bw^L$ and $\bw^R$ are connected by a rarefaction wave
associated with genuinely nonlinear characteristic field $\bR$, which
is a right eigenvector of $\tbAM$ corresponding to the eigenvalue
$\lambda=u_1+\mathrm{C}\sqrt{\theta_{11}}$ satisfying
$\He_{M+1}^{[\theta_{11}]}(\mathrm{C}\sqrt{\theta_{11}})=0$ and
$\mathrm{C}\neq0$, then the following two conditions must be met:
\begin{enumerate}
    \item
        constancy of the \textit{generalized Riemann invariants}
        across the wave, which implies that the integral curve
        $\tilde{\bw}(\zeta)=(\tilde{w}_1(\zeta),\tilde{w}_2(\zeta),\cdots,\tilde{w}_N(\zeta))$
        in the $N$-dimensional phase space satisfies
        \begin{equation}\label{eq:rarefaction_c1}
            \od{\tilde{\bw}(\zeta)}{\zeta}=\bR(\tilde{\bw});
        \end{equation}
    \item
        divergence of characteristics
        \begin{equation}\label{eq:rarefaction_c2}
            \lambda^L=u_1^L+\mathrm{C}\sqrt{\theta_{11}^L}<
            u_1^R+\mathrm{C}\sqrt{\theta_{11}^R}=\lambda^R.
        \end{equation}
\end{enumerate}
Actually, for a given point $\bw^0$ in the phase space, the integral
curve across $\bw^0$ can be given. Since the results are rather
tedious, we only give partial explicit expressions of the integral
curve as below.  For the characteristic field $\bR$ corresponding to
the eigenvalue $\lambda=u_1+\mathrm{C}\sqrt{\theta_{11}}$,
\begin{enumerate}
    \item if $R_0\neq0$, we choose $R_0=\rho$, then
        \[
            R_{e_1}=\mathrm{C}\sqrt{\theta_{11}},\quad
            R_{2e_1} = \frac{\mathrm{C}^2}{2}p_{11},
        \]
        and then we have
        \begin{subequations}\label{eq:integralcurve1}
            \begin{align}
                \tilde{\rho}(\zeta) &= \rho^0\exp(\zeta),\\
                \tilde{u}_1(\zeta) &=
                u_1^0+\frac{2\mathrm{C}}{\mathrm{C}^2-1}\sqrt{\theta_{11}^0}
                \left( \exp\left(
                \frac{\mathrm{C}^2-1}{2}\zeta\right)-1\right),\\
                \tilde{p}_{11} &= p_{11}^0\exp(\mathrm{C}^2\zeta),
            \end{align}
        \end{subequations}
        where $\theta_{11}^0=p_{11}^0/\rho^0$;
    \item if $R_0=0$, then $R_{e_1}=R_{2e_1}=0$, we have
        \begin{equation}\label{eq:integralcurve2}
            \tilde{\rho}(\zeta)=\rho^0,\quad
            \tilde{u}_1(\zeta)=u_1^0,\quad
            \tilde{p}_{11}(\zeta)=p_{11}^0.
        \end{equation}
\end{enumerate}

One finds that \eqref{eq:integralcurve1} and
\eqref{eq:integralcurve2} satisfy \eqref{eq:rarefaction_c1}. Since for
the rarefaction waves, the eigenvalue
$\lambda=u_1+\mathrm{C}\sqrt{\theta_{11}}$ satisfies
$\He_{M+1}^{[\theta_{11}]}(\mathrm{C}\sqrt{\theta_{11}})=0$ and
$\mathrm{C}\neq0$, thus $R_0\neq0$, the eigenvalue of
$\tbAM(\tilde{\bw}(\zeta))$ is 
\begin{align*}
    \lambda(\tilde{\bw}(\zeta)) &=
    \tilde{u}_1(\zeta)+\mathrm{C}\sqrt{\frac{\tilde{p}_{11}(\zeta)}{\tilde{\rho}(\zeta)}}\\
    &= u_1^0+\frac{2\mathrm{C}}{\mathrm{C}^2-1}\sqrt{\theta_{11}^0}
    \left( \exp\left(
    \frac{\mathrm{C}^2-1}{2}\zeta\right)-1\right)
    +
    \mathrm{C}\sqrt{\frac{p_{11}^0\exp(\mathrm{C}^2\zeta)}{\rho^0\exp(\zeta)}}\\
    &=\lambda(\bw^0)+2\mathrm{C}\frac{\mathrm{C}^2+1}{\mathrm{C}^2-1}\sqrt{\theta_{11}^0}
    \left( \exp\left( \frac{\mathrm{C}^2-1}{2}\zeta \right) -1\right).
\end{align*}
It is clear that 
$\frac{\mathrm{C}^2+1}{\mathrm{C}^2-1}\sqrt{\theta_{11}^0}
\left( \exp\left( \frac{\mathrm{C}^2-1}{2}\zeta \right)
-1\right)$ has the same sign as $\zeta$ for any $\mathrm{C}\in\bbR$, hence, 
$\lambda(\tilde{\bw})\gtrless\lambda(\bw^0)$ if and only if
$\mathrm{C}\zeta\gtrless0$. Therefore, for the rarefaction waves,
noting \eqref{eq:rarefaction_c2}, we have that:
$\lambda=u_1+\mathrm{C}\sqrt{\theta_{11}}$ satisfies
$\He^{[\theta_{11}]}_{M+1}(\mathrm{C}\sqrt{\theta_{11}})=0$ and $\mathrm{C}\neq0$, and
\begin{align*}
    \text{if } \mathrm{C}>0, \text{ then } u_1^L<u_1^R,\quad
    p_{11}^L<p_{11}^R;\\
    \text{if } \mathrm{C}<0, \text{ then } u_1^L<u_1^R,\quad p_{11}^L>p_{11}^R.
\end{align*}

\subsection{Contact discontinuity}
The proof of theorem \ref{thm:waves} indicates that the contact
discontinuity can be founded if and only if the eigenvector $\bR$ and
the corresponding eigenvalue
$\lambda=u_1+\mathrm{C}\sqrt{\theta_{11}}$ satisfying
$\mathrm{C}R_0=0$. For a contact discontinuity,
\eqref{eq:rarefaction_c1} is still valid, and the divergence of
characteristics is replaced by
\begin{equation}\label{eq:contact_c}
    \lambda(\bw^L)=\lambda(\bw^R).
\end{equation}

If $\mathrm{C}\neq0$, then $R_0=0$, \eqref{eq:integralcurve2}
indicates that 
\[
    u_1^L=u_1^R,\quad p_{11}^L=p_{11}^R.
\]
If $\mathrm{C}=0$, then we can derive form both
\eqref{eq:integralcurve1} and \eqref{eq:integralcurve2} that the upper
equation is valid.

Summarizing the discussion above, we conclude that for a contact
discontinuity, 
$\lambda=u_1+\mathrm{C}\sqrt{\theta_{11}}$ satisfies
$\He^{[\theta_{11}]}_{M+1}(\mathrm{C}\sqrt{\theta_{11}})\neq0$ or $\mathrm{C}=0$, and
\begin{equation*}
    u_1^L=u_1^R,\quad p_{11}^L=p_{11}^R.
\end{equation*}

\subsection{Shock waves}
As is well known, the jump condition on the shock wave is sensitive to
the form of the hyperbolic equations. Thus, we rewrite
\eqref{eq:RiemannProblem} in an appropriate form, before the
discussion of the shock wave. However, \eqref{eq:RiemannProblem} can
not be written as conservation laws due to the regularization.
Nevertheless, since the regularization only modifies the governing
equations of $f_{\alpha}$ with $|\alpha|=M$, \eqref{eq:RiemannProblem}
can still preserve the conservation of the else moments with
orders from $0$ to $M-1$. Hence, \eqref{eq:RiemannProblem} can be
reformulated into $\mN( (M-1)e_D)$ conservation laws and $N-\mN(
(M-1)e_D)$ non-conservative equations.

Let 
\begin{equation}
    \boldsymbol{F}=(F_0, F_{e_1},
    \cdots,F_{e_D},F_{2e_1},\cdots,F_{Me_D})^T,\quad
    F_{\alpha}=\frac{1}{\alpha!}\int_{\bbR^D}\bxi^Df\dd\bxi,\quad
    |\alpha|\leq M,
\end{equation}
where $\bxi^\alpha=\prod_{d=1}^D\xi_d^{\alpha_d}$, and $F_0$ stands
for $F_{\alpha}|_{\alpha=0}$.  Then \eqref{eq:RiemannProblem} can be
written as
\begin{equation}\label{eq:conservation_form}
    \begin{split}
        &\pd{F_{\alpha}}{t}+(\alpha_1+1)\pd{F_{\alpha+e_1}}{x_1}=0,
        \quad  |\alpha|<M,\\
        &\begin{aligned}
            \pd{F_{\alpha}}{t}&+(\alpha_1+1)\pd{F_{\alpha+e_1}-f_{\alpha+e_1}}{x_1}\\
        &-(\alpha_1+1)\left( 
        \sum_{i=1}^Df_{\alpha+e_1-e_i}\pd{u_i}{x_1}
        +\sum_{i,j=1}^D\frac{f_{\alpha+e_1-e_i-e_j}}{2\rho}\left( 
        \pd{p_{ij}}{x_1} -\theta_{ij}\pd{\rho}{x_1}\right)
        \right)=0,\\
        \end{aligned}\\
        &\hspace{0.8\textwidth}|\alpha|=M.
    \end{split}
\end{equation}
The integral relation \eqref{eq:integralrelation} and the
quasi-orthogonal relation \eqref{eq:He_orthogonal} of the generalized
Hermite polynomial indicate that there exists a function $g_{\alpha}$
such that
\[
    F_{\alpha}=f_{\alpha}+g_{\alpha}(f_0,u_i,p_{ij},f_{\beta}|_{|\beta|<|\alpha|})
\]
Particularly, for $|\alpha|=M$, we have
$F_{\alpha+e_1}-f_{\alpha+e_1}$ only depends on $\boldsymbol{F}$, and 
\[
    \rho=F_0,\quad
    u_i=F_{e_i}/\rho,\quad
    p_{ij}=(1+\delta_{ij})F_{e_i+e_j}-\frac{F_{e_i}F_{e_j}}{F_0}.
\]
For convenience, the quasi-linear form of
\eqref{eq:conservation_form} can be written as
\begin{equation}\label{eq:quasi-form}
    \pd{\boldsymbol{F}}{t}+\boldsymbol{\Gamma}(\boldsymbol{F})\pd{\boldsymbol{F}}{x_1}=0,
\end{equation}
where $\boldsymbol{\Gamma}(\boldsymbol{F})$ is an $N\times N$ matrix
and depends on \eqref{eq:conservation_form}.

Since \eqref{eq:quasi-form} is not a conservative system, we have to
adopt the DLM theory \cite{Maso} to study the shock wave. For a shock
wave the two constant states $\boldsymbol{F}^L$ and $\boldsymbol{F}^R$
are connected through a single jump discontinuity in a genuinely
non-linear field $\bR$, which is a right eigenvector of $\tbAM$
corresponding to the eigenvalue
$\lambda=u_1+\mathrm{C}\sqrt{\theta_{11}}$ satisfying
$\He_{M+1}^{[\theta_{11}]}(\mathrm{C}\sqrt{\theta_{11}})=0$ and
$\mathrm{C}\neq0$, travelling at the speed $S$. Then the following two
conditions apply
\begin{itemize}
    \item
        Generalized Rankine-Hugoniot condition:
        \begin{equation}\label{eq:generalisedRH}
            \int_0^1\left[ S\boldsymbol{I}-
                \boldsymbol{\Gamma}\left(\boldsymbol{\Phi}(\nu;\boldsymbol{F}^L,\boldsymbol{F}^R)\right)
                \right]\pd{\boldsymbol{\Phi}}{\nu}\left( \nu; \boldsymbol{F}^L,\boldsymbol{F}^R\right)
                \dd\nu=0,
        \end{equation}
        where $\boldsymbol{I}$ is the $N\times N$ identity matrix, and 
        $\boldsymbol{\Phi}(\nu;\boldsymbol{F}^L,\boldsymbol{F}^R)$ is
        a locally Lipschitz mapping satisfying
        \[
            \boldsymbol{\Phi}(0;\boldsymbol{F}^L,\boldsymbol{F}^R)=\boldsymbol{F}^L,\quad
            \boldsymbol{\Phi}(1;\boldsymbol{F}^L,\boldsymbol{F}^R)=\boldsymbol{F}^R.
        \]
        We refer the readers to \cite{Maso} for details.
    \item
        Entropy condition:
        \begin{equation}\label{eq:shock-entropy}
            \lambda(\boldsymbol{F}^L)>S>\lambda(\boldsymbol{F}^R).
        \end{equation}
\end{itemize}
For conservation laws, \eqref{eq:generalisedRH} is the same as the
classical Rankine-Hugoniot condition, thus the first $\mN( (M-1)e_D)$
rows of \eqref{eq:generalisedRH} are independent of
$\boldsymbol{\Phi}$, which make it possible to deduce some properties
of the shock waves before specifying the form of $\boldsymbol{\Phi}$.

Since 
\[
    F_0=\rho,\quad
    F_{e_1}=\rho u_1,\quad
    F_{2e_1}=\frac{1}{2}\left( p_{11}+\rho u_1^2 \right),
\]
the first equation and the $(D+1)$-th equation of
\eqref{eq:generalisedRH} are
\begin{align}
    \rho^Lu_1^L-\rho^Ru_1^R&=S (\rho^L-\rho^R),\label{eq:RH1}\\
    p_{11}^L+\rho^L(u_1^L)^2-p_{11}^R-\rho^R(u_1^R)^2
    &=S(\rho^Lu_1^L-\rho^Ru_1^R).\label{eq:RH2}
\end{align}
We assert that $\rho^L\neq\rho^R$. Otherwise, if $\rho^L=\rho^R$, then
\eqref{eq:RH1} indicates $u_1^L=u_1^R$, and \eqref{eq:RH2} indicates
$p_{11}^L=p_{11}^R$. Thus
$\lambda(\boldsymbol{F}^L)=\lambda(\boldsymbol{F}^R)$ holds, which
contradicts \eqref{eq:shock-entropy}.  One can rewrite \eqref{eq:RH1}
as
\begin{equation}
    S = \frac{\rho^Lu_1^L-\rho^Ru_1^R}{\rho^L-\rho^R}\label{eq:S1}.
\end{equation}
Substituting \eqref{eq:S1} into \eqref{eq:shock-entropy}, and
multiplying both sides with $(\rho^L-\rho^R)^2$, we obtain
\begin{subequations}\label{eq:shock-c}
    \begin{align}
        \rho^L(\rho^L-\rho^R)(u_1^L-u_1^R)>\mathrm{C}(\rho^L-\rho^R)^2
        \sqrt{\theta_{11}^R},\label{eq:shock-c1}\\
        \rho^R(\rho^L-\rho^R)(u_1^L-u_1^R)>\mathrm{C}(\rho^L-\rho^R)^2
        \sqrt{\theta_{11}^L}.\label{eq:shock-c2}
    \end{align}
\end{subequations}

If $\mathrm{C}>0$, \eqref{eq:shock-c1} gives
\begin{equation}\label{eq:shock-ru}
    (\rho^L-\rho^R)(u_1^L-u_1^R)>0.
\end{equation}
Thus, we can divide both sides of \eqref{eq:shock-c} by
$ (\rho^L-\rho^R)(u_1^L-u_1^R)$ to arrive
\[
    \frac{\rho^L}{\sqrt{\theta_{11}^R}}>\frac{\mathrm{C}(\rho^L-\rho^R)}
    {u_1^L-u_1^R}>\frac{\rho^R}{\sqrt{\theta_{11}^L}},
\]
from which one directly gets
\begin{equation}\label{eq:shock-rp}
    \rho^Lp_{11}^L-\rho^Rp_{11}^R>0.
\end{equation}
Similarly, if $\mathrm{C}<0$, we have 
\begin{equation}\label{eq:shock-ru-rp}
    (\rho^L-\rho^R)(u_1^L-u_1^R)<0,\quad
    \rho^Lp_{11}^L-\rho^Rp_{11}^R<0.
\end{equation}

If $S\neq0$, then \eqref{eq:RH1} and \eqref{eq:RH2} can be
reformulated as
\begin{equation}\label{eq:shock-rpu}
    (\rho^L-\rho^R)(p_{11}^L-p_{11}^R)=\rho^L\rho^R(u_1^L-u_1^R)^2>0.
\end{equation}
Here $u_1^L\neq u_1^R$ is used. If $S=0$, then \eqref{eq:RH1} and 
\eqref{eq:RH2} can be reformulated as
\[
    p_{11}^L-p_{11}^R+\frac{\rho^R}{\rho^L}(u_1^R)^2(\rho^R-\rho^L)=0,
\]
thus we have
\begin{equation}\label{eq:shock-rpu2}
    (\rho^L-\rho^R)(p_{11}^L-p_{11}^R)>0.
\end{equation}
Collecting \eqref{eq:shock-rpu} and \eqref{eq:shock-rpu2}, we observe
that one and only one of the following two statements is valid
\begin{enumerate}
    \item $\rho^L>\rho^R$ and $p_{11}^L>p_{11}^R$;
    \item $\rho^L<\rho^R$ and $p_{11}^L<p_{11}^R$.
\end{enumerate}
If $\mathrm{C}>0$, \eqref{eq:shock-rp} indicates that the first
statement is valid. Then we can conclude $u_1^L>u_1^R$ by
\eqref{eq:shock-ru}. Similarly, if $\mathrm{C}<0$, the second
statement is valid and $u_1^L>u_1^R$.

Now we summarize all the discussion on the entropy condition of the
three types of elementary waves in the following theorem:
\begin{theorem}
    For the Riemann problem \eqref{eq:RiemannProblem}, for the wave of
    the family corresponding the eigenvalue
    $\lambda=u_1+\mathrm{C}\sqrt{\theta_{11}}$ of $\tbAM$, the
    macroscopic velocities and pressures on both sides of the wave
    have the relation with the type of the wave as in Table
    \ref{tab:RiemannProblem}.
    \begin{table}[h!]
        \centering
        \begin{tabular}{|l|c||c|}
            \hline
            Wave type   &   Eigenvalue & Velocity and Pressure    \\ [2mm]\hline
            \multirow{2}*{Rarefaction wave} & 
            $\frac{\mathstrut}{\mathstrut}$$\mathrm{C}>0$ & $u_1^L < u_1^R$,
            $p^L_{11}<p^R_{11}$  \\[2mm]\cline{2-3}
            & $\frac{\mathstrut}{\mathstrut}$$\mathrm{C}<0$ & $u_1^L < u_1^R$,
            $p^L_{11}>p^R_{11}$  \\[2mm]\cline{2-3}
            \hline\hline
            \multirow{2}*{Shock wave}   &
            $\frac{\mathstrut}{\mathstrut}$$\mathrm{C}>0$ & $u_1^L>u_1^R$,
            $p^L>p^R$   \\[2mm]\cline{2-3}
            &   $\frac{\mathstrut}{\mathstrut}$$\mathrm{C}<0$ &
            $u_1^L>u_1^R$, $p^L<p^R$   \\  [2mm] \cline{2-3}
            \hline\hline
            {Contact discontinuity}    & ---   & $\frac{\mathstrut}{\mathstrut}$
            $u_1^L=u_1^R$, $p_{11}^L=p_{11}^R$\\
            \hline
        \end{tabular}
        \caption{\label{tab:RiemannProblem}The relation between the
    type classification of elementary wave and the eigenvalue,
macroscopic velocity and pressure.}
    \end{table}
\end{theorem}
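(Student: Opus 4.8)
The plan is to assemble Table~\ref{tab:RiemannProblem} from the case analysis already carried out in the three preceding subsections; the statement is a consolidation, so the work is to identify which ingredient supplies each row. By Theorem~\ref{thm:waves} every characteristic field is genuinely nonlinear or linearly degenerate, and a field is genuinely nonlinear exactly when its eigenvalue $\lambda = u_1 + \mathrm{C}\sqrt{\theta_{11}}$ has $\He_{M+1}^{[\theta_{11}]}(\mathrm{C}\sqrt{\theta_{11}}) = 0$ and $\mathrm{C}\neq 0$ (equivalently $\mathrm{C}R_0 \neq 0$), which is what separates the rarefaction/shock rows from the contact row.

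For a rarefaction wave one is in the genuinely nonlinear case, so $R_0 \neq 0$ and the integral curve takes the form \eqref{eq:integralcurve1}. First I would substitute \eqref{eq:integralcurve1} into $\lambda(\tilde{\bw}(\zeta))$ to get the closed form $\lambda(\tilde{\bw}(\zeta)) = \lambda(\bw^0) + 2\mathrm{C}\,\tfrac{\mathrm{C}^2+1}{\mathrm{C}^2-1}\sqrt{\theta_{11}^0}\bigl(\exp(\tfrac{\mathrm{C}^2-1}{2}\zeta)-1\bigr)$, then note that the factor $\tfrac{\mathrm{C}^2+1}{\mathrm{C}^2-1}\sqrt{\theta_{11}^0}\bigl(\exp(\tfrac{\mathrm{C}^2-1}{2}\zeta)-1\bigr)$ has the same sign as $\zeta$ for every $\mathrm{C}\in\bbR$ (the denominator and the exponential factor change sign together across $\mathrm{C}^2 = 1$). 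Hence the divergence-of-characteristics condition \eqref{eq:rarefaction_c2} forces $\mathrm{C}\zeta>0$ along the admissible branch, and reading off \eqref{eq:integralcurve1} then gives $u_1^L < u_1^R$ in both cases, with $p_{11}^L < p_{11}^R$ when $\mathrm{C}>0$ and $p_{11}^L > p_{11}^R$ when $\mathrm{C}<0$. For the contact discontinuity, either $\mathrm{C}=0$ or $R_0=0$; in the latter case \eqref{eq:integralcurve2} makes $u_1$ and $p_{11}$ constant across the wave, and in the former both \eqref{eq:integralcurve1} and \eqref{eq:integralcurve2} collapse to the same conclusion, consistently with \eqref{eq:contact_c}. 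This settles the first and last rows.

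For the shock wave the key point is that, although \eqref{eq:RiemannProblem} is non-conservative, the regularization only alters the equations for $|\alpha|=M$, so the moments $F_\alpha$ of order $\le M-1$ still obey genuine conservation laws and $F_{\alpha+e_1}-f_{\alpha+e_1}$ depends only on $\boldsymbol{F}$ for $|\alpha|=M$; consequently the first $\mN((M-1)e_D)$ components of the generalized Rankine--Hugoniot condition \eqref{eq:generalisedRH} reduce to the classical jump conditions and are independent of the path $\bPhi$. In particular the density, $x_1$-momentum and $F_{2e_1}$ relations are exactly \eqref{eq:RH1}--\eqref{eq:RH2}. From here I would proceed as in the subsection: eliminate $S$ via \eqref{eq:S1}, insert it into the entropy inequality \eqref{eq:shock-entropy}, multiply by $(\rho^L-\rho^R)^2$ to obtain \eqref{eq:shock-c1}--\eqref{eq:shock-c2}, deduce $\mathrm{sgn}\bigl((\rho^L-\rho^R)(u_1^L-u_1^R)\bigr) = \mathrm{sgn}\,\mathrm{C}$, and combine with the identity $(\rho^L-\rho^R)(p_{11}^L-p_{11}^R) = \rho^L\rho^R(u_1^L-u_1^R)^2 > 0$ from \eqref{eq:shock-rpu} (and the $S=0$ analogue \eqref{eq:shock-rpu2}) to conclude that exactly one of $\rho^L>\rho^R,\,p_{11}^L>p_{11}^R$ or $\rho^L<\rho^R,\,p_{11}^L<p_{11}^R$ holds. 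Then $\mathrm{C}>0$ picks the first alternative by \eqref{eq:shock-rp}, with $u_1^L>u_1^R$ by \eqref{eq:shock-ru}, and $\mathrm{C}<0$ picks the second; this gives the middle two rows.

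The step I expect to be the genuine obstacle — the only place the non-conservative structure really intervenes — is justifying that the top three Rankine--Hugoniot equations are path-independent, which requires invoking the DLM framework of \cite{Maso} together with the bookkeeping from \eqref{eq:conservation_form} that the conservative block decouples from the regularized block at the level of jump conditions. Once that is in hand, everything else is the sign-chasing laid out above.
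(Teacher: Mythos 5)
Your proposal is correct and follows the paper's own route essentially verbatim: the theorem is stated as a summary of the three preceding subsections, and you reassemble each row from the same ingredients — the integral curves \eqref{eq:integralcurve1}--\eqref{eq:integralcurve2} and the monotonicity of $\lambda(\tilde{\bw}(\zeta))$ for the rarefaction and contact rows, and the path-independent block of the generalized Rankine--Hugoniot condition \eqref{eq:RH1}--\eqref{eq:RH2} combined with the entropy inequality \eqref{eq:shock-entropy} for the shock rows. The point you flag as the genuine obstacle (path-independence of the first $\mN((M-1)e_D)$ jump conditions in the DLM framework) is exactly the point the paper itself singles out, so nothing is missing.
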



\section*{Acknowledgements}
This research was supported in part by the National Natural Science
Foundation of China No. 11325102 and No. 91330205.


\section*{Appendix}
\appendix
\section{Generalized Hermite Polynomials}
\label{sec:HermitePolynomial}
To facilitate the derivation of Grad moment system \cite{Grad}, Grad
gave a note of $D$-dimensional probabilists' isotropic Hermite
polynomials in \cite{Grad1949note}. Maurice M. Mizrahi proposed the
physicists' generalized Hermite polynomials and derived several
properties in \cite{Mizrahi}. In this appendix, we will give a note on
the probabilists' generalized Hermite polynomials to facilitate the
derivation of the content.

\subsection{Definition and Notation}
Consider the normalized $D$-dimensional weight function 
\begin{equation}
    \weight(\bx)=\frac{1}{\sqrt{\det{(2\pi\Theta)}}}\exp\left(-\frac{1}{2}\bx^T\Theta^{-1}\bx\right),
    \quad\text{such that }\int_{\bbR^D}\weight(\bx)\dd\bx=1,
\end{equation}
where $\bx\in\bbR^D$ and $\Theta=(\theta_{ij})\in\bbR^{D\times D}$ is
a symmetrical positive definite matrix.
A probabilists' generalized Hermite polynomials can be defined by:
\begin{equation}\label{eq:definition_HeT}
    \HeT_{\alpha}(\bx)=
    \frac{(-1)^{|\alpha|}}{\weight(\bx)}\pd{^{\alpha}}{\bx^{\alpha}}\weight(\bx),
    \quad \alpha\in\bbN^D,
\end{equation}
where $\alpha=(\alpha_1,\cdots,\alpha_D)$ is a $D$-dimensional
multi-index, $|\alpha|=\sum_{d=1}^D\alpha_d$ and
$\pd{^{\alpha}}{\bx^{\alpha}}$ denotes by
$\dfrac{\partial^{\alpha}}{\partial x_1^{\alpha_1}\cdots \partial
x_D^{\alpha_D}}$. 
And denote the generalized Hermite functions by
\begin{equation}\label{eq:def_mH}
    \mHT_{\alpha}(\bx)=\weight(\bx)\HeT_{\alpha}(\bx),
    \quad \alpha\in\bbN^D.
\end{equation}
If any component of $\alpha$ is negative, $\HeT_{\alpha}$ and
$\mHT_{\alpha}$ are taken as zero for convenience.

Since $\Theta$ is a symmetrical positive definite $D\times D$
matrix, $\Theta^{-1}$ is also a symmetrical positive definite
matrix and denote $\Theta^{-1}=(\theta^{ij})$.  Let $\bX$ denote
$\Theta^{-1}\bx$, i.e.  $X_i=\sum_{j=1}^D\theta^{ij}x_j$.  Then the
corresponding transformation of the operators of partial
differentiation can be written as
\begin{equation}\label{eq:differentiation_relation}
    (\pd{\cdot}{x_1},\cdots,\pd{\cdot}{x_D})^T=
    \Theta^{-1}(\pd{\cdot}{X_1},\cdots,\pd{\cdot}{X_D})^T.
\end{equation}
\subsection{Properties of generalized Hermite polynomials}
\renewcommand{\labelenumi}{\arabic{enumi})}
\begin{enumerate}
    \item
        We first give the following relationships:
        \begin{equation}\label{eq:weight_differentiation}
            \pd{\weight(\bx)}{x_i}=-\weight(\bx)X_i,\quad
            \pd{\weight(\bx)}{X_i}=-\weight(\bx)x_i.
        \end{equation}
        \begin{proof}
            The first relation is obvious, and the second one can be
            directly derived from the first one and
            \eqref{eq:differentiation_relation}.
        \end{proof}
    \item The first few terms of $\HeT_{\alpha}(\bx)$ are,
        for $i,j,k,l=1,\cdots,D$,
        \begin{subequations}
            \begin{align}
            &\HeT_{0}(\bx)=1,\\
            &\HeT_{e_i}(\bx)=X_i,\label{eq:He_1}\\
            &\HeT_{e_i+e_j}(\bx)=X_iX_j-\theta^{ij},\label{eq:He_2}\\
            &\HeT_{e_i+e_j+e_k}(\bx)=X_iX_jX_k-\theta^{ij}X_k-\theta^{ik}X_j-\theta^{jk}X_i,\\
            \begin{split}
            \HeT_{e_i+e_j+e_k+e_l}(\bx)&=X_iX_jX_kX_l
            -\theta^{ij}X_kX_l-\theta^{ik}X_jX_l-\theta^{il}X_jX_k
            -\theta^{jk}X_iX_l\\
            &\quad-\theta^{jl}X_iX_k-\theta^{kl}X_iX_j
            +\theta^{ij}\theta^{kl}+\theta^{ik}\theta^{jl}+\theta^{il}\theta^{jk}.\\
            \end{split}
            \end{align}
        \end{subequations}
        Here $\theta^{ij}=\theta^{ji}$ is used. $e_i$, $i=1,\cdots,D$
        is the $D$-dimensional unit multi-index with its $i$-th entry
        equal to 1.
    \item
        Recurrence relation: for $i=1,\cdots,D$,
        \begin{equation}\label{eq:He_recurrence1}
            \HeT_{\alpha+e_i}(\bx)=X_i\HeT_{\alpha}(\bx)
            -\pd{\HeT_{\alpha}}{x_i}.
        \end{equation}
        \begin{proof}
            Considering the derivation of
            $\weight(\bx)\HeT_{\alpha}(\bx)$ with respect to
            $x_i$, $i=1,\cdots,D$,
            \begin{align*}
                \pd{\weight(\bx)\HeT_{\alpha}(\bx)}{x_i}
                &=\weight(\bx)\pd{\HeT_{\alpha}(\bx)}{x_i}
                -X_i\weight(\bx)\HeT_{\alpha}(\bx)
                & \text{using \eqref{eq:weight_differentiation}}\\
                &=(-1)^{|\alpha|}\pd{^{\alpha+e_i}}{\bx^{\alpha+e_i}}\weight(\bx)
                =-\weight(\bx)\HeT_{\alpha+e_i}(\bx),
                & \text{using \eqref{eq:definition_HeT}}
            \end{align*}
            and comparing the right hand sides of the two rows, we
            obtain \eqref{eq:He_recurrence1}.
        \end{proof}
    \item Differential Equation:
        \begin{equation}\label{eq:He_derivation}
            \pd{\HeT_{\alpha}(\bx)}{x_i}
            =\sum_{j=1}^D\theta^{ij}\alpha_j\HeT_{\alpha-e_j}(\bx).
        \end{equation}
        \begin{proof}
            We use mathematical induction to prove
            \eqref{eq:He_derivation}.

            \textit{Basis:} It is obvious that \eqref{eq:He_derivation}
            holds for $\alpha=0$, since $\HeT_{0}(\bx)=1$.
            
            \textit{Inductive step:} Assume \eqref{eq:He_derivation}
            holds for all $|\alpha|\leq n$, $n\in\bbN$.
            \eqref{eq:He_recurrence1} indicates for all
            $0<|\alpha|\leq n+1$,
            \begin{equation*}
                \HeT_{\alpha}=X_j\HeT_{\alpha-e_j}
                -\sum_{d=1}^D\theta^{jd}(\alpha_d-\delta_{jd})\HeT_{\alpha-e_j-e_d},
            \end{equation*}
            where $j\in\{1,\cdots,D\}$ satisfying $\alpha_j>0$.  For
            any $|\alpha|=n+1$, there exists a $j\in\{1,\cdots,D\}$
            such that $\alpha_j>0$. Then 
            \[
                \pd{\HeT_{\alpha}}{x_i}
                =\pd{X_j\HeT_{\alpha-e_j}}{x_i}-
                \sum_{d=1}^D\theta^{jd}(\alpha_d-\delta_{jd})\pd{\HeT_{\alpha-e_j-e_d}}{x_i}.
            \]
            Since 
            \begin{align*}
                \pd{X_j\HeT_{\alpha-e_j}}{x_i}&= \theta^{ij}\HeT_{\alpha-e_j}
                +X_j\sum_{d=1}^D\theta^{id}(\alpha_d-\delta_{jd})\HeT_{\alpha-e_j-e_d}\\
                &=\theta^{ij}\HeT_{\alpha-e_j}+\sum_{d=1}^D\theta^{id}(\alpha_d-\delta_{jd})
                \left(
                \HeT_{\alpha-e_d}+\sum_{k=1}^D\theta^{jk}(\alpha_k-\delta_{jd}-\delta_{kd})
                \HeT_{\alpha-e_j-e_k-e_d}\right)\\
                &=\sum_{d=1}^D\theta^{id}\alpha_d\HeT_{\alpha-e_d}
                + \sum_{d=1}^D\theta^{jd}(\alpha_d-\delta_{jd})\pd{\HeT_{\alpha-e_j-e_d}}{x_i}.
            \end{align*}
            Hence \eqref{eq:He_derivation} holds for all
            $|\alpha|=n+1$.

            This finishes the proof.
        \end{proof}
    \item
        Recurrence relation again: Combing \eqref{eq:He_recurrence1}
        and \eqref{eq:He_derivation}, we obtain
        for $i=1,\cdots,D$,
        \begin{equation}\label{eq:He_recurrence2}
            \HeT_{\alpha+e_i}(\bx)=X_i\HeT_{\alpha}(\bx)
            -\sum_{j=1}^D\theta^{ij}\alpha_j\HeT_{\alpha-e_j}(\bx).
        \end{equation}
        Furthermore, we have, for $d=1,\cdots,D$,
        \begin{equation}\label{eq:He_recurrence3}
        x_d\HeT_{\alpha}(\bx)=\sum_{j=1}^D\theta_{dj}\HeT_{\alpha+e_j}(\bx)+\alpha_d\HeT_{\alpha-e_d}(\bx).
        \end{equation}
        \begin{proof}
            \eqref{eq:He_recurrence2} is obviously holds.  Since
            $x_d=\sum_{i=1}^D\theta_{id}X_i$ and
            $\sum_{i=1}^D\theta_{di}\theta_{ij}=\delta_{dj}$,
            multiplying \eqref{eq:He_recurrence2} by $\theta_{id}$ and
            summing it by $i$ yield \eqref{eq:He_recurrence3}.
        \end{proof}
    \item Quasi orthogonal relation:
        \begin{equation}\label{eq:He_orthogonal}
            \int_{\bbR^D}\HeT_{\alpha}(\bx)\HeT_{\beta}(\bx)\weight\dd\bx 
            =C_{\alpha, \beta}\delta_{|\alpha|,|\beta|},
        \end{equation}
        where $C_{\alpha,\beta}$ is constant dependent on
        $\alpha,\beta$, and $\Theta$.
        \begin{proof}
            It is obvious \eqref{eq:He_orthogonal} holds for
            $\alpha=\beta=0$ with $C_{0,0}=1$. 
            Without loss of generality, we assume
            $0<|\alpha|\geq|\beta|$ and $\alpha_1>0$.
            Since 
            \begin{equation}\label{eq:He_mH_derivation}
                \pd{\weight\HeT_{\alpha}}{x_i}=(-1)^{|\alpha|}\pd{^{\alpha+e_i}}{\bx^{\alpha+e_i}}\weight
                =-\weight\HeT_{\alpha+e_i}
            \end{equation}
            holds for $i=1,\cdots,D$, using the integration of parts
            on
            $\int_{\bbR^D}\HeT_{\alpha}(\bx)\HeT_{\beta}(\bx)\weight\dd\bx$
            yields
            \begin{align}
            \int_{\bbR^D}\HeT_{\alpha}\HeT_{\beta}\weight\dd\bx
            &=-\int_{\bbR^{D-1}}\dd x_2\cdots\dd x_D\int_{\bbR}
            \HeT_{\beta}\dd\left(\weight\HeT_{\alpha-e_1}\right)\nonumber\\
            &=
            \int_{\bbR^D}\weight\HeT_{\alpha-e_1}
            \sum_{j=1}^D\theta^{1j}\beta_j\HeT_{\beta-e_j}\dd\bx\nonumber\\
            &=\sum_{j=1}^D\theta^{1j}\beta_j
            \int_{\bbR^D}\HeT_{\alpha-e_1}\HeT_{\beta-e_j}\weight\dd\bx,\label{eq:He_orthogonal_recurrence}
            \end{align}
            It's used that 
            $\int_{\bbR^{D-1}}\weight\HeT_{\alpha-e_1}\HeT_{\beta}\dd
            x_2\cdots\dd x_D |_{-\infty}^{\infty}=0$ holding for all
            $\alpha,\beta\in\bbN^D$.

            If $|\alpha|>|\beta|$, repeating
            \eqref{eq:He_orthogonal_recurrence} till some entry of
            the subscript of $\HeT$ negative, we can obtain 
            $\int_{\bbR^D}\HeT_{\alpha}(\bx)\HeT_{\beta}(\bx)\weight\dd\bx=0$.
        \end{proof}
    \item Integral relation: for $\alpha,\beta\in\bbN^D$ and
        $|\alpha|=|\beta|$,
        \begin{equation}\label{eq:integralrelation}
            \int_{\bbR^D}\weight\HeT_{\alpha}(\bx-\boldsymbol{a})\bx^{\beta}\dd\bx=
            \alpha!\delta_{\alpha,\beta},
        \end{equation}
        where $\bx^{\beta}=\prod_{d=1}^Dx_d^{\beta_d}$,
        $\delta_{\alpha,\beta}=\prod_{d=1}^D\delta_{\alpha_d,\beta_d}$,
        and $\boldsymbol{a}$ is a constant vector.
        \begin{proof}
            It is obvious that \eqref{eq:integralrelation} holds for
            $\alpha=0$. Without loss of generality, we assume
            $|\alpha|>0$.
            For convenience, let 
            $C_{\alpha,\beta}=\int_{\bbR^D}\weight\HeT_{\alpha}(\bx-\boldsymbol{a})\bx^{\beta}\dd\bx$.
            Since $|\beta|>0$, there exists an $i\in\{1,\cdots,D\}$
            such that $\beta_i>0$. Using the recurrence relation
            \eqref{eq:He_recurrence3}, we can obtain
            \begin{align*}
                C_{\alpha,\beta}&=\int_{\bbR^D}\weight\HeT_{\alpha}(\bx-\boldsymbol{a})
                \bx^{\beta-e_i}(x_i-a_i+a_i)\dd\bx\\
                &=a_iC_{\alpha,\beta-e_i}+\sum_{d=1}^D\theta_{ij}C_{\alpha+e_j,\beta-e_i}
                +\alpha_iC_{\alpha-e_i,\beta-e_i}.
            \end{align*}
            The quasi-orthogonal \eqref{eq:He_orthogonal} of Hermite
            polynomials indicates that 
            \[
                C_{\alpha,\beta}=0, \quad\text{if } |\beta|<|\alpha|.
            \]
            Thus we have
            \[
                C_{\alpha,\beta}=\alpha_iC_{\alpha-e_i,\beta-e_i}.
            \]
            Since $C_{0,0}=1$, using the mathematical induction on
            $\alpha$, one can prove \eqref{eq:integralrelation} is
            valid.
        \end{proof}
\end{enumerate}
\subsection{Properties of generalized Hermite functions}
\begin{enumerate}
    \item Recursion relation: for $i=1,\cdots,D$,
        \begin{align}
            \mHT_{\alpha+e_i}(\bx)&=X_i\mHT_{\alpha}(\bx)
            -\sum_{j=1}^D\theta^{ij}\alpha_j\mHT_{\alpha-e_j}(\bx),\label{eq:mH_recurrence1}\\
            x_d\mHT_{\alpha}(\bx)&=\sum_{j=1}^D\theta_{jd}\mHT_{\alpha+e_j}(\bx)+\alpha_d\mHT_{\alpha-e_d}(\bx).
            \label{eq:mH_recurrence2}
        \end{align}
        These two equations can be derived directly from
        \eqref{eq:He_recurrence2} and \eqref{eq:He_recurrence3},
        respectively.
    \item Quasi-orthogonality relation:
        \begin{equation}\label{eq:mH_orthogonal}
            \int_{\bbR^D}\mHT_{\alpha}(\bx)\mHT_{\beta}(\bx)\frac{1}{\weight}\dd
            \bx=C_{\alpha, \beta}\delta_{|\alpha|,|\beta|},
        \end{equation}
        where $C_{\alpha,\beta}$ is same as that in
        \eqref{eq:He_orthogonal}. And the equation can be obtained
        from \eqref{eq:He_orthogonal} directly.
    \item Differential relations:
        \begin{align}
            &\pd{\mHT_{\alpha}(\bx)}{x_i}=-\mHT_{\alpha+e_i}(\bx),\\
            & \od{\mH_{\alpha}^{[\Theta(\tau)]}(\bx(\tau))}{\tau}
            =-\sum_{i=1}^D\mH_{\alpha+e_i}^{[\Theta(\tau)]}(\bx(\tau))\od{x_i(\tau)}{\tau}
            +\frac{1}{2}\sum_{i,j=1}^D\mH_{\alpha+e_i+e_j}^{[\Theta(\tau)]}(\bx(\tau))\od{\theta_{ij}(\tau)}{\tau},
        \end{align}
        \begin{proof}
        The first relation is what \eqref{eq:He_mH_derivation} tells.
        For the second one, we first list some useful results in
        matrix calculus as following (see \cite{MatrixCookBook} for
        details). For a symmetrical positive definite matrix
        $\Theta(\tau)=(\theta_{ij}(\tau))\in\bbR^{D\times D}$, and 
        $\Theta^{-1}=(\theta^{ij})=(\theta^{(1)},\cdots,\theta^{(D)})$,
        \begin{subequations}\label{eq:MatrixCalculus}
        \begin{align}
            \od{\bx^T\Theta^{-1}\bx}{\tau}&=2\bx^T\Theta^{-1}\od{\bx}{\tau}
            +\bx^T\od{\Theta^{-1}}{\tau}\bx,\\
            \od{\Theta^{-1}(\tau)}{\tau}&=-\Theta^{-1}\od{\Theta(\tau)}{\tau}\Theta^{-1}
            =-\sum_{i,j=1}^D\theta^{(i)}(\theta^{(j)})^T\od{\theta_{ij}}{\tau},
            \\
            \od{\ln(|\Theta|)}{\tau}&
            ={\mathrm{trace}}\left( \Theta^{-1}\od{\Theta}{\tau} \right)
            =\sum_{i,j=1}^D\theta^{ij}\od{\theta_{ij}}{\tau}.
        \end{align}
        \end{subequations}
        Then we have the relation:
        \begin{align*}
            \od{w^{[\Theta(\tau)]}(\bx(\tau))}{\tau}&=
            \weight\left( -\frac{1}{2}\od{\ln(|\Theta|)}{\tau}
            -\Theta^{-1}\bx\od{\bx}{\tau}
            -\frac{1}{2}\bx^T\od{\Theta^{-1}}{\tau}\bx\right)\\
            &=\weight\sum_{i,j=1}^D\left( 
            -\frac{1}{2}\theta^{ij}\od{\theta_{ij}}{\tau}
            -\theta^{ij}x_j\od{x_i}{\tau}
            +\frac{1}{2}\bx^T\theta^{(i)}(\theta^{(j)})^T\bx\od{\theta_{ij}}{\tau}
            \right)\\
            &=-\weight\sum_{i}^D
            \HeT_{e_i}(\bx)\od{x_i}{\tau}
            +\weight\sum_{i,j=1}^D\frac{1}{2}\HeT_{e_i+e_j}(\bx)
            \od{\theta_{ij}}{\tau}.
        \end{align*}
        Here \eqref{eq:He_1}, \eqref{eq:He_2} and
        \eqref{eq:MatrixCalculus} are used.
        Since the definition of $\mHT$ \eqref{eq:def_mH} indicates
        \[
            (-1)^{|\alpha|}\pd{^{\alpha}}{\bx^{\alpha}}\left(
            \mHT_{\beta} \right)
            =\mHT_{\alpha+\beta},
        \]
        we have
        \begin{align*}
            \od{\mHT}{\tau}&=(-1)^{\alpha}\pd{^\alpha}{\bx^{\alpha}}
            \od{\weight}{\tau}\\
            &=-(-1)^{|\alpha|}\sum_{i=1}^D\od{x_i}{\tau}\pd{^{\alpha}}{\bx^{\alpha}}
            \mHT_{e_i}
            +(-1)^{|\alpha|}\frac{1}{2}\sum_{i,j=1}^D\od{\theta_{ij}}{\tau}
            \pd{^\alpha}{\bx^\alpha}\mHT_{e_i+e_j}\\
            &=-\sum_{i=1}^D\mHT_{\alpha+e_i}\od{x_i}{\tau}
            +\frac{1}{2}\sum_{i,j=1}^D\mH_{\alpha+e_i+e_j}\od{\theta_{ij}}{\tau}.
        \end{align*}
        \end{proof}
    \item Rotation relation: let $\boldsymbol{G}=(g_{ij})_{D\times D}$
        be a rotation matrix, -i.e. $\boldsymbol{G}$ is orthogonal and
        its determinant is 1, then there exists a group of constants
        depending on $\boldsymbol{G}$ such that
        \begin{equation}\label{eq:rotationrelation}
            \mHT_{\alpha}(\bx)=\sum_{|\beta|=|\alpha|}Q_{\alpha}^{\beta}
            \mH^{[\boldsymbol{G}\Theta\boldsymbol{G}^T]}_{\beta}
            (\boldsymbol{G}\bx),
        \end{equation}
        and if we collect $Q_{\alpha}^{\beta}$ as a matrix
        $(Q_{\alpha}^{\beta})|_{|\alpha|=|\beta|=m}$, then it is
        non-singular.
        \begin{proof}
            We define the linear space
            \[
                \mathcal{V}_m=\left\{ p(\bx)\weight\left|\,
                    \int_{\bbR^D}p(\bx)\mHT_{\beta}\dd\bx=0,
                    \forall |\beta|\neq m, ~p(\bx) \text{ is a
                    multivariate polynomial} \right.\right\}.
            \]
            With the quasi-orthogonal relation
            \eqref{eq:mH_orthogonal} of $\mHT_{\alpha}$, it is
            apparent that $\mHT_{\alpha}(\bx)$ with $|\alpha|=m$ is a
            basis of $\mathcal{V}_m$, and the dimension of
            $\mathcal{V}_m$ is $\mN(me_D)-\mN( (m-1)e_D)$.  Hence, we
            just need to prove that
            $\mH^{[\boldsymbol{G}\Theta\boldsymbol{G}^T]}_{\beta}
            (\boldsymbol{G}\bx)$ is also a basis of $\mathcal{V}_m$.
            Since
            \[
                w^{[\boldsymbol{G}\Theta\boldsymbol{G}^T]}(\boldsymbol{G}\bx)
                =\frac{1}{\sqrt{|2\pi\Theta|}}\exp\left( 
                -\frac{1}{2}(\boldsymbol{G}\bx)^T(\boldsymbol{G}\Theta\boldsymbol{G}^T)^{-1}
                (\boldsymbol{G}\bx) \right)
                =\weight(\bx),
            \]
            \newcommand\mHTGalpha{\mH^{[\boldsymbol{G}\Theta\boldsymbol{G}^T]}_{\alpha}(\boldsymbol{G}\bx)}
            $\mHTGalpha$ can also be defined as
            \[
                \mH^{[\boldsymbol{G}\Theta\boldsymbol{G}^T]}_{\alpha}
                (\boldsymbol{G}\bx) = 
                (-1)^{|\alpha|}\pd{^{\alpha}}{(\boldsymbol{G}\bx)}\weight(\bx).
            \]
            This means $\mHTGalpha$ is a rotation of
            $\mHT_{\alpha}(\bx)$, thus
            $\left\{\mHTGalpha\right\}_{|\alpha|=m}$ is linearly
            independent. 

            The quasi orthogonal relation \eqref{eq:mH_orthogonal}
            indicates that for any multi-dimensional polynomial
            $p(\bx)$ with its degree less than $|\alpha|$,
            $\int_{\bbR^D} \mHTGalpha p(\bx)\dd\bx=0$. Hence,
            $\mHTGalpha$ is orthogonal with $\mHT_{\beta}(\bx)$,
            $|\beta|<|\alpha|$.  Analogously, $\mHT_{\beta}(\bx)$,
            $|\beta|>|\alpha|$ is orthogonal with $\mHTGalpha$.
            So we have $\mHTGalpha\in\mathcal{V}_{|\alpha|}$.

            In conclusion, $\mHTGalpha$ with $|\alpha|=m$ is a basis
            of $\mathcal{V}_m$, thus prove the property.
        \end{proof}
\end{enumerate}
\subsection{Properties of one-dimensional case}
In this subsection, we study the generalized Hermite polynomials at
the case $D=1$, then the matrix $\Theta$ degenerates into a scalar
$\theta$, and $\Theta^{-1}$ turns to $1/\theta$. Particularly, if
$\theta=1$, the generalized Hermite polynomials $\Het_n(x)$ is exactly
the ordinary Hermite polynomials $\He_n(x)$. The ordinary Hermite
polynomials can be defined as
\begin{equation}
    \He_n(x)=(-1)^n\exp(x^2/2)\od{ }{x}\exp(-x^2/2),
\end{equation}
and has the properties (see \cite{Abramowitz} for details)
\begin{enumerate}
    \item Parity: $\He_n(-x) = (-1)^n \He_n(x)$;
    \item Recursion relation:
        $\He_{n+1}(x)=x\He_n(x)-n\He_{n-1}(x)$, $n\in\bbN^+$;
    \item Orthogonality relation:
        $\displaystyle\int_{\bbR}\He_n(x)\He_m(x)\exp(-x^2/2)\dd
        x=\sqrt{2\pi}m!\delta_{mn}$;
    \item Differential relation:
        $\He_n(x)'=n\He_{n-1}(x)$.
\end{enumerate}
Since the generalized Hermite polynomials in one-dimensional case
is defined as 
\begin{equation}
    \Het_n(x)=(-1)^n\exp(x^2/2\theta)\od{ }{x}\exp(-x^2/2\theta),
\end{equation}
hence, we can obtain
\begin{equation}
    \HeT(x)=\theta^{-n/2}\He_n(x/\sqrt{\theta}).
\end{equation}
Therefore, $\Het_n(x)$ satisfies the following properties:
\begin{enumerate}
    \item Parity: $\Het_n(-x) = (-1)^n \Het_n(x)$;
    \item Recurrence relation:
        $\Het_{n+1}(x)=\frac{x}{\theta}\Het_n(x)-\frac{n}{\theta}\Het_{n-1}(x)$, $n\in\bbN^+$;
    \item Orthogonal relation:
        $\displaystyle\int_{\bbR}\Het_n(x)\Het_m(x)\exp(-x^2/2)\dd
        x=\sqrt{2\pi}\frac{m!}{\theta^m}\delta_{mn}$;
    \item Differential relation:
        $\He_n(x)'=\frac{n}{\theta}\He_{n-1}(x)$.
\end{enumerate}

Next we discuss the zeros of $\He_n(x)$. The following properties can
be found in many handbooks such as \cite{Chihara}.
\begin{property}
    \begin{enumerate}
        \item $0$ is a zero of $\He_n(x)$ if $n$ is an odd number;
        \item
            There are $n$ different real zeros of $\He_n(x)$;
        \item There is a zero of $\He_{n+1}(x)$ between any two zeros
            of $\He_n(x)$;
        \item There is no same zeros of $\He_n(x)$ and
            $\He_{n+1}(x)$.
    \end{enumerate}
\end{property}
Furthermore, we conjecture that there is no same non-zero zeros of
$\He_n(x)$ and $\He_m(x)$ for all $m,n\in\bbN$ and $m\neq n$. However, 
to our knowledge, no proof for it has been given. We propose it as a
conjecture.
\begin{conjecture}
  For any $m,n\in\bbN$ and $m\neq n$, there is no common non-zero
  zeros of $\He_n(x)$ and $\He_m(x)$, -i.e.  $\nexists x\in\bbR
  \backslash \{ 0 \}$, such that $\He_n(x)=\He_m(x)=0$.
\end{conjecture}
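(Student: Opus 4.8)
The plan is to turn this into a coprimality statement about classical orthogonal polynomials and then attack that by irreducibility. First I would use the parity $\He_n(-x)=(-1)^n\He_n(x)$ to write $\He_{2k}(x)=G_k(x^2)$ and $\He_{2k+1}(x)=x\,\tilde G_k(x^2)$, where $G_k,\tilde G_k$ are monic of degree $k$; explicitly $G_k$ has constant term $(-1)^k(2k-1)!!$ and the coefficient of $y^{k-j}$ equal to $(-1)^j\binom{2k}{2j}(2j-1)!!$, while $\tilde G_k$ has constant term $(-1)^k(2k+1)!!$ and is, up to the obvious scalar, $L_k^{(1/2)}(y/2)$ (and $G_k$ is $L_k^{(-1/2)}(y/2)$). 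A common nonzero zero $a$ of $\He_n$ and $\He_m$ with $n\ne m$ then produces a common zero $t=a^2>0$ of two of the polynomials among $\{G_\bullet,\tilde G_\bullet\}$, so the conjecture is equivalent to the assertion that no two of the monic polynomials $G_k,\tilde G_k$ ($k\in\bbN$) share a zero unless they coincide. This places the problem in the Schur circle of ideas on irreducibility of Hermite and Laguerre polynomials.

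Granting irreducibility over $\mathbb{Q}$ of every $G_k$ and every $\tilde G_k$, the conjecture drops out: two monic irreducible rational polynomials that share a root are equal, but $G_k$ and $G_j$ (resp. $\tilde G_k$ and $\tilde G_j$) with $k\ne j$ have different degrees, and $G_k\ne\tilde G_k$ since $(2k-1)!!\ne(2k+1)!!$ for $k\ge1$, so no coincidence ever occurs and no nonzero common zero of distinct $\He_n,\He_m$ can exist. Thus the substance of a proof is the irreducibility. For many $k$ this is accessible by Eisenstein's criterion: taking an odd prime $p$ with $p\mid 2k-1$ (for instance $p=2k-1$ whenever $2k-1$ is prime) or $p=k$ when $k$ is an odd prime, one uses Kummer's theorem on $v_p\binom{2k}{2j}$ — here $2k\equiv 0$ or $1\pmod p$, which forces a carry in the relevant base-$p$ additions — to check that $p$ divides every non-leading coefficient of $G_k$ while $p^2\nmid(2k-1)!!$, and the same argument handles $\tilde G_k$ via $(2k+1)!!$. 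For the residual $k$ (those with both $k$ and $2k-1$ composite, of which there are infinitely many) Eisenstein fails, and one must invoke the Newton-polygon and ramification machinery of Schur–Coleman–Hajir–Filaseta to exclude nontrivial factorizations.

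A cheap, fully elementary partial result I would record en route is the small-gap case: the recursion $\He_{n+1}=x\He_n-n\He_{n-1}$ shows that if $\He_m(a)=0$ with $a\ne0$ then $\He_{m-1}(a)\ne0$ (else the downward recursion forces $\He_0(a)=0$), hence $\He_{m+2}(a)=-ma\,\He_{m-1}(a)\ne0$, so the conjecture holds for $|n-m|\le 2$; dividing $\He_n$ by $\He_m$ through the recursion, the general case reduces to showing a nonzero zero of $\He_m$ is never a zero of the associated Hermite polynomial $V_{n-m-1}(x;m)$, which is a Jacobi-matrix determinant. The hard part — and the reason this is only a conjecture — sits exactly here: $\He_m$ and $V_{n-m-1}(\cdot\,;m)$ are characteristic polynomials of two Jacobi matrices whose spectral intervals overlap, so no size or interlacing estimate separates their zeros, and I see no algebraic identity that forces the nonvanishing directly. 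The only route that looks viable is the global irreducibility above, and making the Newton-polygon/ramification arguments uniform in $k$ for the half-integer Laguerre parameters $\pm1/2$ is, to our knowledge, not yet settled.
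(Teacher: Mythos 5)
The statement you are asked to prove is, in the paper, explicitly left as a \emph{conjecture}: the authors state that to their knowledge no proof exists and that they have only verified it by computer algebra for $m,n\le 1000$. So there is no proof in the paper to compare against, and your proposal, by its own admission, is not a proof either: the reduction to the coprimality of the monic polynomials $G_k,\tilde G_k$ (with $\He_{2k}(x)=G_k(x^2)$, $\He_{2k+1}(x)=x\tilde G_k(x^2)$) is correct and cleanly argued, your coefficient formulas check out, and the small-gap argument via the three-term recursion (no common zeros for $|n-m|\le 2$, using $\He_{m+2}(a)=-ma\,\He_{m-1}(a)$ and the standard downward-recursion argument that consecutive Hermite polynomials are coprime) is valid; but the load-bearing step, the irreducibility over $\mathbb{Q}$ of every $G_k$ and every $\tilde G_k$, is only proved for the Eisenstein-accessible values of $k$ and is otherwise deferred to unspecified Newton-polygon machinery. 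As written, the argument therefore has a genuine gap at exactly the point you identify.

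That said, your closing assessment that the irreducibility ``for the half-integer Laguerre parameters $\pm1/2$ is not yet settled'' is historically inaccurate, and this is worth flagging because it changes the status of the whole problem: I.~Schur (1929, \emph{Einige S\"atze \"uber Primzahlen mit Anwendungen auf Irreduzibilit\"atsfragen, II}) proved precisely that $H_{2k}(x)$ and $H_{2k+1}(x)/x$, regarded as polynomials in $x^2$, are irreducible over $\mathbb{Q}$ for all $k$; these are, up to the rescaling $y\mapsto y/2$ and a constant, your $G_k$ and $\tilde G_k$ (equivalently $L_k^{(-1/2)}$ and $L_k^{(1/2)}$). Combined with your observation that $G_k$ and $\tilde G_k$ are monic of equal degree but have distinct constant terms $(-1)^k(2k-1)!!$ and $(-1)^k(2k+1)!!$, while all other pairs differ in degree, Schur's theorem makes any two distinct members of the family coprime and hence settles the conjecture affirmatively. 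So the route you chose is the right one, and the missing ingredient is a citation rather than new mathematics; but the proposal in its submitted form does not supply it, and the appeal to ``Schur--Coleman--Hajir--Filaseta machinery'' for general parameters is both unnecessary here and not a substitute for the actual argument.
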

We have verified this conjecture by computer algebra system for $m,n
\leq 1000$.

\section{Proof of Lemma \ref{lem:bAsecond}}
\label{sec:proofLemmabAsecond}
\begin{proof}[Proof of Lemma \ref{lem:bAsecond}]
    Choose $\bR\in\bbR^M$, such that
    \[
        R_1=1,\quad R_2=\rho\lambda,\quad
        R_k=\rho\He_{k-1}^{[\theta_{11}]}(\lambda)/(k-1)!-f_{(k-1)e_1}-\lambda
        f_{(k-2)e_1},\quad k=2,\dots,M,
    \]
    where $\lambda$ is an eigenvalue of $\thA_1$, then we verify that
    $\thA_1 \bR=\lambda \bR$, which is equivalent to 
    \begin{equation}\label{eq:lemmabC1}
        \thA_1(i,\cdot)\bR=\lambda R_i, \quad \text{for }i=1,\dots,M.
    \end{equation}
    It is easy to check \eqref{eq:lemmabC1} holding for $i = 1, 2, 3$.
    For $i=4,\dots,M-1$, since the regularization changes only the
    entries of the last row of $\thA_1$, thus \eqref{eq:hat1} gives us
    the entries of $\thA_1$ as
    \begin{align*}
        &\thA_1(i,1:3)=(i f_{ie_1}, \left(
        (i-1)f_{(i-1)e_1}+\theta_{11}f_{(i-3)e_1} \right)/\rho,
        -2f_{(i-2)e_1}/\rho),\\
        &\thA_1(i,i-1:i+1)=(\theta_{11},0,i).
    \end{align*}
    Note that any entries of $\thA_1(i,\cdot)$, if is not given above,
    is zero. And some entries, which is double defined above, is the
    sum of the both expressions. Thus
    \begin{align*}
        \thA_1(i,\cdot)\bR&=i f_{ie_1}\cdot 1+
        \left((i-1)f_{(i-1)e_1}+\theta_{11}f_{(i-3)e_1}
        \right)/\rho\cdot\rho\lambda \\
        &\quad-2f_{(i-2)e_1}/\rho\cdot\rho\He_2^{[\theta_{11}]}(\lambda)/2!
        +\theta_{11}\cdot\left( 
        \rho\He_{i-2}^{[\theta_{11}]}(\lambda)/(i-2)!-f_{(i-2)e_1}-\lambda
        f_{(i-3)e_1} \right)\\
        &\quad+i\cdot
        \left(\rho\He_{i}^{[\theta_{11}]}(\lambda)/i!-f_{ie_1}-\lambda
        f_{(i-1)e_1}\right)\\
        &=\rho\theta_{11}\frac{\He_{i-2}^{[\theta_{11}]}(\lambda)}{(i-2)!}
        +\rho\frac{i \He_i^{[\theta_{11}]}(\lambda)}{i!}
        -\lambda f_{(i-1)e_1}-\lambda^2f_{(i-2)e_1}\\
        &=\lambda\left( 
        \rho\He_{i-1}^{[\theta_{11}]}(\lambda)/(i-1)!-f_{(i-1)e_1}-\lambda
        f_{(i-2)e_1} \right)\\
        &=\lambda R_i.
    \end{align*}
    Note that
    $\He_{n+1}^{[\theta_{11}]}(\lambda)+n\He_{n-1}^{[\theta_{11}]}(\lambda)=
    \lambda\He_n^{[\theta_{11}]}(\lambda)$ is used in the calculation above.
    For the case $i=M$, the regularization \eqref{eq:regularization}
    and \eqref{eq:hat1} gives us that
    \begin{align*}
        &\thA_1(M, 1:3)=(0,
         \left( -f_{(i-1)e_1}+\theta_{11}f_{(i-3)e_1} \right)/\rho,
        -2f_{(i-2)e_1}/\rho),\\
        &\thA_1(M,M-1:M)=(\theta_{11},0).
    \end{align*}
    Similarly, any entries of $\thA_1(i,\cdot)$, if is not given
    above, is taken as zero. And some entries, which is given twice
    above (when $M\leq4$), is the sum of the both expressions. Thus
    \begin{align*}
        \thA_1(i,\cdot)\bR&=
        \left(-f_{(i-1)e_1}+\theta_{11}f_{(i-3)e_1}
        \right)/\rho\cdot\rho\lambda \\
        &\quad-2f_{(i-2)e_1}/\rho\cdot\rho\He_2^{[\theta_{11}]}(\lambda)/2!
        +\theta_{11}\cdot\left( 
        \rho\He_{i-2}^{[\theta_{11}]}(\lambda)/(i-2)!-f_{(i-2)e_1}-\lambda
        f_{(i-3)e_1} \right)\\
        &=\rho\theta_{11}\frac{\He_{i-2}^{[\theta_{11}]}(\lambda)}{(i-2)!}
        -\lambda f_{(i-1)e_1}-\lambda^2f_{(i-2)e_1}\\
        &=\lambda\left( 
        \rho\He_{i-1}^{[\theta_{11}]}(\lambda)/(i-1)!-f_{(i-1)e_1}-\lambda
        f_{(i-2)e_1} \right)
        -\rho\He_{M}^{[\theta_{11}]}(\lambda)/M!  \\
        &=\lambda R_i
        -\rho\He_{M}^{[\theta_{11}]}(\lambda)/M!.
    \end{align*}
    Hence, if $\lambda$ satisfies $\He_M^{[\theta_{11}]}(\lambda)=0$,
    then $(\lambda, R)$ is a pair of eigenvalue/eigenvector of
    $\thA_1$.

    It is clear that any root of $\He_M^{[\theta_{11}]}(\lambda)$ is
    an eigenvalue of $\thA_1$. Since $\He_M^{[\theta_{11}]}(\lambda)$
    is a monic polynomial, the characteristic polynomial of $\thA_1$
    is $\He_M^{[\theta_{11}]}(\lambda)$. This proves the lemma.
\end{proof}

\section{Proof of Lemma \ref{lem:properprolongation}}
\label{sec:proofLemmaprolongation}
Before we begin the proof of Lemma \ref{lem:properprolongation}, 
we list some results on linear algebra without proof.
\begin{lemma}\label{lem:simpleeigenvalue}
    For a $k\times k$ block lower triangular matrix $\bA\in\bbR^N$
    with the size of diagonal block $n_i\times n_i$,
    $n_1+\cdots+n_k=N$, $\br_i$ is an eigenvector of the $i$-th
    diagonal block for the eigenvalue $\lambda$. 
    If $\lambda$ is a simple eigenvalue of $\bA$, then there exists a
    proper prolongation of $\br_i$.
\end{lemma}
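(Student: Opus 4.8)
The plan is to exploit the block lower triangular structure of $\bA$ to turn the existence question into a routine forward substitution, the only nontrivial input being a multiplicity count for $\lambda$. First I would record that for a block lower triangular matrix the characteristic polynomial factors over the diagonal blocks,
\[
    \det(\lambda\boldsymbol{I}-\bA)=\prod_{j=1}^{k}\det(\lambda\boldsymbol{I}-\bA_{jj}),
\]
where $\bA_{jj}\in\bbR^{n_j\times n_j}$ is the $j$-th diagonal block. If $\lambda$ is a \emph{simple} eigenvalue of $\bA$, then it is a root of exactly one factor, and of that factor with multiplicity one. Since $\br_i\neq0$ satisfies $\bA_{ii}\br_i=\lambda\br_i$, that distinguished block is the $i$-th; hence $\bA_{jj}-\lambda\boldsymbol{I}$ is nonsingular for every $j\neq i$.

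Next I would build the prolongation block by block. Write a candidate $\bR\in\bbR^N$ in block form $\bR=(\bR^{(1)},\dots,\bR^{(k)})$, $\bR^{(j)}\in\bbR^{n_j}$, and prescribe $\bR^{(j)}=0$ for $j<i$ and $\bR^{(i)}=\br_i$; this already makes $\bR$ proper (and nonzero, since $\br_i\neq0$). Using $\bA_{lm}=0$ for $m>l$, the $l$-th block row of $\bA\bR=\lambda\bR$ reads $\sum_{m\le l}\bA_{lm}\bR^{(m)}=\lambda\bR^{(l)}$. For $l<i$ both sides vanish; for $l=i$ it collapses to $\bA_{ii}\br_i=\lambda\br_i$, true by hypothesis; and for $l>i$ it becomes
\[
    (\bA_{ll}-\lambda\boldsymbol{I})\bR^{(l)}=-\sum_{m=i}^{l-1}\bA_{lm}\bR^{(m)},
\]
whose right-hand side only involves the already-determined blocks $\bR^{(i)},\dots,\bR^{(l-1)}$. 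Since $\bA_{ll}-\lambda\boldsymbol{I}$ is invertible, this determines $\bR^{(l)}$ uniquely; running $l=i+1,\dots,k$ finishes the construction and verifies $\bA\bR=\lambda\bR$.

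I do not expect a real obstacle here: the body of the argument is just forward substitution. The single place where the hypothesis is genuinely used is the opening step, namely that simplicity of $\lambda$ for $\bA$ forbids $\lambda$ from being an eigenvalue of any diagonal block other than the $i$-th, which is exactly what guarantees invertibility of $\bA_{ll}-\lambda\boldsymbol{I}$ for $l>i$ and thereby legitimizes the recursion. (When $\lambda$ is a repeated eigenvalue occupying several blocks, the recursion can break down; that is precisely why the stronger statement Lemma \ref{lem:properprolongation} needs a separate, more careful argument rather than this short one.)
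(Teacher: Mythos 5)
Your proof is correct. The paper states Lemma \ref{lem:simpleeigenvalue} explicitly without proof (``we list some results on linear algebra without proof''), and your argument --- factoring the characteristic polynomial over the diagonal blocks to conclude $\bA_{jj}-\lambda\boldsymbol{I}$ is nonsingular for $j\neq i$, then constructing the prolongation by forward substitution with zero blocks above the $i$-th --- is exactly the standard argument that fills this gap, and it correctly isolates where simplicity of $\lambda$ is used.
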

\begin{lemma}\label{lem:multieigenvalue}
    $\bA$ is defined the same as that in Lemma \ref{lem:simpleeigenvalue},
    and denote $A_{ij}$ the $i$-th row, $j$-th column block of $\bA$.
    Each diagonal block of $\bA$ is diagonalizable with real
    eigenvalues.  $\br_i$ is an eigenvector of $A_{ii}$ for the
    eigenvalue $\lambda$. $\lambda$ is an eigenvalue of $A_{jj}$,
    $j\neq i$, and is not an eigenvalue of any other diagonal block of
    $\bA$. If there exists a proper prolongation to the matrix
    \[
        \begin{pmatrix}
            A_{ii}&0\\
            A_{ij}&A_{jj}
        \end{pmatrix},
        \text{ if } i < j, \text{ or }
        \begin{pmatrix}
            A_{jj}&0\\
            A_{ji}&A_{ii}
        \end{pmatrix},
        \text{ if } i > j,
    \]
    then there exists a proper prolongation of $\br_i$ to the matrix
    $\bA$.
\end{lemma}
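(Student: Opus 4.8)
The plan is to build a proper prolongation $\bR\in\bbR^N$ of $\br_i$ by hand, solving the eigenvector equation $\bA\bR=\lambda\bR$ one block row at a time. I partition $\bR=(\bR_1,\dots,\bR_k)$ along the block structure; ``proper'' forces $\bR_1=\dots=\bR_{i-1}=0$, and I set $\bR_i=\br_i$. Since $\bA$ is block lower triangular, its $m$-th block row of $\bA\bR=\lambda\bR$ is $(A_{mm}-\lambda\boldsymbol{I})\bR_m=-\sum_{l<m}A_{ml}\bR_l$, involving only $\bR_1,\dots,\bR_m$; it holds automatically for $m<i$ (every term vanishes) and for $m=i$ (it is just $A_{ii}\br_i=\lambda\br_i$). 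So the task reduces to running this recursion forward to pin down $\bR_{i+1},\dots,\bR_k$.

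First I would dispose of the case $i>j$: then block $j$ sits among the blocks already set to zero, so the row-$j$ constraint is vacuous, and for every $m\in\{i+1,\dots,k\}$ one has $\lambda\notin\operatorname{spec}(A_{mm})$, hence $A_{mm}-\lambda\boldsymbol{I}$ is invertible and the recursion determines $\bR_m$ uniquely -- no extra hypothesis is needed, and the assembled $\bR$ is the required prolongation. For $i<j$ the same remark gives a unique $\bR_m$ for every $m\in\{i+1,\dots,j-1\}\cup\{j+1,\dots,k\}$, and the single delicate point is the row $m=j$, where $A_{jj}-\lambda\boldsymbol{I}$ is singular: the equation $(A_{jj}-\lambda\boldsymbol{I})\bR_j=-\sum_{l=i}^{j-1}A_{jl}\bR_l$ is solvable if and only if $\rank\big([\,A_{jj}-\lambda\boldsymbol{I}\mid \sum_{l=i}^{j-1}A_{jl}\bR_l\,]\big)=\rank(A_{jj}-\lambda\boldsymbol{I})$, exactly in the spirit of Example \ref{exam:eigenvectorprolongation}. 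This is precisely what the hypothesis supplies: the assumed proper prolongation of $\br_i$ to the block matrix assembled from the blocks $i,\dots,j$ (with corner blocks $A_{ii}$ and $A_{jj}$) must, by uniqueness of the forward solve on the intermediate rows $i+1,\dots,j-1$, agree with the $\bR_{i+1},\dots,\bR_{j-1}$ already computed, and its $j$-th block row is exactly the equation above; hence $\bR_j$ exists. Continuing the forward solve through $j+1,\dots,k$ and verifying $\bA\bR=\lambda\bR$ with the leading blocks zero then completes the proof.

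The main obstacle -- really the only nonroutine step -- is this solvability at block $j$: one must guarantee that the accumulated right-hand side $\sum_{l=i}^{j-1}A_{jl}\bR_l$, which has folded in the intermediate diagonal blocks $A_{i+1,i+1},\dots,A_{j-1,j-1}$ through the inverses $(A_{mm}-\lambda\boldsymbol{I})^{-1}$ (legitimate because $\lambda$ is an eigenvalue of no diagonal block other than $A_{ii}$ and $A_{jj}$), still lies in the range of $A_{jj}-\lambda\boldsymbol{I}$. Reading the hypothesis as a proper prolongation to the sub-block-matrix on indices $i,\dots,j$ makes this automatic via the uniqueness argument above; the diagonalizability of the diagonal blocks is used only to ensure these inverses exist off the relevant eigenspaces, and all remaining verifications are routine bookkeeping with the block-triangular structure, parallel to Lemma \ref{lem:simpleeigenvalue}.
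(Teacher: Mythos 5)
The paper states this lemma explicitly ``without proof,'' so there is nothing to compare your argument against; it has to stand on its own. The forward block-substitution skeleton is the right one, and several of your observations are correct: the rows $m<i$ and $m=i$ are automatic, every block row $m$ with $\lambda$ not an eigenvalue of $A_{mm}$ is uniquely solvable since $A_{mm}-\lambda\boldsymbol{I}$ is invertible, and in the case $i>j$ no hypothesis is needed at all. The problem is the one step you yourself identify as the only nonroutine one: solvability at block row $j$ when $i<j$, where the equation is $(A_{jj}-\lambda\boldsymbol{I})\bR_j=-A_{ji}\br_i-\sum_{l=i+1}^{j-1}A_{jl}\bR_l$ (reading $A_{ij}$ in the lemma's display as $A_{ji}$, since the former vanishes for a block lower triangular matrix). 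You discharge this by ``reading the hypothesis as a proper prolongation to the sub-block-matrix on indices $i,\dots,j$.'' But the hypothesis as stated concerns only the $2\times2$ block matrix built from $A_{ii}$, $A_{ji}$, $A_{jj}$; that matrix has no intermediate rows, and its prolongation condition says exactly that $A_{ji}\br_i\in\mathrm{range}(A_{jj}-\lambda\boldsymbol{I})$. It says nothing about the accumulated terms $\sum_{l=i+1}^{j-1}A_{jl}\bR_l$, which are what actually decide solvability at row $j$. So what you prove is a different, weaker lemma with the strengthened hypothesis that a proper prolongation exists for the full leading block submatrix through index $j$.

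The gap is not cosmetic: with only the stated $2\times2$ hypothesis the conclusion can fail. Take $k=3$, $i=1$, $j=3$, block sizes $1,1,2$, $\lambda=0$, and
\[
A_{11}=(0),\quad A_{22}=(1),\quad A_{33}=\begin{pmatrix}0&0\\0&1\end{pmatrix},\quad
A_{21}=(-1),\quad A_{31}=\begin{pmatrix}0\\1\end{pmatrix},\quad A_{32}=\begin{pmatrix}1\\0\end{pmatrix},\quad \br_1=(1).
\]
All diagonal blocks are diagonalizable with real eigenvalues, and $\lambda=0$ is an eigenvalue of $A_{11}$ and $A_{33}$ only. The vector $(1,t,-1)^T$ is a proper prolongation of $\br_1$ to $\left(\begin{smallmatrix}A_{11}&0\\A_{31}&A_{33}\end{smallmatrix}\right)$, so the hypothesis holds; but in the full matrix the second row forces $\bR_2=1$ while the first component of the third block row forces $A_{32}\bR_2$ to have vanishing first entry, i.e.\ $\bR_2=0$, so no prolongation of $\br_1$ exists at all. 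Hence either the lemma must be restated with your stronger hypothesis (which is what your argument, correctly, actually uses), or one must supply an application-specific argument that the intermediate contributions $A_{jl}\bR_l$, $i<l<j$, land in $\mathrm{range}(A_{jj}-\lambda\boldsymbol{I})$. Your write-up does neither, so as a proof of the lemma as stated it has a genuine hole at precisely the step on which everything turns.
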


\begin{proof}[Proof of the Lemma \ref{lem:properprolongation}]
    The case $D=1$ has been proved in \cite{Fan}, here we just
    consider the case $D\geq 2$.
    Define $\bR_{\alpha}=(R_{\alpha,\beta})^T$, where the order of
    $R_{\alpha,\beta}$ in $\bR_{\alpha}$ is the lexicographic order of
    $(\alpha_2,\cdots,\alpha_D,\alpha_1)$, same as that in $\bw'$.
    Similarly, define $\br_{\alpha} =
    (r_{\alpha,1},\cdots,r_{\alpha,M+1-|\hat{\alpha}|})^T$.
    If $\bR_{\alpha}$ is an prolongation of $\br_{\alpha}$, then 
    \begin{equation}\label{eq:bRprolongation}
        R_{\alpha, \beta} = r_{\alpha, \beta_1+1} \text{ with }
        \hat{\beta}=\hat{\alpha}.
    \end{equation}
    If $\bR_{\alpha}$ is an eigenvector of $\tbAdotM$ for the
    eigenvalue $\lambda$, then \eqref{eq:modified_ms} indicates
    $R_{\alpha,\beta}$ satisfying: for $|\beta|\leq M$,
    \begin{subequations}\label{eq:condition_R}
        \begin{align}
            &\rho R_{\alpha,e_1}=\lambda R_{\alpha,0},
            \label{eq:condition_R_mass}\\
            &\frac{1}{\rho}(1+\delta_{1i})R_{\alpha,e_1+e_i}=\lambda
            R_{\alpha,e_i},
            \label{eq:condition_R_momentum}\\
            &p_{ij}R_{\alpha,e_1}
            +p_{1i}R_{\alpha,e_j}+p_{1j}R_{\alpha,e_i}
            +(e_i+e_j+e_1)!R_{\alpha,e_i+e_j+e_1}
            =\lambda(1+\delta_{ij})R_{\alpha,e_i+e_j},\label{eq:condition_R_pressure}\\
            \begin{split}\label{eq:condition_R_f}
                & \sum_{k = 1}^D \theta_{1k}
                R_{\alpha,\beta- e_k} +
                (1-\delta_{|\beta|,M})(\beta_1+1)R_{\alpha,\beta+e_1}\\
                &+\sum_{i,j=1}^D\frac{\tilde{C}_{ij}(\beta)}{2\rho}\left(
                (1+\delta_{ij})R_{\alpha,e_i+e_j}-\theta_{ij}R_{\alpha,0} \right)
                +  \sum_{i = 1}^D
                (1-\delta_{|\beta|,M})(\beta_1+1)f_{\beta-e_i+e_1}R_{\alpha,e_i}\\ 
                &-  \sum_{i = 1}^D \frac{f_{\beta-e_i}}{\rho}
                (1+\delta_{1i})R_{\alpha,e_1+e_i} 
                -  \sum_{i,j=1}^D \frac{(e_i+e_j+e_1)!}{2}
                \frac{f_{\beta-e_i-e_j}}{\rho} R_{\alpha,e_i+e_j+e_1}
                = \lambda R_{\alpha,\beta},~|\beta|\geq3,
            \end{split}
        \end{align}
    \end{subequations}
    where $\tilde{C}_{ij}(\beta)$ is defined in
    \eqref{eq:def_tildeCijd}. We need to verify $\bR_{\alpha,\beta}$
    satisfying \eqref{eq:bRprolongation} and \eqref{eq:condition_R}
    only, which are checked case by case below.
    \begin{enumerate}
        \item $\lambda\neq0$; $D=2$, or $D\geq 3$, and $\lambda$
            satisfying $\He_{M+1}^{[\theta_{11}]}(\lambda)=0$.  If
            $D=2$, the characteristic polynomial of $\tbAdotM$ is
            $\prod_{m=1}^{M+1}\He_{m}^{[\theta_{11}]}(\lambda)$.
            Conjecture \ref{con:He} indicates that each nonzero
            eigenvalue of $\tbAdotM$ is a simple eigenvalue.  If
            $D\geq3$, and $\lambda$ satisfying $\lambda\neq0$ and
            $\He_{M+1}^{[\theta_{11}]}(\lambda)=0$, $\lambda$ is a
            simple eigenvalue of $\tbAdotM$. By Lemma
            \ref{lem:simpleeigenvalue}, there exists a
            proper prolongation of each eigenvector of $\tbAdotM$ 
            associated to these eigenvalues.
        \item $\lambda\neq0$, $D\geq3$ and $\lambda$ satisfying
            $\He_{M+1}^{[\theta_{11}]}(\lambda)\neq0$. 
            This case is corresponding to $|\hat{\alpha}|\geq1$. Let
            $\lambda$ be the $\alpha_1$-th eigenvalue
            $\thA_{\hat{\alpha}}$, then the corresponding eigenvector
            is $\br_{\alpha}$.

            Conjecture \ref{con:He} indicates that $\lambda$ is an
            eigenvalue of $\thA_{\hat{\beta}}$,
            $|\hat{\beta}|=|\hat{\alpha}|$, and is not for any
            $\thA_{\hat{\beta}}$, $|\hat{\beta}|\neq|\hat{\alpha}|$.
            Here we first prolongate the eigenvector $\br_{\alpha}$ to
            the diagonal block of $\tbAdotM$, containing all
            $\thA_{\hat{\beta}}$, $|\hat{\beta}|=|\hat{\alpha}|$ (for
            convenience, denote the diagonal block by
            $\boldsymbol{B}$), then use Lemma
            \ref{lem:multieigenvalue} to obtain a proper prolongation
            of $\br_{\alpha}$.

            Actually, observing \eqref{eq:modified_ms}, we find that
            the equation, including the term $\odd{w_{\alpha}}{t}$, does not
            depend on $w_{\beta}$, $|\beta|=|\alpha|$, which implies that
            $\boldsymbol{B}$ is a block diagonal matrix, and
            particularly, each diagonal block is
            $\thA_{\hat{\alpha}}$. Hence, let $R_{\alpha,
            \beta}=r_{\alpha, \beta_1+1}$ with
            $\hat{\beta}=\hat{\alpha}$, and $R_{\alpha, \beta}=0$ with
            $\hat{\beta}\neq\hat{\alpha}$ and
            $|\hat{\beta}|=|\hat{\alpha}|$, then
            $(R_{\alpha,\beta})|_{|\hat{\beta}|=|\hat{\alpha}|}$ is a
            prolongation of $\br_{\alpha}$ to the matrix
            $\boldsymbol{B}$.  Obviously,
            $(R_{\alpha,\beta})|_{|\hat{\beta}|=|\hat{\alpha}|}$ is a
            proper prolongation of $\br_{\alpha}$ to the matrix
            $\boldsymbol{B}$. With Lemma \ref{lem:multieigenvalue},
            the conclusion is validated.
        \item $\lambda=0$. Since Hermite polynomial $\He_n(x)$ is odd
            function if $n$ is odd, $\lambda=0$ is multi-eigenvalue of
            $\tbAdotM$. We have to check it in cases:
        \begin{enumerate}
            \item Case $|\hat{\alpha}|=0$. If $\lambda=0$ is an
                eigenvalue of $\thA_{0}$, then $M$ is even. Let
                \begin{align}\label{eq:proofCase1}
                    R_{\alpha,0}&=\rho,\quad R_{\alpha, \beta}=0,
                    \quad |\beta|=1,\nonumber\\
                    R_{\alpha,\beta}
                    &=-\frac{1}{\beta_1}\sum_{k=1}^D\theta_{1k}
                    \left( R_{\alpha,\beta-e_1-e_k}-G(\beta-e_1-e_k)
                    \right) + G(\beta),
                     \quad |\beta|>1,
                \end{align}
                where $G(\beta)=\sum_{i,j=1}^D\theta_{ij}f_{\beta-e_i-e_j}$,
                and $(\cdot)_{\beta}$ is taken as zero if any entry of
                $\beta$ is negative. Particularly, $R_{\alpha, \beta}=0$,
                if $|\beta|=1,2,3$, and 
                \begin{equation}\label{eq:case1odd}
                    R_{\alpha,\beta} = G(\beta), \text{ if } |\beta|
                    \text{ is odd.}
                \end{equation}

                Let $\beta=me_1$, $m=0,\cdots,M$, it can be derived that 
                \begin{align*}
                    &R_{\alpha,0}=\rho,~R_{\alpha,
                    e_1}=R_{\alpha,2e_1}=0,\\
                    &R_{\alpha,
                    me_1}-f_{(m-2)e_1}\theta_{11}=-\frac{1}{m}\theta_{11}
                    (R_{\alpha, (m-2)e_1}-f_{(m-4)e_1}\theta_{11}).
                \end{align*}
                Using the recurrence relation of
                $\He^{[\theta_{11}]}(\lambda)$ with $\lambda=0$, we
                find that $R_{\alpha, me_1}=r_{\alpha, m+1}$, where
                $r_{\alpha, m+1}$ is same as that defined in
                \eqref{eq:eigenvector1D} with $\lambda=0$.

                Then we verify that $R_{\alpha, \beta}$ satisfies
                \eqref{eq:condition_R}.  Notice $R_{\alpha,\beta}=0$,
                $|\beta|=1,2,3$, thus \eqref{eq:condition_R_mass},
                \eqref{eq:condition_R_momentum} and
                \eqref{eq:condition_R_pressure} holds, and
                \eqref{eq:condition_R_f} degenerates into, for
                $3\leq|\beta|\leq M$,
                \[
                    \sum_{k = 1}^D \theta_{1k}
                    R_{\alpha,\beta- e_k} +
                    (1-\delta_{|\beta|,M})(\beta_1+1)R_{\alpha,\beta+e_1}
                    -\sum_{i,j=1}^D\frac{\tilde{C}_{ij}(\beta)}{2\rho}\theta_{ij}R_{\alpha,0}=0.
                \]
                For $|\beta|<M$, since $R_{\alpha, \beta-e_k}=0$ holds
                for $|\beta|=3$, $k=1,\cdots,D$,  the equation above
                is exactly what \eqref{eq:proofCase1} tells. For $|\beta|=M$,
                since $M$ is even, $|\beta-e_k|$, $k=1,\cdots,D$, is
                odd, and this equation can be simply derived using
                \eqref{eq:case1odd}.
            \item Case $|\hat{\alpha}|=1$. 
                If $\lambda=0$ is an eigenvalue of $\thA_1$, then $M$
                is odd. Let $\hat{e}_d=\hat{\alpha}$ and
                \begin{align}\label{eq:proofCase2}
                    R_{\alpha,0}&=0, \quad R_{\alpha, e_d}=1,
                    \quad R_{\alpha, \beta}=0, \quad
                    |\beta|=1\text{ and } \beta\neq e_d,
                    \nonumber\\
                    R_{\alpha,\beta}
                    &=-\frac{1}{\beta_1}\sum_{k=1}^D\theta_{1k}
                    \left( R_{\alpha,\beta-e_1-e_k}-G(\beta-e_1-e_k)
                    \right) + G(\beta),
                     \quad |\beta|>1,
                \end{align}
                where $G(\beta)=f_{\beta-e_d}$, and $(\cdot)_{\beta}$
                is taken as zero if any entry of $\beta$ is negative.
                Particularly, $R_{\alpha, me_1}=0$, $m=0,\dots,M$, 
                and 
                \begin{equation}\label{eq:case2even}
                    R_{\alpha, \beta} = G(\beta),\text{ if }
                    |\beta| \text{ is even}.
                \end{equation}

                Let $\beta=e_d+me_1$, $m=0,\cdots,M-1$, it is
                derived that 
                \begin{align*}
                    R_{\alpha,e_d}=1,~R_{\alpha,
                    e_d+e_1}&=1, ~R_{\alpha,e_d+2e_1}=0,\\
                    R_{\alpha, e_d+me_1}-G(e_d+me_1) &=
                    -\frac{1}{m}\theta_{11}
                    (R_{\alpha, e_d+(m-2)e_1}-G(e_d+(m-2)e_1)).
                \end{align*}
                Using the recurrence relation of
                $\He^{[\theta_{11}]}(\lambda)$ with $\lambda=0$, one
                finds that $R_{\alpha, e_d+me_1}=r_{\alpha, m+1}$,
                where $r_{\alpha, m+1}$ is the same as that defined in
                Lemma \ref{lem:bAsecond} with $\lambda=0$.

                Then we verify that $R_{\alpha, \beta}$ satisfies
                \eqref{eq:condition_R}. It is clear that
                \eqref{eq:condition_R_mass},
                \eqref{eq:condition_R_momentum} and
                \eqref{eq:condition_R_pressure} holds. Meanwhile
                \eqref{eq:condition_R_f} degenerates into, for
                $3\leq|\beta|\leq M$,
                \begin{align*}
                    &\sum_{k = 1}^D \theta_{1k}
                    R_{\alpha,\beta- e_k} +
                    (1-\delta_{|\beta|,M})(\beta_1+1)R_{\alpha,\beta+e_1}\\
                    &+(1-\delta_{|\beta|,M})(\beta_1+1)f_{\beta+e_1-e_d}
                    -\sum_{i,j=1}^D\frac{(e_i+e_j+e_1)!}{2}\frac{f_{\beta-e_i-e_j}}{2}
                    R_{\alpha,e_1+e_i+e_j}=0.
                \end{align*}
                To verify this relation is rather tedious but not
                complex, and we have to examine several cases for
                $R_{e_1+e_i+e_j}$. Here we give the idea briefly. First,
                using \eqref{eq:proofCase2} to eliminate
                $R_{\alpha,e_1+e_i+e_j}$. For $|\beta|<M$, one get
                this equation is what \eqref{eq:proofCase2}
                tells. For $|\beta|=M$, since $M$ is odd ,
                $|\beta-e_k|$, $k=1,\cdots,D$, is even, and then this 
                equation is simply derived using \eqref{eq:case2even}.

                Furthermore, the construction of $\bR_{\alpha}$ shows
                the prolongation is proper. 
            \item Case $|\hat{\alpha}|=2$.
                If $\lambda=0$ is an eigenvalue of $\thA_2$, then $M$
                is even. Let $\hat{\gamma}=\hat{\alpha}$,
                $\gamma_1=0$, and
                \begin{align}\label{eq:proofCase3}
                    R_{\alpha,\beta}&=0, \quad |\beta|\leq2, \beta\neq
                    \gamma, \quad R_{\alpha, \gamma}=1, \nonumber\\
                    R_{\alpha,\beta}
                    &=-\frac{1}{\beta_1}\sum_{k=1}^D\theta_{1k}
                    \left( R_{\alpha,\beta-e_1-e_k}-G(\beta-e_1-e_k)
                    \right) + G(\beta),
                     \quad |\beta|>|\gamma|,
                \end{align}
                where $G(\beta)=\dfrac{f_{\beta-\gamma}}{\rho}$, and
                $(\cdot)_{\beta}$ is taken as zero if any entry of
                $\beta$ is negative. Particularly, $R_{\alpha,
                \beta}=0$, $|\hat{\beta}|<2$, and 
                \begin{equation}\label{eq:case3odd}
                    R_{\alpha, \beta} = G(\beta),\text{ if }
                    |\beta| \text{ is odd},
                \end{equation}
                $R_{\alpha,\beta}=0$, $|\beta|=3$.

                Let $\beta=\gamma+me_1$, $m=0,\cdots,M-2$, it can be
                derived that 
                \begin{align*}
                    R_{\alpha,\gamma}=1,~R_{\alpha,
                    \gamma+e_1}&=0, \\
                    R_{\alpha, \gamma+me_1}-G(\gamma+me_1) &=
                    -\frac{1}{m}\theta_{11}
                    (R_{\alpha, \gamma+(m-2)e_1}-G(\gamma+(m-2)e_1)).
                \end{align*}
                Using the recurrence relation of
                $\He^{[\theta_{11}]}(\lambda)$ with $\lambda=0$, we
                can check that $R_{\alpha, \gamma+me_1}=r_{\alpha,
                m+1}$, where $r_{\alpha, m+1}$ is the same as that defined
                in Lemma \ref{lem:bAthird} with $\lambda=0$.

                Then we verify that $R_{\alpha, \beta}$ satisfies
                \eqref{eq:condition_R}. Both the left hand sides and
                right hand sides of \eqref{eq:condition_R_mass},
                \eqref{eq:condition_R_momentum} and
                \eqref{eq:condition_R_pressure} are zero, so these
                equations hold.
                \eqref{eq:condition_R_f} degenerates into, for
                $3\leq|\beta|\leq M$,
                \begin{align*}
                    &\sum_{k = 1}^D \theta_{1k}
                    R_{\alpha,\beta- e_k} +
                    (1-\delta_{|\beta|,M})(\beta_1+1)R_{\alpha,\beta+e_1}\\
                    &\qquad +\frac{1}{\rho} \left(
                    \sum_{k=1}^D\theta_{1k}f_{\beta-\gamma-e_k}+
                    (1-\delta_{|\beta|,M})(1+\beta_1)f_{\beta+e_1-\gamma}
                    \right)=0.
                \end{align*}
                For $|\beta|<M$, the above equation is given by
                \eqref{eq:proofCase3}.  For $|\beta|=M$, since
                $M$ is even, $|\beta-e_k|$, $k=1,\cdots,D$, is odd, and then 
                this equation can be simply derived using
                \eqref{eq:case3odd}.

                Furthermore, the construction of $\bR_{\alpha}$ shows
                the prolongation is proper. 
            \item Case $n=|\hat{\alpha}|\geq3$.
                If $\lambda=0$ is an eigenvalue of
                $\thA_{\hat{\alpha}}$, then $M+1-n$ is odd. Let
                $\hat{\gamma}=\hat{\alpha}$,
                $\gamma_1=0$, and
                \begin{align}\label{eq:proofCase4}
                    R_{\alpha,\beta}&=0, \quad |\beta|\leq n, \beta\neq
                    \gamma, \quad R_{\alpha, \gamma}=1, \nonumber\\
                    R_{\alpha,\beta}
                    &=-\frac{1}{\beta_1}\sum_{k=1}^D\theta_{1k}
                     R_{\alpha,\beta-e_1-e_k}, 
                     \quad |\beta|>|\gamma|,
                \end{align}
                where $(\cdot)_{\beta}$ is taken as zero if any entry
                of $\beta$ is negative. Particularly, $R_{\alpha,
                \beta}=0$, $|\hat{\beta}|<n$, and 
                \begin{equation}\label{eq:case4others}
                    R_{\alpha, \beta} = 0,\text{ if }
                    M+1-|\beta| \text{ is even}.
                \end{equation}

                Let $\beta=\gamma+me_1$, $m=0,\cdots,M+1-n$, it can be
                derived that 
                \begin{align*}
                    R_{\alpha,\gamma}=1,~R_{\alpha,
                    \gamma+e_1}=0, \quad
                    R_{\alpha, \gamma+me_1} = -\frac{1}{m}\theta_{11}
                    R_{\alpha, \gamma+(m-2)e_1}.
                \end{align*}
                Using the recurrence relation of
                $\He^{[\theta_{11}]}(\lambda)$ with $\lambda=0$, we
                can check that $R_{\alpha, \gamma+me_1}=r_{\alpha,
                m+1}$, where $r_{\alpha, m+1}$ is same as that defined
                in Lemma \ref{lem:bAelse} with $\lambda=0$.

                Next we verify that $R_{\alpha, \beta}$ satisfies
                \eqref{eq:condition_R}. Both the left hand sides and
                right hand sides of \eqref{eq:condition_R_mass},
                \eqref{eq:condition_R_momentum} and
                \eqref{eq:condition_R_pressure} are zero, so these
                equations hold.
                \eqref{eq:condition_R_f} degenerates into, for
                $3\leq|\beta|\leq M$,
                \begin{align*}
                    &\sum_{k = 1}^D \theta_{1k}
                    R_{\alpha,\beta- e_k} +
                    (1-\delta_{|\beta|,M})(\beta_1+1)R_{\alpha,\beta+e_1}=0.
                \end{align*} For $|\beta|<n$, both the left hand side
                and the right hand side are zero, so the
                equation holds. For $n\leq |\beta|<M$, this
                equation is exactly what \eqref{eq:proofCase3} tells. For
                $|\beta|=M$, since $M+1-n$ is even, $M+1-|\beta-e_k|$,
                $k=1,\cdots,D$ is odd, then the above equation can be
                simply derived using \eqref{eq:case4others}.

                Furthermore, the construction of $\bR_{\alpha}$ shows
                the prolongation is proper. 
        \end{enumerate}
        All the cases discussion above tells us that each eigenvector
        of a diagonal block for the eigenvalue $\lambda=0$ can be
        prolongate to an eigenvector of $\tbAdotM$, and the
        prolongation is proper.
    \end{enumerate}
    Collecting all the case above, we conclude that each eigenvector
    of each diagonal block of $\tbAdotM$ can be prolongate to an
    eigenvector of $\tbAdotM$, which proves the Lemma.
\end{proof}


\bibliographystyle{plain}
\bibliography{article}
\end{document}